\newtheorem{thm}{Theorem}[section]
\newtheorem{cor}[thm]{Corollary}
\newtheorem{lem}[thm]{Lemma}
\newtheorem{prop}[thm]{Proposition}
\newtheorem{conj}[thm]{Conjecture}
\theoremstyle{remark}
\newtheorem{rem}[thm]{Remark}
\theoremstyle{definition}
\newtheorem{defn}[thm]{Definition}
\theoremstyle{plain}
\newcommand{\norm}[1]{\left\Vert#1\right\Vert}
\newcommand{\abs}[1]{\left\vert#1\right\vert}
\newcommand{\set}[1]{\left\{#1\right\}}
\newcommand{\R}{\mathbb{R}}
\newcommand{\N}{\mathbb{N}}
\newcommand{\Z}{\mathbb{Z}}
\newcommand{\vt}[1]{\mathbf{#1}}
\newcommand{\diff}{\mathrm{Diff}}
\newcommand{\emb}{\mathrm{Emb}}
\newcommand{\Int}{\mathrm{Int}}
\newcommand{\dist}{\mathbf{dist}}
\newcommand{\proj}[2]{\mathbf{Proj}_{#1}\left(#2\right)}
\renewcommand{\dim}{\mathrm{dim}}
\newcommand{\loc}{\mathrm{loc}}
\newcommand{\Tr}{\mathrm{Tr}}
\newcommand{\hdim}{\mathbf{dim}_\mathrm{H}}
\newcommand{\lhdim}{\mathbf{dim}_\mathrm{H}^\mathrm{loc}}
\newcommand{\hdist}{\mathbf{dist}_\mathrm{H}}
\newcommand{\diam}{\mathbf{diam}}
\newcommand{\cites}{\cite}
\begin{document}

\title[Spectrum of Quantum Ising Quasicrystal]{On the spectrum of 1D quantum Ising quasicrystal}

\author[W. N. Yessen]{William N. Yessen}
\email{wyessen@math.uci.edu}
\address{Department of Mathematics, UC Irvine, Irvine, CA 92697}
\thanks{The author was supported by the NSF grants DMS-0901627, PI: A. Gorodetski and IIS-1018433, PI: M. Welling and Co-PI: A. Gorodetski}

\subjclass[2010]{Primary: 82B20, 82B44, 82D30. Secondary: 82D40, 82B10, 82B26, 82B27.}

\keywords{lattice systems, disordered systems, quasi-periodicity, quasicrystals, disordered materials, quantum Ising model, Fibonacci Hamiltonian, trace map.}
\date{\today}

\begin{abstract}

We consider one dimensional quantum Ising spin-$1/2$ chains with two-valued nearest neighbor couplings arranged in a quasi-periodic sequence, with uniform, transverse magnetic field. By employing the Jordan-Wigner transformation of the spin operators to spinless fermions, the energy spectrum can be computed exactly on a finite lattice. By employing the transfer matrix technique and investigating the dynamics of the corresponding trace map, we show that in the thermodynamic limit the energy spectrum is a Cantor set of zero Lebesgue measure. Moreover, we show that local Hausdorff dimension is continuous and non-constant over the spectrum. This forms a rigorous counterpart of numerous numerical studies.

\end{abstract}

\maketitle

\section{Introduction}

Since the discovery of quasicrystals \cites{Levine1984,Levine1986, Shechtman1984, Socolar1986}, quasi-periodic models in mathematical physics have formed an active area of research. In all of these models, quasi-periodicity is introduced via so-called quasi-periodic sequences, which, roughly speaking, are somewhat intermediate between periodic and random (for a textbook exposition, see \cite{Queffelec2010, Fogg2002}), and are meant to model microscopic quasi-periodic structure of quasicrystals (not all such sequences necessarily correspond to a microscopic organization of an actual physical material, but the Fibonacci substitution sequence, which we consider here, does correspond to an actual quasicrystal made up of two microscopic constituents whose arrangement resembles the Fibonacci substitution sequence). One of the main tools in the investigation of such models has been renormalization analysis, leading to renormalization maps whose properties (or, better to say, their action on appropriate spaces, 
typically $\R^n$) yield strong implications for the underlying models, which are usually very difficult or impossible to obtain by other means. These renormalization maps have been called \textit{trace maps} (roughly) since they were originally introduced in \cites{Kadanoff0000, Kohmoto1983, Ostlund1983} (see also \cites{Horowitz1972, Southcott1979, Traina1980, Jorgensen1982, Kohmoto1992, Baake1999, Roberts1994b, Roberts1994} and references therein). The terminology is meant to emphasize their intimate connection to the transfer matrix formalism, perhaps better known to statistical physicists, which is a renormalization procedure that allows one to study some relevant properties of statistical-mechanical models via traces of appropriate \textit{transfer operators}. (One has to note, however, that in the past few decades these techniques have been greatly generalized by mathematicians, leading to exciting results in a few fields; among the wider known ones would be spectral theory [in particular of ergodic 
Schr\"odinger operators] and dynamical systems [in the spirit of works by R. Bowen, D. Ruelle and Ya. Sinai on thermodynamic formalism]). The method of trace maps has led, for example, to fundamental results in spectral theory of discrete Schr\"odinger operators and Ising models on one-dimensional quasi-periodic lattices (for Schr\"odinger operators: \cites{Kohmoto1983, Casdagli1986, Suto1987, Damanik2009, Damanik2010, Damanik2000, Damanik2008, Damanik2005, Bellissard1989, Raymond1997, Simone2009}, for Ising models: \cites{Tsunetsugu1987, Benza1989, Ceccatto1989, Hermisson1997, Doria1988, Benza1990, You1990, Tong1997}, and 
references therein).

Quasi-periodicity and the associated trace map formalism is still an area of active investigation, mostly in connection with their applications in physics. In this paper we use these techniques to investigate the energy spectrum of one-dimensional quantum Ising spin chains with two-valued nearest neighbor couplings arranged in a quasi-periodic sequence, with uniform, transverse magnetic field. We shall concentrate on the quasi-periodic sequence generated by the Fibonacci substitution on two symbols, which (probably due to the original choice of models in early 1980's) is the most widely studied case (and a representative example of many observed phenomena exhibited by quasi-periodic models in general).

One-dimensional quasi-periodic quantum Ising spin chains have been investigated (analytically and numerically) over the past two decades \cites{Benza1989, Ceccatto1989, Hermisson1997, Baake1999,Doria1988, Benza1990, You1990, Igloi1988, Turban1994,Igloi1997, Igloi1998, Igloi2007}. Numerical and some analytic results suggest Cantor structure of the energy spectrum, with nonuniform local scaling (i.e. a multifractal). The multifractal structure of the energy spectrum has not been shown rigorously.

Our aim here is to prove the multifractality of the energy spectrum and investigate its fractal dimensions. We achieve this by carrying out a renormalization procedure, using in a crucial way quasi-periodicity of (i.e. the repetitive, albeit not periodic, nature of) the sequence of interaction couplings (which are chosen according to the aforementioned Fibonacci substitution). The renormalization map turns out to be a second degree polynomial acting on $\R^3$ -- the so-called \textit{trace map}. We then investigate the discrete time dynamics of this polynomial map, relating the energy spectrum to invariant sets for the renormalization map. Techniques from hyperbolic and partially hyperbolic dynamics are then employed to investigate topological and fractal-dimensional properties of these invariant sets, concluding with our main result about the energy spectrum which, roughly speaking, states: \textit{The spectrum is a zero measure Cantor set with non-constant local scaling (i.e. a multifractal); the spectrum 
and its fractal dimensions depend continuously on the parameters of the model}. We should note that these observations are not entirely in line with what has been observed in spectral analysis of quasi-periodic Schr\"odinger operators, since in the latter, the spectrum, while a Cantor set, is not a multifractal (i.e. local scaling is uniform throughout the spectrum). We comment on this in more detail in Section \ref{sec:cc}. 

Certainly the technique described above is not new and has been employed numerous times (see the references above), most notably in the context of quasi-periodic Schr\"odinger operators in one dimension; however, to the best of the author's knowledge, this is the first rigorous treatment of quantum one-dimensional discrete quasi-periodic Ising models that provides rigorous results regarding the multifractal structure of the spectrum. Even though, as is mentioned later in the course of this paper, the Ising model that we are concerned with here can be mapped canonically to a Jacobi operator (which resembles in many ways the Schr\"odinger operator), the method of trace maps, while still applicable, presents certain technical difficulties that are not present in the context of Schr\"odinger operators. We discuss in more detail these difficulties in Section \ref{sec:cc}; let us only mention here that the techniques that we developed in this paper seem to be applicable in a wide range of models, and can be 
substantially generalized and presented in a model-independent fashion. The present paper, however, is already rather involved and technical (which is to be expected, as one has to perform a number of transformations from the original spectral-theoretic problem to a problem of geometry and dynamical systems); for this reason it had been decided to postpone generalizations (at the time when this paper was written). First steps towards developing a general toolbox were later taken in \cite[Section 2]{Damanik0000my1} jointly with D. Damanik and P. Munger. In that same paper, \cite{Damanik0000my1}, we investigated spectral properties of a class of five-diagonal unitary operators, commonly called the \textit{CMV matrices}, that play a central role in the theory of orthogonal polynomials on the unit circle (in fact, CMV matrices are to orthogonal polynomials on the unit circle as the tridiagonal Jacobi operators are to orthogonal polynomials on the real line \cite{Simon2005a, Simon2005b}). Not surprisingly, the 
techniques that we develop here (and 
generalize in \cite{Damanik0000my1}), among other techniques, were successfully applied to spectral theory of quasi-periodic CMV matrices in \cite{Damanik0000my1}. In the same fashion, these techniques have been successfully applied to quasi-periodic Jacobi matrices in \cite{Yessen2011a}. Applications of our results from \cite{Damanik0000my1} will be appearing in our forthcoming paper \cite{Damanik0000my2}; among the applications are the quantum walks on one-dimensional lattices with quasi-periodic (Fibonacci in particular) coin flips, and a canonical transformation between classical one-dimensional Ising models with external magnetic field and CMV matrices. 

Not to get sidetracked too far, let us conclude this introduction by noting that we study and apply the trace map as a \textit{real analytic map}; study of the complexified version (though within a different context) was carried out by S. Cantat in \cite{Cantat2009}. We have applied the trace map as \textit{a holomorphic map on $\mathbb{C}^3$} in \cite{Yessen2012a} as a renormalization map for the classical one-dimensional Ising model with quasi-periodic nearest neighbor interaction and magnetic field, and were able to relate the analyticity of the free energy function to the analyticity of the escape rate of orbits under the action of the trace map, proving absence of phase transitions. Also in \cite{Yessen2012a} we applied the techniques from the present paper to obtain precise description of the Lee-Yang zeros of the classical model in the thermodynamic limit. (The results in \cite{Yessen2012a} were obtained some months after the present paper was written). At this point we would like to mention the work 
of P. Bleher et. al. \cite{Bleher1992, Bleher200Xa, Bleher200Xb} on Ising models on certain two-dimensional lattices (though not quasi-periodic and not involving trace maps), where the action of the renormalization map was treated as a holomorphic dynamical system. 

Let us also briefly mention that, apart from their applications, trace maps present an interest to the dynamical systems community as a family of polynomial maps exhibiting quite rich dynamics (\cite{Roberts1994, Roberts1994b, Roberts1996, Cantat2009, Humphries2007} and references therein, and our forthcoming work \cite{Gorodetski200X}). In addition to the references from above, for a broad overview of the recent developments and open problems, the interested reader may consult the surveys \cite{Damanik200X} (with emphasis on the Schr\"odinger operator) and the forthcoming \cite{Yessen200X} (with emphasis on the dynamics of trace maps and applications to a class of models, including quantum and classical Ising models).

\section{The 1D quasi-periodic quantum Ising chain}\label{part1}

For a general overview of quasi-periodic (including Fibonacci) Ising models, see, for example, \cite{Grimm1995}.

\subsection{The model}\label{part1_1}

Let $J_0, J_1 > 0$. Construct a $\set{J_0,J_1}$-valued sequence $\set{\widetilde{J}_n}_{n\in\N}$ by applying repeatedly the Fibonacci substitution rule on two letters:
\begin{align}\label{eq:sub}
J_0\mapsto J_0J_1 \text{\hspace{5mm}and\hspace{5mm}} J_1\mapsto J_0,
\end{align}
starting with $J_0$:
\begin{align*}
J_0\mapsto J_0J_1\mapsto J_0J_1J_0\mapsto J_0J_1J_0J_0J_1\mapsto\cdots,
\end{align*}
at each step substituting for $J_0$ and $J_1$ according to the substitution rule \eqref{eq:sub}. By this procedure an infinite sequence is constructed, which we call $\set{\widetilde{J}_n}_{n\in\N}$ (see \cite{Queffelec2010} for more details on substitution sequences).

Let $\widehat{J}_k$ be the finite word after $k$ applications of the substitution rule. It is easy to see that the following recurrence relation holds:
\begin{align}\label{model_eq0}
\widehat{J}_{k+1} = \widehat{J}_{k}\widehat{J}_{k-1}.
\end{align}
The word $\widehat{J}_k$ has length $F_k$, where $F_k$ is the $k$th Fibonacci number. The quasi-periodic (Fibonacci in our case) one-dimensional quantum Ising model on the finite one-dimensional lattice of $N$ nodes with transversal external field is given by the 
\textit{Ising Hamiltonian}
\begin{align*}
\mathcal{H} = -\sum_{n=1}^N \widetilde{J}_n\sigma_{n}^{(x)}\sigma_{n+1}^{(x)} - h\sum_{n=1}^N \sigma_n^{(z)},
\end{align*}
where $h > 0$ is the external magnetic field in the direction transversal to the lattice. The matrices $\sigma_j^{(x),(z)}$ are spin-$1/2$ operators in the $x$ and $z$ directions, respectively, given by
\begin{align*}
\sigma_n^{(x),(z)} = \underbrace{\mathbb{I}\oplus\cdots\oplus\mathbb{I}}_{n-1\text{ times }}
\oplus\sigma^{(x),(z)}
\oplus\underbrace{\mathbb{I}\oplus\cdots\oplus\mathbb{I}}_{N-n+1\text{ times}},
\end{align*}
where $\mathbb{I}$ is the $2\times 2$ identity matrix. Here $\sigma^{(x),(z)}$ are the Pauli matrices given by
\begin{align}\label{eq:pauli-mat}
\sigma^{(x)} = 
\begin{pmatrix}
0 & 1\\
1 & 0
\end{pmatrix}
\text{\hspace{5mm}and\hspace{5mm}}
\sigma^{(z)} = 
\begin{pmatrix}
1 & 0\\
0 & -1
\end{pmatrix}.
\end{align}
The magnetic field $h$ can be absorbed into interaction strength couplings $J_0, J_1$, so $\mathcal{H}$ can be rewritten as
\begin{align}\label{model_eq1}
\mathcal{H}= -\sum_{n=1}^N \left(\widetilde{J}_n\sigma_n^{(x)}\sigma_{n+1}^{(x)} + \sigma_n^{(z)}\right).
\end{align}
The Hamiltonian $\mathcal{H}$ in \eqref{model_eq1} acts on $\mathbb{C}^{2N}$, where we assume periodic boundary conditions: $\sigma_{N+1}^{(x),(z)} = \sigma_0^{(x),(z)}$. Here $\sigma_j^{(\alpha)}$, $\alpha\in\set{x,z}$, acts on the finite sequence $(c_1,\dots,c_N)\in\mathbb{C}^{2N}$ by acting by $\sigma^{(\alpha)}$  from \eqref{eq:pauli-mat} on the $j$-th entry, while leaving the other entries unchanged.

\subsection{Fermionic representation}\label{part1_2}

The spin model from the previous section can in fact be attacked as a so-called free-fermion model by performing the so-called Jordan-Wigner transformation, that transforms the Pauli operators $\sigma^{(x), (y)}$ into anti-commuting Fermi creation and annihilation operators. This technique dates at least as far back as the classical paper by P. Jordan and E. Wigner on second quantization \cite{Jordan1928}. The advantage in performing this transformation, is that the resulting Hamiltonian, in terms of the Fermi operators, can be diagonalized due to the anticommuting property of Fermi operators (commonly known in the physics community as the canonical commutation relations, or CCR for short). Furthermore, the Jordan-Wigner transformation is canonical (to conform to standard physics terminology) in the sense that it is invertible (in particular, no information is introduced and no information is lost by performing this transformation).

For convenience, let us denote by $\mathcal{H}_k$ the Hamiltonian in \eqref{model_eq1} on a lattice of size $F_k$. We can then extend $\mathcal{H}_k$ periodically to a Hamiltonian $\widehat{\mathcal{H}}_k$ over the periodic infinite lattice with the unit cell of length $F_k$. The operator $\mathcal{H}_k$, and hence also $\widehat{\mathcal{H}}_k$, can be cast into fermionic representation by means of the Jordan-Wigner transformation:
\begin{align}\label{model_eq2}
\mathcal{H}_k = \sum_{i,j}\left[c_i^\dagger A_{ij}c_j + \frac{1}{2}\left(c_i^\dagger B_{ij}c_j^\dagger + (c_i^\dagger B_{ij}c_j^\dagger)^\dagger\right)\right],
\end{align}
where $c_i$, $1\leq i \leq F_k$, are anticommuting fermionic operators and $a^\dagger$ denotes Hermitian conjugation of $a$. The terms $A_{ij}$ and $B_{ij}$ that appear in \eqref{model_eq2} are entries of the matrices $A, B$ given by
\begin{align*}
A_{ii}& = -2,& A_{i,i+1}= A_{i+1,i}& = -\widetilde{J}_i,& A_{1, F_k} &= -\widetilde{J}_{F_k};\\
B_{i,i+1}& = -\widetilde{J}_i,& B_{i+1,i}& = \widetilde{J}_i,& B_{1, F_k}& = \widetilde{J}_{F_k};&
\end{align*}
all other entries being zero. This method is due to E. Lieb et. al. \cite{Lieb1961}, and its specialization to the Hamiltonian \eqref{model_eq1} is presented in some detail in \cite{Doria1988}. We do not go into any further details here and invite the reader to consult the mentioned works. 

Now, after we have performed the Jordan-Wigner transformation, the energy spectrum of the Hamiltonian in \eqref{model_eq2} can be computed by solving the so-called $c$-numeric equation for $\lambda$:
\begin{align}\label{eq:matrices}
(A - B)\phi& = \lambda\psi,&\\
(A+B)\psi& = \lambda\phi&\notag
\end{align}
(see \cite[Section B]{Lieb1961}). This equation can be written in the form \cites{Benza1989, Ceccatto1989, You1990}
\begin{align*}
\Phi_{i + 1} = M_i(\lambda)\Phi_i,
\end{align*}
where $\Phi_i = (\psi_i, \phi_i)^T$, and
\begin{align*}
M_i(\lambda) = \begin{pmatrix}
-{1}/{\widetilde{J}_i} & {\lambda}/{2\widetilde{J}_i}\\
-{\lambda}/{2\widetilde{J}_i} & {(\lambda^2 - 4\widetilde{J}_i^2)}/{4\widetilde{J}_i}
\end{pmatrix}.
\end{align*}
Thus the wave function $\Phi$ at site $N$ is given by
\begin{align*}
\Phi_{N+1} = M_{N}(\lambda)\times M_{N-1}(\lambda)\times\cdots\times M_0(\lambda)\Phi_0.
\end{align*}
Letting $\widehat{M}_k(\lambda)$ denote the transfer matrix over $F_k$ sites, using the recurrence relation in \eqref{model_eq0}, we obtain
\begin{align}\label{model_eq3}
\widehat{M}_{k+1}(\lambda) = \widehat{M}_{k}(\lambda)\times\widehat{M}_{k-1}(\lambda),
\end{align}
for $k\geq 2$.

Returning to the Hamiltonian $\widehat{\mathcal{H}}_k$, we see that the wave function at site $nF_k$ is given by
\begin{align*}
\Phi_{nF_k} = \widehat{M}_k(\lambda)^n \Phi_0.
\end{align*}
The wave function over the infinite lattice should not diverge exponentially. Hence we allow only those values of $\lambda$ for which the eigenvalues of $\widehat{M}_k(\lambda)$ lie in $[-1, 1]$. Since $\widehat{M}_k(\lambda)$ is unimodular, this is equivalent to the requirement
\begin{align}\label{eq:trace-bound}
\frac{1}{2}\abs{\Tr\widehat{M}_k(\lambda)}\leq 1.
\end{align}
Let
\begin{align*}
x_k(\lambda) = \frac{1}{2}\Tr\widehat{M}_k(\lambda).
\end{align*}
Using the recursion relation \eqref{model_eq3}, one may derive the recursion relation on the traces given by (see \cites{Kadanoff0000, Kohmoto1983, Ostlund1983} and, for a more general discussion, \cite{Roberts1994b})
\begin{align*}
x_{k+1} = 2x_kx_{k-1} - x_{k-2}.
\end{align*}
Thus, in accordance with \eqref{eq:trace-bound}, we require that $\abs{x_k}\leq 1$. Define the so-called \textit{Fibonacci trace map} $f: \R^3\rightarrow\R^3$ by
\begin{align}\label{model_eq4}
f(x,y,z) = (2xy - z, x, y).
\end{align}
Set
\begin{align*}
M_{-1}(\lambda)& = \begin{pmatrix}
{J_0}/{J_1} & \lambda(J_1^2 - J_0^2)/2J_0J_1\\
0 & {J_1}/{J_0}
\end{pmatrix},\\
M_0(\lambda)& = \begin{pmatrix}
-{1}/{J_0} & {\lambda}/{2J_0}\\
-{\lambda}/{2J_0} & {(\lambda^2 - 4J_0^2)}/{4J_0}
\end{pmatrix},
\text{\hspace{5mm}}\\
M_1(\lambda)& = \begin{pmatrix}
-{1}/{J_1} & {\lambda}/{2J_1}\\
-{\lambda}/{2J_1} & {(\lambda^2 - 4J_1^2)}/{4J_1}
\end{pmatrix} = M_{0}(\lambda)\times M_{-1}(\lambda).
\end{align*}
Then
\begin{align*}
x_{-1}(\lambda) = \left(\frac{J_0}{J_1} + \frac{J_1}{J_0}\right)/2,
\text{\hspace{5mm}}
x_0(\lambda) = \frac{\lambda^2 - (4 + 4J_0^2)}{8J_0},
\text{\hspace{5mm}}
x_1(\lambda) = \frac{\lambda^2 - (4 + 4J_1^2)}{8J_1}.
\end{align*}
It is convenient to absorb the factor $1/4$ into $\lambda^2$ and rewrite
\begin{align*}
x_0(\lambda) = \frac{\lambda^2 - (1 + J_0^2)}{2J_0},
\text{\hspace{5mm}}
x_1(\lambda) = \frac{\lambda^2 - (1 + J_1^2)}{2J_1}.
\end{align*}
Define the line of initial conditions $\gamma_{(J_0,J_1)}:(-\infty,\infty)\rightarrow\R^3$ by
\begin{align}\label{model_eq5}
\gamma_{(J_0, J_1)}(\lambda) = \left(\frac{\lambda^2 - (1 + J_1^2)}{2J_1}, \frac{\lambda^2 - (1 + J_0^2)}{2J_0}, \left(\frac{J_0}{J_1} + \frac{J_1}{J_0}\right)/2\right).
\end{align}
Let $\pi$ denote projection onto the third coordinate, and define
\begin{align}\label{model_eq6}
\sigma_k(J_0,J_1) = \set{\lambda: \abs{\pi\circ f^{k}\circ \gamma_{(J_0,J_1)}(\lambda)}\leq 1},
\end{align}
where $f^k$ denotes $k$-fold composition
\begin{align*}
f^k = \underbrace{f\circ f\circ\cdots\circ f}_{k \text{ times }},\hspace{4mm}k\geq 0.
\end{align*}
For the sake of simplifying notation, we shall write simply $\sigma_k$, keeping in mind its implicit dependence on the choice of $J_0$ and $J_1$. \textit{With this setup, the set $\sigma_k$ is the excitation spectrum of the periodic free-fermion model or, equivalently (via the inverse Jordan-Wigner transformation), the energy spectrum of the periodic spin model}. We are interested in understanding the spectrum in the thermodynamic limit, that is, $k\rightarrow\infty$.

\subsection{The problem and main results}\label{part1_3}

We wish to investigate the energy spectrum of $\widehat{\mathcal{H}}_k$ in the thermodynamic limit (that is, $k\rightarrow\infty$). Since $\pi\circ f^{k}\circ\gamma_{(J_0,J_1)}(\lambda)$ is a polynomial in $\lambda$, $\sigma_k$ is a union of finitely many compact intervals (in general, see the exposition on Floquet theory  applicable to periodic Hamiltonians, in, for example, \cite[Chapter 4]{Toda1981}). Supported by numerical evidence, it is believed that as $k\rightarrow\infty$, the sequence $\set{\sigma_k}_{k\in\N}$ shrinks to a Cantor set \cites{Benza1990, Ceccatto1989, Doria1988, You1990} (i.e. a nonempty, compact, totally disconnected set with no isolated points). In Theorem \ref{thm:main} below we make precise the notion of the energy spectrum in the thermodynamic limit, and we rigorously examine its multifractal nature. Before we continue, however, we need to set up some notation.

We denote the Hausdorff metric on $\mathcal{P}(\R)$ by $\hdist$:
\begin{align*}
\hdist(A, B) = \max\set{\adjustlimits\sup_{a\in A}\inf_{b\in B}\set{|a -b|}, \adjustlimits\sup_{b\in B}\inf_{a\in A}\set{|a - b|}}.
\end{align*}
We denote the Hausdorff dimension of a set $A$ by $\hdim(A)$, and the local Hausdorff dimension of $A$ at a point $a\in A$ by $\lhdim(A, a)$:
\begin{align*}
\lhdim(A, a) = \lim_{\epsilon\rightarrow 0^+}\hdim\left((a-\epsilon, a+\epsilon)\cap A\right).
\end{align*}

We are now ready to state the main result of this paper. 

\begin{thm}\label{thm:main}
 
Fix $J_1 > 0$. There exists $r_0 = r_0(J_1)\in (0,1)$ such that for all $J_0$ satisfying $J_1/J_0\in(1 - r_0, 1+r_0)$, $J_0\neq J_1$, the following statements hold.

\begin{enumerate}[i.]
\item There exists a compact nonempty set $B_\infty(J_0,J_1)\subset \R$ such that $\sigma_k\xrightarrow[k\rightarrow\infty]{}B_\infty$ in the Hausdorff metric;
\item $B_\infty(J_0,J_1)$ is a Cantor set;
\item $\lhdim(B_\infty(J_0, J_1), b)$ depends continuously on $b\in B_\infty(J_0, J_1)$, is non-constant and is strictly between zero and one; consequently $\hdim(B_\infty(J_0, J_1))\in (0, 1)$, and therefore the Lebesgue measure of $B_\infty(J_0, J_1)$ is zero;
\item $\hdim(B_\infty(J_0,J_1))$ is continuous in the parameters $(J_0,J_1)$.
\end{enumerate}
\end{thm}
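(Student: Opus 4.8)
The plan is to realize the spectrum $B_\infty(J_0,J_1)$ as a union of pieces governed by the dynamics of the trace map $f$ near its non-wandering set, and then transfer standard results on continuity of Hausdorff dimension for hyperbolic and partially hyperbolic invariant sets. Concretely, I would first establish (as must be done to prove parts i--iii anyway) that for the parameter range $J_1/J_0\in(1-r_0,1+r_0)$, $J_0\neq J_1$, the map $f$ restricted to an appropriate neighborhood of the invariant surface $S_I=\{(x,y,z): x^2+y^2+z^2-2xyz-1=I\}$ containing the image of $\gamma_{(J_0,J_1)}$ is (uniformly, or normally) hyperbolic on its non-wandering set $\Omega_{(J_0,J_1)}$, and that $B_\infty$ is obtained by intersecting the stable lamination of $\Omega$ with the line of initial conditions $\gamma_{(J_0,J_1)}$ — i.e. $B_\infty$ equals the set of $\lambda$ whose forward orbit under $f\circ\gamma$ stays bounded. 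The key point is that $\hdim(B_\infty)$ can then be expressed via the dimension of $\Omega_{(J_0,J_1)}$ along the unstable direction (a thermodynamic/Bowen-type formula), because $\gamma_{(J_0,J_1)}$ is transverse to the stable lamination and has nonvanishing derivative off a finite set, so intersecting with $\gamma$ neither creates nor destroys dimension.

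The second ingredient is the dependence on parameters. As $(J_0,J_1)$ varies, the initial energy $I=I(J_0,J_1)$ on which the dynamics lives varies real-analytically, and the invariant set $\Omega_{(J_0,J_1)}$ moves continuously — indeed, by structural stability of (partially) hyperbolic sets, there is a family of conjugacies $h_{(J_0,J_1)}:\Omega_{(J_0',J_1')}\to\Omega_{(J_0,J_1)}$ depending continuously on parameters, and the line $\gamma_{(J_0,J_1)}$ itself varies continuously (in fact analytically) in the $C^1$ topology. I would then invoke the continuity of the Hausdorff dimension of hyperbolic basic sets of surface diffeomorphisms under $C^1$ (or $C^{1+\alpha}$) perturbation — this is classical, going back to work of Manning--McCluskey and subsequent refinements — applied to the return map on a suitable transversal section of $S_I$. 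Since $B_\infty$'s dimension coincides with that of the unstable slice of $\Omega$, continuity of the latter in $(J_0,J_1)$ yields continuity of $\hdim(B_\infty(J_0,J_1))$. For the partially hyperbolic (center direction) part of the non-wandering set, which contributes the "boundary" pieces near the endpoints of the spectral bands, I would check separately that its contribution to the dimension is continuous — typically these pieces are themselves either lower-dimensional or again hyperbolic after one further renormalization step.

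The main obstacle I expect is the interface between the hyperbolic bulk and the non-hyperbolic part of the dynamics: the trace map $f$ has a surface $S_0$ (the Cayley cubic, $I=0$) with singularities and the line $\gamma_{(J_0,J_1)}$ may pass near tangencies between $\gamma$ and the stable lamination, or near the periodic points of $f$ on the coordinate axes where hyperbolicity degenerates. Controlling uniformly — as $(J_0,J_1)$ ranges over the whole admissible set — the transversality of $\gamma_{(J_0,J_1)}$ to the stable manifolds, and ruling out parameter-dependent tangencies that could cause the dimension to jump, is delicate; this is precisely why the theorem is stated only for $J_1/J_0$ in a small interval $(1-r_0,1+r_0)$ around $1$, where the energy $I(J_0,J_1)$ stays small and the dynamics stays in a robustly hyperbolic regime with uniform estimates. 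A secondary technical point is that $f$ is a diffeomorphism of $\mathbb{R}^3$, not of a surface, so one must first reduce to the dynamics on the (possibly singular) invariant surface $S_{I(J_0,J_1)}$ and verify that the normal-hyperbolicity survives the parameter variation, so that the genuinely relevant object is a one-parameter family of surface maps to which the Manning--McCluskey-type continuity results apply; establishing this reduction uniformly is the crux of the argument.
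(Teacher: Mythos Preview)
Your proposal contains a fundamental geometric misconception that would derail the argument for parts (iii) and (iv), and would actually contradict the statement of (iii). You repeatedly speak of ``the invariant surface $S_{I(J_0,J_1)}$'' and of ``reducing to a one-parameter family of surface maps,'' as if for each fixed $(J_0,J_1)$ the curve $\gamma_{(J_0,J_1)}$ lies on a single level set of the Fricke--Vogt invariant. It does not. A direct computation (equation \eqref{invariant_at_gamma} in the paper) gives
\[
I(\gamma_r(\lambda)) = \frac{\lambda}{4}\left(\frac{1}{r}-r\right)^2,
\]
which depends on the spectral parameter $\lambda$. Thus for fixed $(J_0,J_1)$ with $J_0\neq J_1$, as $\lambda$ runs over $B_\infty$ the point $\gamma_r(\lambda)$ sweeps through a whole range of surfaces $S_V$, $V\in[V_{\min},V_{\max}]$. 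This is precisely the distinction the paper draws in Section \ref{sec:cc} between the Ising model and the Schr\"odinger case (where the invariant \emph{is} constant in the spectral parameter). Your reduction to a single surface map per parameter value is exactly the Schr\"odinger picture, and it fails here.

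The consequence is not merely technical. If your reduction worked, the local Hausdorff dimension of $B_\infty$ at every point would equal the single number $h^u(\Omega_{I(J_0,J_1)})$, and hence would be \emph{constant} over $B_\infty$ --- directly contradicting the ``non-constant'' clause in part (iii). The paper's actual mechanism for (iii) is that $\lhdim(B_\infty,b)=h^u(\Omega_{V(b)})$ where $V(b)=I(\gamma_r(b))$ varies with $b$, and then one invokes analyticity and nontriviality of $V\mapsto h^u(\Omega_V)$ (Propositions \ref{cantat_prop} and \ref{damanik_gorodetski_prop}) to get non-constancy. For (iv), continuity of $\hdim(B_\infty)$ in $(J_0,J_1)$ then follows because the \emph{range of $V$-values} hit by $\gamma_r$ varies continuously with $r$, and $h^u(\Omega_V)$ is continuous in $V$; no Manning--McCluskey perturbation argument on a fixed surface is needed. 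The hard work is not structural stability of a single hyperbolic set, but rather showing that $\gamma_r$ is transversal to the two-dimensional \emph{center-stable} lamination in $\R^3$ (Proposition \ref{part2_prop11}), which is what allows you to read off the local dimension as $h^u(\Omega_V)$ via a H\"older holonomy (Lemma \ref{holder_lemma}). Your identification of transversality as the main obstacle is correct, but the object to which $\gamma_r$ must be transversal is the center-stable lamination of a genuinely three-dimensional partially hyperbolic set, not the stable lamination on a single surface.
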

Convergence of the sequence $\set{\sigma_k}$ to a (nonempty, compact) limit and multifractal nature of this limit (statements (i) and (ii) of the theorem) was observed numerically in \cites{Doria1988, You1990, You1990a}. 

We should add that we believe the restrictions on $J_0, J_1$ in the statement of Theorem \ref{thm:main} are not necessary; however, our present techniques do not extend to the general case (i.e. to cover all values of $J_0, J_1$). We record our belief here formally as a conjecture:

\begin{conj} The conclusion of Theorem \ref{thm:main} holds for all $J_0, J_1 > 0$. 
\end{conj}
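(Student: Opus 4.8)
The plan is to run the very same renormalization scheme that proves Theorem \ref{thm:main}, but to upgrade every estimate that was obtained perturbatively near the periodic point $J_0=J_1$ to a version uniform on all compact subsets of the relevant parameter range. The key structural observation is that, unlike in the Schr\"odinger case, the line of initial conditions $\gamma_{(J_0,J_1)}$ of \eqref{model_eq5} does \emph{not} lie on a single invariant surface of the trace map $f$ of \eqref{model_eq4}; instead it is everywhere transverse to the invariant foliation of $\R^3$ by the level sets $S_v=\set{(x,y,z):x^2+y^2+z^2-2xyz-1=v}$, and a direct computation gives the clean identity $I(\gamma_{(J_0,J_1)}(\lambda))=\left(\frac{J_0^2-J_1^2}{2J_0J_1}\right)^2\lambda^2$, where $I(x,y,z)=x^2+y^2+z^2-2xyz-1$. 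Thus, writing $c=c(J_0,J_1)=\left(\frac{J_0^2-J_1^2}{2J_0J_1}\right)^2>0$ and $E=\lambda^2$, the point $\gamma_{(J_0,J_1)}(\lambda)$ sits on $S_{cE}$, its $f$-orbit stays on $S_{cE}$, and $B_\infty(J_0,J_1)$ is (once a routine trapping-region argument shows $\set{\sigma_k}$ of \eqref{model_eq6} decreases to it) exactly the set of $\lambda$ for which that orbit is bounded, i.e.\ for which $\gamma_{(J_0,J_1)}(\lambda)$ lies in the stable set of the non-wandering set $\Omega_{cE}\subset S_{cE}$ of $f|_{S_{cE}}$. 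A soft bound on $\norm{\widehat{\mathcal H}_k}$ confines $B_\infty$ to a fixed compact interval, and a short explicit computation at $\lambda=0$ shows $0\notin B_\infty$, so the couplings $v=cE$ occurring along the spectrum fill a compact subinterval $[v_{\min},v_{\max}]\subset(0,\infty)$ with endpoints depending continuously on $(J_0,J_1)$; this is what lets one avoid the degenerate surface $S_0$ entirely, and it is exactly here that the restriction ``$J_1/J_0$ near $1$'' (equivalently $c$ small, equivalently $v_{\max}$ small) has been used.

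The proof then needs three inputs, each uniform in $v$ on an arbitrary compact $K\subset(0,\infty)$. First, \emph{uniform hyperbolicity}: $\Omega_v$ is a locally maximal, uniformly hyperbolic horseshoe for $f|_{S_v}$ for every $v>0$, with hyperbolicity constants and symbolic coding depending real-analytically on $v\in K$; this is available from the study of the complexified trace map (\cite{Cantat2009}; see also \cite{Damanik2009,Damanik2010,Roberts1994b}), and it subsumes the partially hyperbolic picture for $f:\R^3\to\R^3$ on $\Omega=\bigcup_{v\in K}\Omega_v$, which carries a one-dimensional neutral center transverse to the foliation. Second, \emph{uniform bounded distortion} along the one-dimensional unstable direction of $\Omega_v$, with constants uniform over $v\in K$, so that the transverse (``unstable'') Cantor slices $C_u^{(v)}$ of $\Omega_v$ have Hausdorff dimension depending real-analytically on $v$, strictly between $0$ and $1$, and non-constant (by analyticity together with the known behaviour $\hdim C_u^{(v)}\to1$ as $v\to0^+$ and $\hdim C_u^{(v)}<1$ for $v>0$; this is the thermodynamic-formalism/pressure-equation step of the proof of Theorem \ref{thm:main}, now carried out with $v$-uniform bounds). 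Third, \emph{global transversality}: the curve $\gamma_{(J_0,J_1)}$ must meet the global strong-stable lamination $W^s(\Omega)=\bigcup_v W^s_{S_v}(\Omega_v)\subset\R^3$ transversally in the partially hyperbolic sense — its velocity must never lie along a strong-stable leaf at a point of intersection, equivalently (passing to the variable $E$ and to the coordinate $u(E)$ recording the second component of $\gamma$) the smooth function $E\mapsto u(E)$ must be non-degenerate relative to the moving family of Cantor sets $C_u^{(cE)}$.

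Granting these, the remainder is the standard renormalization bookkeeping, essentially identical to the proof of Theorem \ref{thm:main}: the bands of $\sigma_{k+1}$ nest inside those of $\sigma_k$ with ratios controlled by the hyperbolicity and distortion estimates, so $\sigma_k\to B_\infty$ in the Hausdorff metric and $B_\infty$ is a Cantor set; near a point $\lambda_0\in B_\infty$ (with $v_0=c\lambda_0^2\in[v_{\min},v_{\max}]$) the set $B_\infty$ is bi-Lipschitz, through the renormalization, to the Cantor set $C_u^{(v_0)}$ (the change of variable $E=\lambda^2$ is bi-Lipschitz away from $\lambda=0$, and $\lambda=0\notin B_\infty$), hence $\lhdim(B_\infty,\lambda_0)=\hdim C_u^{(v_0)}$; this depends continuously on $\lambda_0$, lies strictly in $(0,1)$, and is non-constant because $v_0$ runs over the nondegenerate interval $[v_{\min},v_{\max}]$ on which $v\mapsto\hdim C_u^{(v)}$ is non-constant; consequently $\hdim B_\infty=\sup_{v\in[v_{\min},v_{\max}]}\hdim C_u^{(v)}\in(0,1)$, so $\abs{B_\infty}=0$; and continuity of $\hdim B_\infty$ in $(J_0,J_1)$ follows from continuity of $c(J_0,J_1)$, of $v_{\min},v_{\max}$, of the curve, and of the hyperbolic data, in the parameters.

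The main obstacle is the third input, the global transversality of $\gamma_{(J_0,J_1)}$ to $W^s(\Omega)$ for all $J_0\neq J_1$. In the regime of Theorem \ref{thm:main} this is obtained by perturbation off the explicitly solvable surface $S_0$; for couplings far from $J_0=J_1$ the surfaces $S_v$ carry large $v$, their stable manifolds can fold, and one must show that the particular algebraic curve $\gamma_{(J_0,J_1)}$ — whose components are affine in $E=\lambda^2$ — never becomes tangent to a strong-stable leaf. I would attack this either (a) algebraically, showing that a tangency would force a polynomial relation in $(\lambda,J_0,J_1)$ excluded by a finite check together with a computation at the spectral edges, where the governing stable directions are dictated by the fixed and period-six orbits of $f$ and are explicit; or (b) by reducing to the known transversality of the line of initial conditions in the Fibonacci Schr\"odinger problem (valid for all couplings $v>0$): near a fixed $\lambda_0$ the Ising picture on $S_{v_0}$ coincides with the Schr\"odinger picture at coupling $v_0$, which settles the ``in-surface'' direction, leaving only the transverse non-degeneracy $u'(E)\neq0$ — manifest from \eqref{model_eq5} — to be weighed against the bounded rate at which the family $C_u^{(cE)}$ drifts, a comparison of scaling rates that I expect to be the genuinely delicate point. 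A secondary, more technical obstacle is making the hyperbolicity and distortion constants uniform on $K$ precisely up to the scale at which they are invoked; this should follow from compactness and real-analytic dependence of $\Omega_v$ on $v$, but the interchange of the $v$-uniformity with the renormalization limit $k\to\infty$ must be handled carefully, exactly as in the weakly coupled case but now without the luxury of smallness.
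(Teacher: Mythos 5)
There is a genuine gap, and it is the one the statement itself is designed to flag: this item is a \emph{conjecture} precisely because nobody (including the paper) knows how to prove the global transversality of $\gamma_{(J_0,J_1)}$ to the center-stable lamination away from the perturbative regime $J_0/J_1\approx 1$. Your proposal correctly reduces the problem to three inputs and correctly isolates the third one (transversality of $\gamma_{(J_0,J_1)}$ to $W^{\mathrm{s}}(\Omega)$ for \emph{all} $J_0\neq J_1$) as the crux, but it does not prove it: both of your suggested attacks, (a) an algebraic exclusion of tangencies and (b) a reduction to the Schr\"odinger-case transversality plus a ``drift-rate versus $u'(E)\neq 0$'' comparison, are left as programs with the decisive step explicitly deferred (``the genuinely delicate point''). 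Note also that (b) is not a reduction at all as stated: the known Schr\"odinger transversality concerns a \emph{different} curve lying inside a single surface $S_V$, and it gives no information about whether the Ising line, which crosses the foliation $\set{S_V}$, can become tangent to a strong-stable leaf of the three-dimensional lamination; that is exactly the new difficulty of this model (Section \ref{sec:cc}), and in the proved range it is handled by the cone-field construction of Proposition \ref{part2_prop11}, which is intrinsically perturbative off $S_0$ (it exploits the quadratic tangency of $\mathbb{W}^{\mathrm{s}}_i$ with $S_0$ and requires the bounded-orbit segment $\Lambda_r$ to stay in $\bigcup_{V\in[0,V_0)}\mathbb{S}_{V,U}$, i.e.\ small $V$). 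For general couplings the spectrum meets surfaces $S_V$ with $V$ of order one, where no analogue of those cones is available, so your ``first'' and ``second'' inputs (uniform hyperbolicity and analytic dimension data on compact $v$-ranges, from \cite{Cantat2009}) are indeed available but do not by themselves close the argument.

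Two secondary inaccuracies are worth flagging. First, your claim that a short computation at $\lambda=0$ shows $0\notin B_\infty$, so that the relevant $v$-range is a compact subinterval of $(0,\infty)$, is not justified for all $J_0,J_1$: the escape criterion of Proposition \ref{part2_prop0}(2) applied at $\gamma_{(J_0,J_1)}(0)$ requires $(1-J_0^2)(1-J_1^2)>0$, so for, e.g., $J_0<1<J_1$ the exclusion of $\lambda=0$ (hence of the degenerate surface $S_0$) needs a separate argument. Second, the identification $\lhdim(B_\infty,\lambda_0)=\hdim\bigl(C_u^{(v_0)}\bigr)$ cannot be obtained from a bi-Lipschitz correspondence: the holonomy along center-stable leaves between nearby surfaces is only H\"older (Lemma \ref{holder_lemma} and the remark following it), and the paper's proof of Theorem \ref{thm:main}(iii) works because the H\"older exponent can be taken arbitrarily close to one, not because the maps are Lipschitz. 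Neither of these is fatal to the outline, but the central transversality step remains unproved, so the proposal is a plausible strategy for the conjecture rather than a proof of it.
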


We should remark here that the Ising Hamiltonian above is equivalent, via a unitary transformation, to the tight binding model:
\begin{align}\label{eq:tbm}
(\mathcal{T}\theta)_n = \widetilde{J}_{n-1}\theta_{n-1} + (1 + \widetilde{J}_n^2)\theta_n + \widetilde{J}_{n+1}\theta_{n+1}.
\end{align}
In fact, it can be seen easily that solving \eqref{eq:matrices} is equivalent to solving the equation
\begin{align*}
(A + B)(A - B)\phi = \lambda^2\phi.
\end{align*}
This, on the other hand, is equivalent to solving
\begin{align*}
\mathcal{T}\phi = \lambda^2\phi.
\end{align*}
Now, $\mathcal{T}$ is a member of a family of tridiagonal Hamiltonians investigated in \cite{Yessen2011a} (results of \cite{Yessen2011a} were obtained some months after a preprint of the present paper appeared). From our results in \cite{Yessen2011a}, combined with Proposition \ref{part2_cor10} from Section \ref{sec:proof-main-i} below, it follows that with $\mathcal{T}$ from \eqref{eq:tbm}, we have $\sigma(\mathcal{T})\subset \R^+$ and
\begin{align*}
B_\infty = \set{\pm\sqrt{\lambda}: \lambda\in\sigma(\mathcal{T})},
\end{align*}
where $\sigma(\mathcal{T})$ is the spectrum of $\mathcal{T}$.

After $B_\infty$ has been tied to the spectrum of $\mathcal{T}$ as above, results of \cite{Avron1990} yield a proof of statement (i) of Theorem \ref{thm:main}. Moreover, in this case the restrictions given in the statement of Theorem \ref{thm:main} are not necessary; these restrictions are, however, still a necessity to our techniques for proving the remaining statements (ii)--(iv). Since previous numerical methods have relied entirely on the dynamics of the Fibonacci trace map, motivated by providing a rigorous counterpart, we provide an alternative proof of Theorem \ref{thm:main}-i in Section \ref{sec:proof-main-i}, based entirely on dynamical properties of the trace map.

On another note, we have the following remark regarding the connection with the Jacobi operator $\mathcal{T}$.

\begin{rem}
 In \cite{Yessen2011a} it is proved (using the techniques that we develop here) that the spectrum of $\mathcal{T}$ is a Cantor (in fact, multifractal) set of zero Lebesgue measure; moreover, we proved (heavily relying on the previous work for Schr\"odinger operators) that the spectral measures are purely singular continuous. The following question was justly raised by one of the referees of this article: \textit{What implications would pure singular continuity of the spectral measures have in the context of the Ising model?} While we have not investigated this question in detail, we postulate a connection with spin-spin correlation decay, and point to \cite{Hermisson1997} and references therein for some work in this direction.
\end{rem}

\section{Hierarchy of results}\label{sec:strategy}

Given that the proof of Theorem \ref{thm:main} is quite technical, we provide below a diagram of lemmas, propositions and theorems, demonstrating their hierarchy, in hopes of making navigation through the technical passages of the paper easier (\textbf{T} stands for Theorem, \textbf{P} stands for Proposition, \textbf{L} stands for Lemma, and \textbf{R} stands for Remark).

\begin{center}
\begin{tikzpicture}
 \matrix(m)[matrix of nodes, row sep = 1em,
	    column sep = 1em, minimum width = 1em]
  {
    & \textbf{T} \ref{thm:main} (i) & & & \textbf{T} \ref{thm:main} (ii) &\\
    & \textbf{P} \ref{part2_cor10} & & & \textbf{P} \ref{part2_cor9} & \textbf{P} \ref{part2_prop11} \\
    \textbf{L} \ref{part2_lem0_1} & \textbf{L} \ref{part2_cor10-helper1} & \textbf{L} \ref{lem:thm-i-helper1} & \textbf{L} \ref{part2_lem0_1} & \textbf{R} \ref{part2_rem5} & \textbf{P} \ref{part2_prop7}\\
    & \textbf{P} \ref{part2_cor9} & & & & \textbf{T} \ref{part2_thm1}\\
    & \textbf{T} \ref{thm:main} (iii) & & & \textbf{T} \ref{thm:main}\\
    \textbf{P} \ref{cantat_prop} & \textbf{P} \ref{damanik_gorodetski_prop} & \textbf{L} \ref{holder_lemma} & & \textbf{P} \ref{damanik_gorodetski_prop}\\
    & \textbf{P} \ref{lip_cont} & \textbf{P} \ref{part2_prop11} & \textbf{P} \ref{palis_viana} & & \\
  };
  \path[-stealth]
		(m-2-2) edge node [] {} (m-1-2)
		(m-3-1) edge node [] {} (m-2-2)
		(m-3-2) edge node [] {} (m-2-2)
		(m-3-3) edge node [] {} (m-2-2)
		(m-4-2) edge node [] {} (m-3-2)
		(m-2-5) edge node [] {} (m-1-5)
		(m-2-6) edge node [] {} (m-1-5)
		(m-3-4) edge node [] {} (m-2-5)
		(m-3-5) edge node [] {} (m-2-5)
		(m-3-6) edge node [] {} (m-2-5)
		(m-4-6) edge node [] {} (m-3-6)
		(m-6-1) edge node [] {} (m-5-2)
		(m-6-2) edge node [] {} (m-5-2)
		(m-6-3) edge node [] {} (m-5-2)
		(m-6-5) edge node [] {} (m-5-5)
		(m-7-2) edge node [] {} (m-6-3)
		(m-7-3) edge node [] {} (m-6-3)
		(m-7-4) edge node [] {} (m-6-3);
\end{tikzpicture}
\end{center}

In what follows, the least transparent proof of an auxiliary result is that of Proposition \ref{part2_prop11}. Below is a diagram illustrating its dependence on some technical lemmas. 

\begin{center}
\begin{tikzpicture}
 \matrix(m)[matrix of nodes, row sep = 1em,
	    column sep = 1em, minimum width = 1em]
	{
	& & \textbf{P} \ref{part2_prop11} & & &\\
	& & \textbf{L} \ref{part2_lem12} & & &\\
	\textbf{L} \ref{cone_projection} & \textbf{C} \ref{invariance_and_expansion} & \textbf{C} \ref{cor_cone_invariance} & \textbf{P} \ref{part2_prop0} & \textbf{L} \ref{lem_arb1} & \textbf{L} \ref{bound_on_v_to_lambda} \\
	\textbf{L} \ref{cones_on_compacta} & \textbf{L} \ref{invariance_of_cones} & \textbf{L} \ref{uniform_expansion} & \textbf{L} \ref{invariance_3d_cones} & & \\
	& &  \textbf{L} \ref{bound_on_gradiants} & \textbf{L} \ref{technical} & \textbf{L} \ref{using_technical} &\\
	};
	\path[-stealth]
		(m-2-3) edge node [] {} (m-1-3)
		(m-3-1) edge node [] {} (m-2-3)
		(m-3-2) edge node [] {} (m-2-3)
		(m-3-3) edge node [] {} (m-2-3)
		(m-3-4) edge node [] {} (m-2-3)
		(m-3-5) edge node [] {} (m-2-3)
		(m-3-6) edge node [] {} (m-2-3)
		(m-4-1) edge node [] {} (m-3-2)
		(m-4-2) edge node [] {} (m-3-2)
		(m-4-3) edge node [] {} (m-3-2)
		(m-4-4) edge node [] {} (m-3-3)
		(m-5-3) edge node [] {} (m-4-4)
		(m-5-4) edge node [] {} (m-4-4)
		(m-5-5) edge node [] {} (m-4-4);
\end{tikzpicture}
\end{center}

\section{Dynamics of the Fibonacci trace map}\label{sec:dynamics}

In the proof of Theorem \ref{thm:main} we shall employ dynamical properties of the Fibonacci trace map $f$, which we discuss in this section. For a brief overview of basic notions and notation from the theory of hyperbolic and partially hyperbolic dynamical systems, see Appendix \ref{b2}. We shall make more specific references to the appendix throughout the text, as necessary.

For convenience henceforth we shall refer to the sequence $\set{f^k({x})}_{k\in\N}$ as the \textit{positive}, or \textit{forward, semiorbit} of ${x}$, and denote it by $\mathcal{O}_f^+({x})$. The \textit{negative}, or \textit{backward, semiorbit} for $k\in\Z_{<0}$, and the \textit{full orbit}, for $k\in\Z$, are defined similarly and denoted, respectively, by $\mathcal{O}_f^-$ and $\mathcal{O}_f = \mathcal{O}_f^+\cup \mathcal{O}_f^-$.

Define the so-called \textit{Fricke-Vogt character} \cites{Fricke1896, Fricke1897, Vogt1889}
\begin{align}\label{part2_eq7}
I(x,y,z) = x^2 + y^2 + z^2 - 2xyz - 1.
\end{align}
Consider the family of cubic surfaces $\set{S_V}_{V\geq 0}$ given by
\begin{align}\label{part2_eq8}
S_V = \set{\vt{x}\in\R^3: I(\vt{x}) = V}.
\end{align}
For all $V > 0$, $S_V$ is a smooth, connected 2-dimensional submanifold of $\R^3$ without boundary. For $V = 0$, $S_V$ has four conic singularities,
\begin{align}\label{part2_eq9}
P_1 = (1,1,1),\text{\hspace{5mm}} P_2 = (-1,-1,1),\text{\hspace{5mm}} P_3 = (1,-1,-1),\text{\hspace{5mm}}P_4 = (-1,1,-1),
\end{align}
away from which the surface is smooth (see Figure \ref{part2_fig1}).
\begin{figure}[t]
\centering
 \subfigure[$V = 0.0001$]{
\includegraphics[scale=.3]{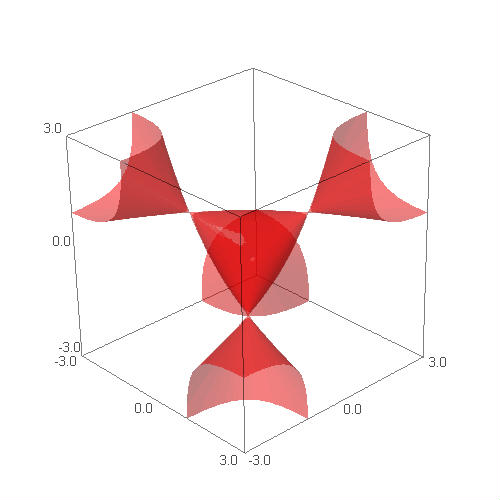}}
 \subfigure[$V = 0.01$]{
\includegraphics[scale=.3]{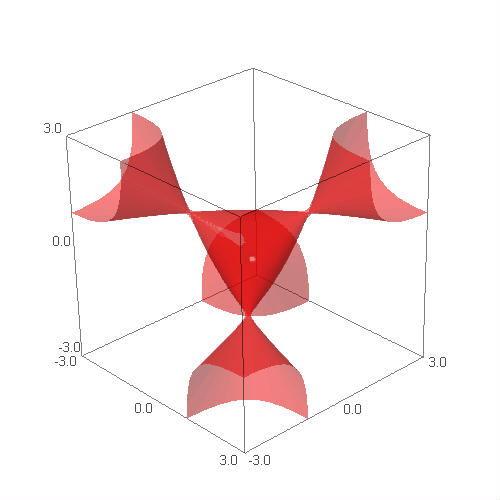}}
\\
 \subfigure[$V = 0.05$]{
\includegraphics[scale=.3]{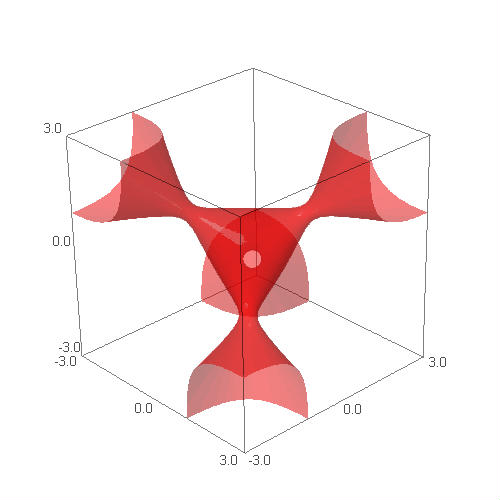}}
 \subfigure[$V = 1$]{
\includegraphics[scale=.3]{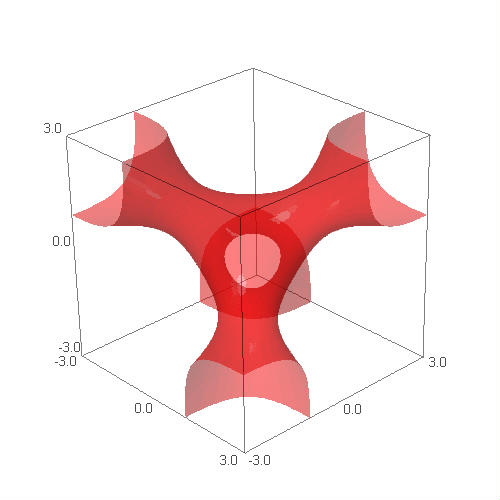}}
\caption{Invariant surfaces $S_V$ for four values of $V$.}
\label{part2_fig1}
\end{figure}

One can easily check that $f$ preserves the Fricke-Vogt character by verifying that $f(I(\vt{x})) = I(\vt{x})$; hence $f$ also preserves the surfaces $\set{S_V}$ (i.e. $f: S_V\rightarrow S_V$). For convenience we shall write $f_V$ for $f|_{S_V}$. In fact, since $f$ is invertible with the inverse $f^{-1}(x,y,z) = (y, z, 2yz - x)$, $f:S_V\rightarrow S_V$ is an analytic diffeomorphism. 

Since those points whose positive semiorbit is bounded play a crucial role in our analysis, it is convenient, for future reference, to state as a separate result the following necessary and sufficient conditions for a semiorbit to be bounded.

\begin{prop}\label{part2_prop0}
Let $x_k = \pi\circ f^k(x_1,x_0,x_{-1})$. We have the following.
\begin{enumerate}
\item Assume $\abs{x_{-1}}\leq C$ for some $C \geq 1$. The sequence $\set{x_k}_{k\geq -1}$ is unbounded if and only if there exists $k_0 \geq 0$ such that
\begin{align*}
\abs{x_{k_0 - 1}}\leq C\text{\hspace{5mm}and\hspace{5mm}}\abs{x_{k_0}},\abs{x_{k_0 + 1}} > C.
\end{align*}
\item A sufficient condition for $\set{x_k}_{k\geq -1}$ to be unbounded is that there exists $k_0\geq 0$ such that
\begin{align*}
\abs{x_{k_0}},\abs{x_{k_0 + 1}}> 1\text{\hspace{5mm}and\hspace{5mm}}\abs{x_{k_0}}\abs{x_{k_0 + 1}} > \abs{x_{k_0 - 1}}.
\end{align*}
\end{enumerate}
\end{prop}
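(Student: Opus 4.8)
The key algebraic fact I would extract first is a monotonicity/escape estimate for the recursion $x_{k+1} = 2x_k x_{k-1} - x_{k-2}$. Suppose at some index $k_0$ we have $\abs{x_{k_0-1}}\le C$ and $\abs{x_{k_0}},\abs{x_{k_0+1}}>C$ with $C\ge 1$. I claim that then $\abs{x_{k_0+2}}>\abs{x_{k_0+1}}$, and more precisely the pair $(\abs{x_{k_0+1}},\abs{x_{k_0+2}})$ again satisfies the hypothesis of part (2) — i.e. both entries exceed $1$ and their product exceeds the preceding term — so the situation propagates forward. Concretely, from $x_{k_0+2}=2x_{k_0+1}x_{k_0}-x_{k_0-1}$ one gets $\abs{x_{k_0+2}}\ge 2\abs{x_{k_0+1}}\abs{x_{k_0}}-\abs{x_{k_0-1}} > 2\abs{x_{k_0+1}}-C\ge \abs{x_{k_0+1}}(2-1)=\abs{x_{k_0+1}}$ using $\abs{x_{k_0}}>C\ge 1$ and $\abs{x_{k_0+1}}>C$, hence $\abs{x_{k_0+1}}>C$ gives $2\abs{x_{k_0+1}}-C>\abs{x_{k_0+1}}$. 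A slightly more careful bookkeeping shows the growth is in fact geometric with ratio bounded below by something like $2C-1>1$, which forces $\abs{x_k}\to\infty$. This simultaneously handles the "if" direction of part (1) and all of part (2), since the hypothesis of part (2) (both $\abs{x_{k_0}},\abs{x_{k_0+1}}>1$ and $\abs{x_{k_0}x_{k_0+1}}>\abs{x_{k_0-1}}$) immediately yields $\abs{x_{k_0+2}}\ge 2\abs{x_{k_0}x_{k_0+1}}-\abs{x_{k_0-1}}>\abs{x_{k_0}x_{k_0+1}}$, re-establishing the same configuration one step later with $C$ replaced by $\min(\abs{x_{k_0}},\abs{x_{k_0+1}})>1$.

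For the "only if" direction of part (1), I would argue contrapositively: assume no such $k_0$ exists, and show $\set{x_k}$ is bounded (by $C$, in fact). Since $\abs{x_{-1}}\le C$, consider the first index where the sequence could exceed $C$. If $\abs{x_k}\le C$ for all $k$ we are done; otherwise let $m$ be minimal with $\abs{x_m}>C$. Because the failure of the $k_0$-condition means we cannot have two consecutive terms both exceeding $C$ preceded by one that is $\le C$, and $\abs{x_{m-1}}\le C$ by minimality, we must have $\abs{x_{m+1}}\le C$. Then I use the recursion written as $x_{m+2}=2x_{m+1}x_m - x_{m-1}$ — but this is the wrong direction; instead I run the recursion backward, $x_{k-2}=2x_kx_{k-1}-x_{k+1}$, or better, I re-examine: knowing $\abs{x_{m-1}},\abs{x_{m+1}}\le C$ and $\abs{x_m}>C$, the condition must continue to fail at every shift, which pins down a rigid alternating pattern; tracking $\abs{x_{m+2}}=\abs{2x_{m+1}x_m-x_{m-1}}$ and demanding the $k_0$-condition fail at $k_0=m$ forces $\abs{x_{m+2}}\le C$ as well, and inductively every $x_k$ with $k\ge m-1$ stays $\le C$ except possibly isolated excursions that are themselves bounded by $2C^2+C$. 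I would clean this up into a clean invariant: "if the $k_0$-condition never holds then $\sup_k\abs{x_k}\le 2C^2+C$," which gives boundedness.

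The main obstacle is the "only if" part of (1): the forward argument (escape) is a soft monotonicity induction, but proving boundedness from the negation requires carefully enumerating the possible local patterns of $(\abs{x_{k-1}},\abs{x_k},\abs{x_{k+1}})$ relative to the threshold $C$ and checking that none of them can seed unbounded growth without at some point realizing the forbidden configuration. I expect this to come down to a short finite case analysis on which of the three consecutive terms exceed $C$, combined with the backward recursion $f^{-1}(x,y,z)=(y,z,2yz-x)$ to control terms on the left, and I would organize it so that the only escape route is precisely the stated condition. Everything else — the geometric growth rate, the explicit bound $2C^2+C$ — is routine once the combinatorial skeleton is in place.
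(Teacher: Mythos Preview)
Your treatment of part~(2) and of the ``if'' direction of part~(1) is correct and is exactly the standard monotonicity argument: once two consecutive terms exceed $C\ge 1$ with the preceding term $\le C$, the triangle inequality applied to $x_{k+2}=2x_{k+1}x_k-x_{k-1}$ reproduces the hypothesis of~(2) one step later, and the growth is super-exponential.

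The gap is in the ``only if'' direction of~(1). Your reduction is right up to the point where you conclude that, if no $k_0$ exists, then no two consecutive terms can exceed $C$. But your proposed uniform bound $2C^2+C$ on the isolated excursions is false. Take $C=1$ and the initial triple $(x_1,x_0,x_{-1})=(0,a,0)$ with $a>3$; the recursion produces the $6$-periodic sequence $0,a,0,0,-a,0,0,a,\dots$, which is bounded, has no two consecutive terms exceeding $1$, yet attains $\abs{a}$ arbitrarily large. So no bound on the excursions can be written in terms of $C$ alone; any purely combinatorial case analysis on the recursion, of the kind you sketch, cannot close the argument.

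What is missing is the Fricke--Vogt invariant $I(x,y,z)=x^2+y^2+z^2-2xyz-1$, which is conserved along the recursion (this is equivalent to the trace map preserving the surfaces $S_V$). If $\abs{x_k}>C$ then, as you note, $\abs{x_{k\pm 1}}\le C$; writing $V=I(x_1,x_0,x_{-1})$ and solving the quadratic $x_k^2-2(x_{k-1}x_{k+1})x_k+(x_{k-1}^2+x_{k+1}^2-1-V)=0$ for $x_k$ gives $\abs{x_k}\le C^2+\sqrt{C^4+2C^2+1+\abs{V}}$, which is the bound you need. This is essentially how the references cited in the paper's proof (which defers entirely to \cite{Damanik2005}, \cite{Kohmoto1983}, \cite{Kadanoff1984}) handle it. Once you invoke $I$, the remaining steps are exactly as you outlined.
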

\begin{rem} As defined earlier, $\pi$ denotes projection onto the third coordinate.
\end{rem}

\begin{proof}
For the proof of (1), see, for example, \cite[Proposition 5.2]{Damanik2005} (replace $1$ with $C$). For (2), see \cites{Kohmoto1983, Kadanoff1984} ((2) also follows from the aforementioned proof of (1)).
\end{proof}

\begin{rem}
For detailed analysis of orbits of trace maps, see \cite{Roberts1996}.
\end{rem}

Another result that we conveniently state as a separate statement establishes that the set of all points with bounded positive semiorbit is a closed set. This is a direct consequence of Proposition \ref{part2_prop0}.

\begin{lem}\label{part2_lem0_1}
For $V_0 \geq 0$ and $\infty \geq V_1\geq V_0$, the set of all points of $\bigcup_{V\in[V_0, V_1]}S_V$ whose forward semiorbit is bounded is a closed set.
\end{lem}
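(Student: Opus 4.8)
The plan is to deduce the statement from Proposition \ref{part2_prop0} by proving that the set
\[
U := \set{p\in\R^3 : \mathcal{O}_f^+(p)\text{ is unbounded}}
\]
is open in $\R^3$. Granting this, the lemma follows from two routine observations. First, $\bigcup_{V\in[V_0,V_1]}S_V = I^{-1}([V_0,V_1])$ is closed, being the preimage of a closed set under the continuous map $I$ (reading $[V_0,\infty]$ as $[V_0,\infty)$ when $V_1=\infty$). Second, for an initial point $p=(x_1,x_0,x_{-1})$ one has $f^k(p)=(x_{k+1},x_k,x_{k-1})$ with $x_k=\pi\circ f^k(p)$, so the coordinates occurring along the forward semiorbit are precisely $\set{x_k}_{k\geq-1}$; hence $\mathcal{O}_f^+(p)$ is bounded in $\R^3$ if and only if the scalar sequence $\set{x_k}_{k\geq-1}$ is bounded. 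Consequently the set in the statement equals $I^{-1}([V_0,V_1])\setminus U = I^{-1}([V_0,V_1])\cap(\R^3\setminus U)$, an intersection of two closed sets, hence closed.

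To show $U$ is open I would fix $p\in U$ and put $C:=\max\set{1,\abs{x_{-1}(p)}}$, so that $C\geq 1$ and $\abs{x_{-1}(p)}\leq C$. Since $\set{x_k(p)}_{k\geq-1}$ is unbounded, Proposition \ref{part2_prop0}(1) supplies some $k_0\geq 0$ with
\begin{align*}
\abs{x_{k_0-1}(p)}\leq C,\qquad \abs{x_{k_0}(p)}>C,\qquad \abs{x_{k_0+1}(p)}>C.
\end{align*}
The next step is to observe that $p$ then satisfies the \emph{strict} sufficient condition of Proposition \ref{part2_prop0}(2): from $\abs{x_{k_0}(p)}>C\geq 1$ and $\abs{x_{k_0+1}(p)}>C\geq 1$ we get $\abs{x_{k_0}(p)}>1$ and $\abs{x_{k_0+1}(p)}>1$, while
\begin{align*}
\abs{x_{k_0}(p)}\,\abs{x_{k_0+1}(p)} > C^2 \geq C \geq \abs{x_{k_0-1}(p)},
\end{align*}
so all three relevant inequalities hold strictly at $p$.

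Finally, since each $x_k=\pi\circ f^k$ is a polynomial — in particular continuous — function of the initial point, these three strict inequalities persist for every $q$ in a sufficiently small neighborhood $W$ of $p$. By Proposition \ref{part2_prop0}(2), every such $q$ has $\set{x_k(q)}_{k\geq-1}$ unbounded, i.e. $\mathcal{O}_f^+(q)$ unbounded, so $W\subset U$; thus $U$ is open and the lemma follows. I do not expect a genuine obstacle here — essentially all the work is already done in Proposition \ref{part2_prop0}. The only points needing a little care are to extract \emph{strict} inequalities at $p$ (so that the unboundedness criterion is stable under perturbation) and the harmless identification of $\bigcup_{V\in[V_0,V_1]}S_V$ with a level-set region of $I$, together with the equivalence between boundedness of the $\R^3$-semiorbit and of the scalar trace sequence.
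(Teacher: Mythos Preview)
Your proof is correct and follows essentially the same approach as the paper: use Proposition~\ref{part2_prop0}(1) to locate an index $k_0$ at which the escape condition holds, observe that this implies the strict inequalities of Proposition~\ref{part2_prop0}(2), and note that (2) is an open condition. You have simply spelled out more carefully the choice of $C$, the derivation of the strict inequalities, and the closedness of the ambient set $\bigcup_{V\in[V_0,V_1]}S_V$, all of which the paper leaves implicit.
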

\begin{proof}
If $\mathcal{O}^+_f(x_1, x_0, x_{-1})$ is unbounded, then there exists a $k_0 \geq 0$ such that the point $f^{k_0}(x_1,x_0,x_{-1})$ satisfies (1) of Proposition \ref{part2_prop0}, and hence also satisfies (2), which is an open condition.
\end{proof}

Another consequence of Proposition \ref{part2_prop0} that will be used later is Proposition \ref{part2_prop2} below; for a proof see \cite[Proposition 5.2]{Damanik2005}, or (in a more general context) \cite{Roberts1996}.

\begin{prop}\label{part2_prop2}
The positive semiorbit is unbounded if and only if $\set{f^k({x})}_{k\in\N}$ escapes to infinity in every coordinate.
\end{prop}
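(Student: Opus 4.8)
The plan is to establish both implications, of which only one direction carries content. If $\set{f^k(\vt x)}_{k\in\N}$ escapes to infinity in every coordinate then certainly the positive semiorbit is unbounded, so the work is in the forward direction: assuming the semiorbit is unbounded, I want to show that each of the three coordinate sequences tends to $+\infty$ in absolute value. Write $\vt x = (x_1,x_0,x_{-1})$ and $x_k = \pi\circ f^k(\vt x)$, so that $f^k(\vt x) = (x_{k+1},x_k,x_{k-1})$ and the three coordinate sequences are just shifts of the single sequence $\set{x_k}$. Hence it suffices to prove that $\abs{x_k}\to\infty$ as $k\to\infty$.

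First I would invoke Proposition \ref{part2_prop0}(1): since the semiorbit is unbounded, pick $C\geq 1$ with $\abs{x_{-1}}\leq C$ (enlarging $C$ if necessary so also $\abs{x_0},\abs{x_1}\le C$ is \emph{not} assumed — we only need the hypothesis on $x_{-1}$), and obtain $k_0\geq 0$ with $\abs{x_{k_0-1}}\leq C$ and $\abs{x_{k_0}},\abs{x_{k_0+1}}>C$. The key step is then a monotonicity/escape estimate: I claim that once two consecutive iterates exceed $C\geq 1$ in modulus with the preceding one at most $C$, the moduli grow without bound. Concretely, set $a_k=\abs{x_k}$; from the recursion $x_{k+1}=2x_kx_{k-1}-x_{k-2}$ one gets, for $k\ge k_0+1$, the bound $a_{k+1}\geq 2a_k a_{k-1}-a_{k-2}$, and a short induction starting from $a_{k_0-1}\le C < a_{k_0},a_{k_0+1}$ shows $a_{k-2}<a_{k-1}<a_k$ and $a_{k-1}a_k > a_{k-2}$ persist for all $k\geq k_0+1$ — this is exactly the open condition in Proposition \ref{part2_prop0}(2), which I would verify is not merely sufficient for unboundedness but in fact self-propagating. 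Given $a_{k-1},a_k>1$, the recursion yields $a_{k+1}\geq 2a_k a_{k-1}-a_{k-2} > 2a_k a_{k-1} - a_{k-1}a_k = a_k a_{k-1} > a_k$, so the sequence is eventually strictly increasing; moreover from $a_{k+1} > a_k a_{k-1}$ and $a_{k-1},a_k>1$ one deduces $a_{k+1} > a_k \cdot a_{k-1} \geq a_k \cdot (a_{k_0})$ (or any fixed lower bound $>1$ inherited from the minimum of $a_{k_0},a_{k_0+1}$), giving at least geometric growth and hence $a_k\to\infty$.

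The main obstacle is the bookkeeping in that induction: one must check that the three inequalities $\abs{x_{k-1}},\abs{x_k}>1$ and $\abs{x_{k-1}}\abs{x_k}>\abs{x_{k-2}}$ are preserved by one step of the recursion, and that the values stay above a constant strictly greater than $1$ so the growth factor does not degenerate to $1$. This is elementary — it is the standard "escaping orbit" argument underlying Proposition \ref{part2_prop0} — and in fact, as the excerpt notes, it already appears in the proof of Proposition \ref{part2_prop0}(1) in \cite[Proposition 5.2]{Damanik2005}; so an alternative, shorter route is simply to cite that the proof there produces not just unboundedness but coordinatewise escape to infinity. For a self-contained presentation I would include the two-line verification above that $\abs{x_{k-1}},\abs{x_k}>1$ together with $\abs{x_{k-1}}\abs{x_k}>\abs{x_{k-2}}$ is a closed-under-iteration condition forcing $\abs{x_k}\to\infty$, and then note that by the shift structure $f^k(\vt x)=(x_{k+1},x_k,x_{k-1})$ every coordinate of $f^k(\vt x)$ escapes to infinity, which is the claim.
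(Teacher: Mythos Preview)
Your argument is correct. The paper does not actually prove this proposition: it simply refers to \cite[Proposition~5.2]{Damanik2005} and \cite{Roberts1996}. What you have written is a self-contained version of the standard escape argument that underlies those references, and the steps check out: from Proposition~\ref{part2_prop0}(1) one obtains the seed $\abs{x_{k_0-1}}\le C<\abs{x_{k_0}},\abs{x_{k_0+1}}$, and then the invariant ``$\abs{x_{k-1}},\abs{x_k}>1$ and $\abs{x_{k-1}}\abs{x_k}>\abs{x_{k-2}}$'' propagates and forces $\abs{x_{k+1}}>\abs{x_k}\abs{x_{k-1}}$, hence geometric growth with ratio at least $\min(\abs{x_{k_0}},\abs{x_{k_0+1}})>1$. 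The observation that $f^k(\vt x)=(x_{k+1},x_k,x_{k-1})$ then gives escape in every coordinate. So your proposal is not a different route so much as a filled-in version of what the paper leaves to citation; the only thing one might tighten is the slightly convoluted parenthetical about what is and is not assumed of $C$---it suffices to take $C=\max(1,\abs{x_{-1}})$ and proceed.
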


Most conclusions about the dynamics of $f$ that we shall derive and use come from the knowledge of dynamics of $f$ on the surfaces $S_V$, i.e. dynamics of $f_V$ (not surprisingly, since these surfaces are invariant under $f$). In the following sections we recall some known results about dynamics of $f_V$, $V \geq 0$, as well as prove some new results necessary for the present investigation.

\subsection{Dynamics of \texorpdfstring{$f_V$} {fV} for \texorpdfstring{$V\geq 0$}{V>=0}}\label{subsec:dynamics-v}

In this and the following sections, we shall use the notation and terminology from Appendix \ref{b}.

\subsubsection{Hyperbolicity of \texorpdfstring{$f_V$} {fV} for \texorpdfstring{$V > 0$}{V > 0}}\label{subsec:hyperbolicity}

The following result on hyperbolicity of the Fibonacci trace map will serve as the main tool for us. For definition of a locally maximal transitive hyperbolic set (and the notion of $(1,1)$ splitting) see Section \ref{b1}.

\begin{thm}[M. Casdagli in \cite{Casdagli1986}, D. Damanik and A. Gorodetski in \cite{Damanik2009}, and S. Cantat in \cite{Cantat2009}\footnote{The special case of $V \geq 64$ was done by M. Casdagli in \cite{Casdagli1986}. D. Damanik and A. Gorodetski extended the result to all $V > 0$ sufficiently small in \cite{Damanik2009}. Finally, S. Cantat proved the result for all $V > 0$ in
\cite{Cantat2009} (D. Damanik and A. Gorodetski, and S. Cantat obtained their results independently, and used different techniques).}]\label{part2_thm1}
For $V > 0$ let
\begin{align*}
\Omega_V = \set{p\in S_V: \mathcal{O}_f(p)\text{ is bounded}}.
\end{align*}
Then $\Omega_V$ is a Cantor set, it is $f_V$-invariant compact locally maximal transitive hyperbolic set in $S_V$ (with $(1,1)$ splitting). Moreover, $\Omega_V$ is precisely the set of nonwandering points of $f_V$ (a point $p$ is nonwandering if for any neighborhood $U$ of $p$ and $N\in\N$, there exists $n \geq N$ such that $f^n(U)\cap U\neq \emptyset$).
\end{thm}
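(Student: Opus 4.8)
\smallskip
The plan is to prove the theorem by the cone--field criterion for hyperbolicity (see Section \ref{b1}): first localize the bounded--orbit set inside a compact isolating neighborhood in $S_V$, then build over it invariant stable and unstable cone fields with uniform expansion and contraction, deduce a $Df$--invariant hyperbolic splitting, and finally read off the remaining structural claims from a Markov partition of the resulting basic set.

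First I would pin down where $\Omega_V$ lives. By Proposition \ref{part2_prop0}, applied both to $f$ and to its inverse $f^{-1}(x,y,z)=(y,z,2yz-x)$ (which has the same form), a point of $S_V$ has bounded full orbit precisely when its orbit never enters the open region in which two consecutive trace coordinates both exceed $1$ in modulus and their product dominates the third coordinate; one checks from this that there is a constant $C=C(V)$ with $K_V:=S_V\cap[-C,C]^3$ compact and $\Omega_V=\bigcap_{n\in\Z}f^n(K_V)$ (closedness of $\Omega_V$ is already Lemma \ref{part2_lem0_1}). Thus $K_V$ is an isolating block, and local maximality of $\Omega_V$ will be automatic once hyperbolicity is established.

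The analytic heart is the cone construction. Since $f_V$ is an analytic diffeomorphism of the surface $S_V$ preserving a smooth area form, it suffices to produce a continuous field of cones $\mathcal{C}^u\subset TS_V$ over a neighborhood of $K_V$ with $Df(\mathcal{C}^u)\subset\operatorname{int}\mathcal{C}^u$ and $\|Dfv\|\ge\kappa\|v\|$ for $v\in\mathcal{C}^u$ and some fixed $\kappa>1$; area preservation then returns a complementary, $Df^{-1}$--invariant and uniformly $Df^{-1}$--expanded field $\mathcal{C}^s$. For $V$ large this cone estimate is a direct computation in coordinates adapted to the two sheets of $S_V$ (Casdagli); for $V$ small one transplants the cone field from the singular surface $S_0$, on which $f_0$ has an explicitly understood hyperbolic core, and verifies that the estimates persist as $V\to0^+$ (Damanik and Gorodetski); Cantat instead passes to a global change of variables on the character surface in which hyperbolicity on the bounded part is manifest. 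The cone criterion then yields the $Df$--invariant splitting $T_{\Omega_V}S_V=E^s\oplus E^u$ with uniform rates, and since $\dim S_V=2$ and $\Omega_V$ is infinite the splitting is necessarily $(1,1)$.

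Finally I would extract the topological statements. Hyperbolicity together with the isolating block $K_V$ makes $\Omega_V$ a locally maximal (basic) set, so it admits a Markov partition and $f_V|_{\Omega_V}$ is finite--to--one semiconjugate to a subshift of finite type; irreducibility of the transition graph of that partition gives transitivity of $\Omega_V$, an infinite transitive hyperbolic set (not a single periodic orbit, e.g.\ because it carries a transverse homoclinic point) is perfect, and the partition can be arranged so that rectangles map onto proper subrectangles in both the stable and unstable directions, forcing $\Omega_V$ to be totally disconnected; hence $\Omega_V$ is a Cantor set. For the identification with the nonwandering set, any $p\notin\Omega_V$ has unbounded orbit, so by Proposition \ref{part2_prop2} its iterates escape to infinity and $p$ is wandering, while every point of $\Omega_V$ is nonwandering by transitivity, so $\Omega_V$ is exactly the nonwandering set of $f_V$. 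The main obstacle is making the cone construction work uniformly over the \emph{entire} range $V>0$: as $V\to0^+$ the surface $S_V$ collapses onto the four conic singularities of $S_0$ and the naive hyperbolicity estimates degenerate near them, so that regime requires either the perturbative analysis anchored at $S_0$ or Cantat's coordinate change rather than one global computation; a secondary difficulty is obtaining genuine transitivity---i.e.\ ruling out a priori that $\Omega_V$ splits into several basic pieces---which needs the global combinatorics of the Markov partition.
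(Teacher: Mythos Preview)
The paper does not give its own proof of this theorem: it is stated as a result due to Casdagli, Damanik--Gorodetski, and Cantat, with a footnote explaining which author covers which parameter range, and is then used as a black box throughout the rest of the paper. So there is nothing in the paper to compare your argument against line by line.

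That said, your outline is an accurate high-level description of the Casdagli and Damanik--Gorodetski approaches (cone fields, with the small-$V$ regime handled by transplanting cones from the pseudo-Anosov model on $\mathbb{S}\subset S_0$ via the semiconjugacy $F$ of \eqref{semiconjugacy} and controlling what happens near the conic singularities), and your identification of the nonwandering set via Proposition~\ref{part2_prop2} is the right argument. Two points are worth flagging. First, your one-line description of Cantat's proof is misleading: his argument is not a ``global change of variables'' but rather holomorphic dynamics on the complexified cubic surface, using the lamination by stable/unstable leaves coming from the complex structure and a Bers-type uniformization; that route is genuinely different from cone fields. Second, your use of area preservation to obtain the stable cone field from the unstable one is correct but requires the (nontrivial, though known) fact that $f_V$ preserves a smooth area form on $S_V$; this should be stated rather than assumed. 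The delicate step you correctly identify---uniform cone estimates as $V\to 0^+$ near the singularities $P_1,\dots,P_4$---is exactly what occupies most of \cite{Damanik2009}, and is not something one can simply ``verify'' without the machinery developed there (essentially the content of Lemmas~\ref{cone_projection}--\ref{using_technical} in the present paper, which are borrowed from that reference).
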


\begin{rem}\label{part2_rem5}
It follows that for $V > 0$, for any $x\in\Omega_V$, $W_\loc^\mathrm{s}(x)\cap\Omega_V$ and $W_\loc^\mathrm{u}(x)\cap\Omega_V$ are Cantor sets ($W_\loc^\mathrm{s}(x)$ denotes the local stable manifold at $x$, and $W_\loc^\mathrm{u}(x)$ denotes the local unstable manifold at $x$ -- see Appendix \ref{b1-2} for definitions and properties of these objects).
\end{rem}

\begin{rem} 
 In fact, $f_V$ satisfied what is called \textit{Smale's Axiom A} \cite{Smale1967}. The general theory of Axiom A diffeomorphisms is extensive and forms one of the central themes in the modern theory of dynamical systems. Suffice it to say that Axiom A diffeomorphisms are the hallmark of chaotic dynamics.
\end{rem}

The preceding theorem is of importance, because as a consequence of it, the set of points whose forward semiorbit is bounded can be endowed with a sensible geometric structure, which is the subject of Corollary \ref{part2_cor1} below. The corollary follows from general principles in hyperbolic dynamics (and has been employed implicitly in a number of previous works, especially \cite{Casdagli1986, Damanik2009}; for the technical details, see Corollary 2.5 and the discussion preceding it in \cite{Damanik0000my1}. In the statement of the corollary, $W^s(\Omega_V)$ stands for $\bigcup_{x\in\Omega_V} W^s(x)$, where $W^s(x)$ is the global stable manifold (in contrast to the local one from Remark \ref{part2_rem5} above). For details, see equations \eqref{part2_eq4} and \eqref{part2_eq5} in Section \ref{b1-2}.

\begin{cor}\label{part2_cor1}
For $V > 0$, $x\in S_V$, $\mathcal{O}_f^+(x)$ is bounded if and only if $x\in W^\mathrm{s}(\Omega_V)$.
\end{cor}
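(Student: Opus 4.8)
The plan is to deduce Corollary~\ref{part2_cor1} from Theorem~\ref{part2_thm1} together with Proposition~\ref{part2_prop2} by invoking the standard structure theory of Axiom~A (locally maximal hyperbolic) diffeomorphisms, namely the spectral decomposition and the shadowing/stability properties encoded in the notion of $W^\mathrm{s}(\Omega_V) = \bigcup_{x\in\Omega_V} W^\mathrm{s}(x)$. One direction is essentially immediate: if $x\in W^\mathrm{s}(\Omega_V)$, then by definition $\dist(f^k(x), f^k(y))\to 0$ as $k\to\infty$ for some $y\in\Omega_V$, and since $\mathcal{O}_f(y)$ is bounded (indeed $\Omega_V$ is compact), the forward semiorbit of $x$ stays within a bounded neighborhood of the compact set $\Omega_V$, hence $\mathcal{O}_f^+(x)$ is bounded.

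The substantive direction is the converse: if $\mathcal{O}_f^+(x)$ is bounded then $x\in W^\mathrm{s}(\Omega_V)$. First I would observe that, by Proposition~\ref{part2_prop2} (contrapositive), boundedness of $\mathcal{O}_f^+(x)$ forces the forward orbit to accumulate only on a compact subset $K$ of $S_V$; the $\omega$-limit set $\omega(x)$ is then a nonempty, compact, $f_V$-invariant set, and every point of $\omega(x)$ has bounded full orbit (forward orbit bounded since $\omega(x)$ is compact and invariant, backward orbit bounded because each point of $\omega(x)$ is a limit of points $f^{k_j}(x)$ with $k_j\to\infty$, so its preimages are limits of $f^{k_j - n}(x)$ which lie in the bounded forward semiorbit). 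Hence $\omega(x)\subset \Omega_V$. Moreover every point of $\omega(x)$ is nonwandering — this is a general fact about $\omega$-limit sets — consistent with Theorem~\ref{part2_thm1}'s identification of $\Omega_V$ with the nonwandering set. Now I would invoke the hyperbolicity and local maximality of $\Omega_V$: since $\omega(x)\subset\Omega_V$ and $\Omega_V$ is a hyperbolic set, the shadowing lemma / the local product structure gives that the forward orbit of $x$, which converges to $\Omega_V$, must actually lie on the stable manifold of some point of $\Omega_V$. Concretely, pick $k$ large enough that $f^k(x)$ lies in a small neighborhood $U$ of $\Omega_V$ on which the local product structure holds and such that $\mathcal{O}_f^+(f^k(x))\subset U$; then the pseudo-orbit $\{f^{k+n}(x)\}_{n\geq 0}$ is shadowed by a genuine orbit in $\Omega_V$, and by hyperbolicity the shadowing orbit is unique and $f^k(x)\in W^\mathrm{s}_\loc(p)$ for the corresponding $p\in\Omega_V$; applying $f^{-k}$ gives $x\in W^\mathrm{s}(f^{-k}(p))\subset W^\mathrm{s}(\Omega_V)$, using $f$-invariance of $\Omega_V$.

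The main obstacle — or rather the point requiring care rather than genuine difficulty — is the step asserting that a forward orbit converging to a locally maximal hyperbolic set must lie on a stable manifold of that set. This is where one must be precise about ``converging to'': it is not enough that $\omega(x)\subset\Omega_V$; one needs that the forward semiorbit eventually enters and stays in a neighborhood $U$ with $\bigcap_{n\geq 0} f^n(U) = \Omega_V$ (local maximality), and this does follow from $\omega(x)\subset\Omega_V$ plus compactness. I would cite the discussion around Corollary~2.5 in \cite{Damanik0000my1} (as the statement of the corollary already does) for the technical execution, and reference Appendix~\ref{b1-2} for the precise definitions of $W^\mathrm{s}(x)$, $W^\mathrm{s}_\loc(x)$, and the relation $W^\mathrm{s}(x) = \bigcup_{n\geq 0} f^{-n}(W^\mathrm{s}_\loc(f^n(x)))$ (equations~\eqref{part2_eq4} and \eqref{part2_eq5}). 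With those references in place the proof is short; its content is entirely the translation of ``bounded forward semiorbit'' into the dynamical language of Theorem~\ref{part2_thm1} via Propositions~\ref{part2_prop0} and~\ref{part2_prop2}.
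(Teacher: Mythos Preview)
Your proposal is correct and follows essentially the same approach the paper alludes to: the paper does not give a self-contained proof of Corollary~\ref{part2_cor1} but states that it ``follows from general principles in hyperbolic dynamics'' and refers the reader to Corollary~2.5 of \cite{Damanik0000my1} for the technical details --- exactly the reference you invoke. Your sketch (the $\omega$-limit set argument showing $\omega(x)\subset\Omega_V$, followed by local maximality and the shadowing/local product structure to place $f^k(x)$ on a local stable leaf) is precisely the ``general principle'' the paper has in mind.
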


\subsubsection{Dynamics of \texorpdfstring{$f_V$}{fV} for \texorpdfstring{$V = 0$}{V = 0}}\label{part2_2_2}
As mentioned above, the surface $S_0$ is smooth everywhere except for the four singularities $P_1,\dots, P_4$ (see \eqref{part2_eq9} and Figure \ref{part2_fig1}). Let us set, and henceforth fix, the following notation
\begin{align}\label{center_part}
\mathbb{S} = \set{(x,y,z)\in S_0: \abs{x},\abs{y},\abs{z} \leq 1}.
\end{align}
It is easily seen that $\mathbb{S}$ is invariant under $f$. Moreover, $f|_{\mathbb{S}}$ is a factor of the hyperbolic diffeomorphism on $\mathbb{T}^2$,
\begin{align}\label{torus_map}
\mathcal{A}(\theta, \phi) = ((\theta + \phi),\theta)(\mathrm{mod\hspace{1mm}} 1),
\end{align}
via the factor map
\begin{align}\label{semiconjugacy}
F:(\theta, \phi)\mapsto (\cos 2\pi(\theta + \phi), \cos 2\pi\theta, \cos 2\pi\phi).
\end{align}
By \textit{a factor} we mean
\begin{align*}
f|_{\mathbb{S}}\circ F = F\circ\mathcal{A}.
\end{align*}
The map $F$ is not, however, a conjugacy in the sense of \eqref{part2_eq3} from Section \ref{b1-2}, since $F$ is a two-to-one map. The dynamics of $f$ on $\set{P_1,\dots,P_4}$ is as follows.
\begin{align*}
f: P_1\mapsto P_1; \text{\hspace{5mm}}f: P_2\mapsto P_3\mapsto P_4\mapsto P_2.
\end{align*}

It is easy to see that $\mathcal{A}:\mathbb{T}^2\rightarrow\mathbb{T}^2$ is a hyperbolic map with the full space $\mathbb{T}^2$ constituting a compact hyperbolic set (such systems are more commonly known as \textit{Anosov diffeomorphisms}). Even though $F:\mathcal{A}\rightarrow \mathbb{S}$ is not a conjugacy, the behavior of $Df$ on the tangent bundle of $\mathbb{S}$ away from the singularities $P_1,\dots, P_4$ still inherits the hyperbolic behavior from $D\mathcal{A}$ via $DF$ (this is stated more precisely in Lemma \ref{cone_projection} below). However, the singularities (or, better to say, dynamics near the singularities) requires special treatment. We now concentrate on a neighborhood of $P_1$; similar results hold for the other singularities due to the symmetries of $f$ (see Section \ref{a2} for details).

Let $\mathrm{Per}_2(f)$ denote the set of period-two periodic points for $f$ (note that $\mathrm{Per}_2(f)$ is precisely what is called $\rho_1$ in Section \ref{a1}). A direct computation shows that
\begin{align*}
\mathrm{Per}_2(f) = \set{(x,y,z): x\in\left(-\infty,\hspace{1mm}\frac{1}{2}\right)\cup\left(\frac{1}{2},\hspace{1mm}\infty\right),\hspace{1mm}y = \frac{x}{2x-1},\hspace{1mm}z = x}.
\end{align*}
Let
\begin{align}\label{eq_per_pnts}
\vartheta(x) = \left(x, \frac{x}{2x-1}, x\right), \hspace{2mm} \vartheta: \left(-\infty,\hspace{1mm}\frac{1}{2}\right)\cup\left(\frac{1}{2},\hspace{1mm}\infty\right)\rightarrow\R^3
\end{align}
be the curve of these periodic points (see Figure \ref{part3_fig1}). We have
\begin{figure}[t]
\centering
 \subfigure{
\includegraphics[scale=.3]{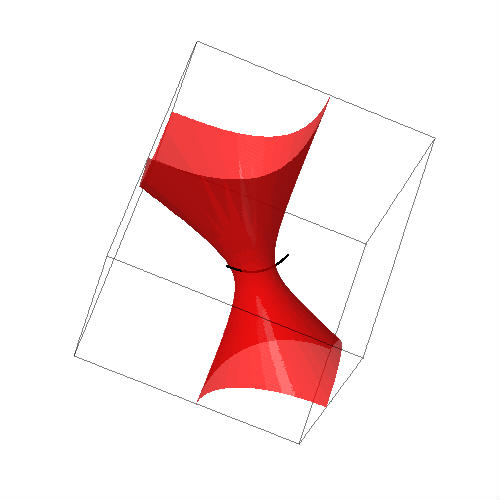}}
 \subfigure{
\includegraphics[scale=.3]{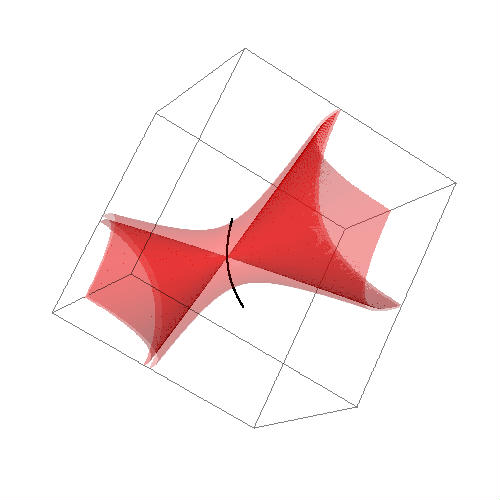}}
\caption{$\mathrm{Per}_2(f)$ in a neighborhood of $P_1$.}\label{part3_fig1}
\end{figure}
Also,
\begin{align}\label{part3_2_2_eq1}
I(\vartheta(x)) \geq 0,
\end{align}
with $I(\vartheta(x)) = 0$ if and only if $x = 1$, where $\vartheta(1) = P_1$. On the other hand,
\begin{align*}
Df_{P_1} = 
\begin{pmatrix}
2 & 2 & -1\\
1 & 0 & 0\\
0 & 1 & 0
\end{pmatrix}
\text{\hspace{2mm} which is similar to \hspace{2mm}}
\begin{pmatrix}
\frac{3 - \sqrt{5}}{2} & 0 & 0\\
0 & -1 & 0\\
0 & 0 & \frac{3 + \sqrt{5}}{2}
\end{pmatrix}.
\end{align*}
Hence $T_{P_1}\R^3$ splits as
\begin{align}\label{part3_2_2_eq2}
T_{P_1}\R^3 = E_{P_1}^\mathrm{s}\oplus E_{P_1}^c\oplus E_{P_1}^\mathrm{u},
\end{align}
where $E^s_{P_1}$ corresponds to the subspace spanned by the eigenvector corresponding to the largest eigenvalue of $Df$ (which is strictly larger than one), $E^u_{P_1}$ corresponds to the subspace spanned by the eigenvector of $Df_{P_1}$ corresponding to the smallest eigenvalue of $Df$ (which is the reciprocal of the largest one), and $E^c_{P_1}$ corresponds to the subspace spanned by the eigenvector of $Df_{P_1}$ corresponding to the eigenvalue $1$, which is also the tangent space of $\mathrm{Per}_2(f)$ at the point $P_1$ (see Section \ref{a1} for a slightly more detailed discussion). Now using invariance of $\mathrm{Per}_2(f)$ under $f$, and combining \eqref{part3_2_2_eq1} with Theorem \ref{part2_thm1} for $x\neq 1$ and \eqref{part3_2_2_eq2} for $x = 1$, we get that the curve $\vartheta$ is normally hyperbolic in a neighborhood of $P_1$ (for definitions and properties, see Section \ref{b3}). Hence we can apply Theorem \ref{a1_thm1} and, in the notation of Section \ref{b3} (see equation \eqref{eq:cscu-def})
, we get that $W_\loc^\mathrm{cs}(\mathrm{Per}_2(f))$ and $W_\loc^\mathrm{cu}(\mathrm{Per}_2(f))$ are smooth two-dimensional submanifolds of $\R^3$ (roughly speaking, the manifold $W_\loc^{\mathrm{cs}}(\mathrm{Per}_2(f))$ is precisely the set of those points of $\R^3$ which lie in some $\epsilon$-neighborhood of $\mathrm{Per}_2(f)$ and whose positive semiorbit does not leave this neighborhood, and similarly for $W_\loc^{\mathrm{cu}}(\mathrm{Per}_2(f))$ with $f$ replaced by $f^{-1}$; more details are given in Section \ref{b3}). Moreover, by invariance of the surfaces $S_V$ under $f$, it follows that $W_\loc^\mathrm{cs,cu}(\mathrm{Per}_2(f))\cap S_V$ is precisely the one-dimensional manifold $W_\loc^\mathrm{s,u}(\mathrm{Per}_2(f)\cap S_V)$ on $S_V$ (see Sections \ref{b1-2} and \ref{b3}), for $V > 0$. When $V=0$, $W_\loc^\mathrm{cs,cu}(\mathrm{Per}_2(f))\cap S_0\setminus\set{P_1}$ forms the local strong-stable/unstable one-dimensional submanifold (as defined in equation \eqref{eq:ss-manifold} of Section \ref{b3}
) on $S_0$ consisting of two smooth branches that connect smoothly (when viewed as submanifolds of $\R^3$) at $P_1$. Points of the strong-stable manifold converge to $P_1$ under iterations of $f$; the same happens under iterations of $f^{-1}$ for points on the strong-unstable manifold. The strong-stable and strong-unstable manifolds intersect at $P_1$.

We have
\begin{align*}
\frac{d(I\circ\vartheta)}{dx} \neq 0
\end{align*}
for all $x\neq 1$. Hence  $\vartheta$,  and therefore also $W_\loc^\mathrm{cs,cu}(\mathrm{Per}_2(f))$, intersect $S_V$ transversely for $V > 0$. On the other hand, $W_\loc^\mathrm{cs,cu}(\mathrm{Per}_2(f))$ has quadratic tangency with $S_0$ along the strong-stable/unstable submanifold (see also \cite[Section 4]{Damanik2009}). This is a crucial point to make, since the proof of Theorem \ref{thm:main} relies heavily on transversality of intersection of the line of initial conditions, $\gamma_{(J_0, J_1)}$, from \eqref{model_eq5}, with $W^\mathrm{cs}(\mathrm{Per}_2(f))$ (which is defined below). In order to prove transversality, we proceed by perturbation analysis starting with $\gamma_{(J_0, J_1)}$ with $J_0 = J_1$. In this case $\gamma$ lies on $S_0$, and since $W^\mathrm{cs}(\mathrm{Per}_2(f))$ is tangent to $S_0$, $\gamma$ is tangent to $W^\mathrm{cs}(\mathrm{Per}_2(f))$ (this shows in particular that proving transversality for $J_0 \neq J_1$ is not a trivial task). We then show that if $J_0/J_1$ is close 
to $1$, and not equal to $1$, $\gamma_{(J_0, J_1)}$ is transversal to $W^\mathrm{cs}(\mathrm{Per}_2(f))$. The situation is further complicated by the fact that we do not have any explicit, or quantitative, knowledge about $W^{\mathrm{cs}}(\mathrm{Per}_2(f))$, and must rely only on qualitative analysis.

The manifolds $W_\loc^\mathrm{cs}(\mathrm{Per}_2(f))$ and $W_\loc^\mathrm{cu}(\mathrm{Per}(f))$ can be extended globally to
\begin{align*}
W^{cs}(\mathrm{Per}_2(f)) = \bigcup_{n\in\N}f^{-n}(W_\loc^\mathrm{cs}(\mathrm{Per}_2(f)))\end{align*}
and
\begin{align*}
W^\mathrm{cu}(\mathrm{Per}_2(f)) = \bigcup_{n\in\N}f^{n}(W_\loc^\mathrm{cu}(\mathrm{Per}_2(f))).
\end{align*}
In this case $W^\mathrm{s,u}(\mathrm{Per}_2(f)\cap S_V) = W^\mathrm{cs,cu}(\mathrm{Per}_2(f))\cap S_V$ for $V > 0$. For $V=0$, these form branches of global strong-stable and strong-unstable submanifolds of $S_0$ that connect at $P_1$ - these branches are injectively immersed one-dimensional submanifolds of $S_0$.

Similar results hold for $P_2, P_3, P_4$. Indeed, as $V$ takes on positive values, the points $P_2, P_3$ and $P_4$ bifurcate from three cycles to six cycles. These six cycles form three smooth curves, one through each $P_2, P_3$ and $P_4$. Considering $f^6$ in place of $f$, it is easy to show, as in the case of $P_1$ above, that each curve is normally hyperbolic. For further details, refer to Section \ref{a1}. 

Let us fix the following notation. Denote by $\mathbb{W}_i^\mathrm{s,u}$ the center-stable/unstable 2-dimensional invariant manifold to the normally hyperbolic curve through $P_i$. We denote by $\mathbb{W}_{i,\loc}^\mathrm{s,u}$ a small neighborhood of the normally hyperbolic curve in $\mathbb{W}_i^\mathrm{s,u}$. As an aside, we should mention that in fact the normally hyperbolic curve is precisely $\mathbb{W}_{i, \loc}^s\cap \mathbb{W}_{i,\loc}^u$, provided that the neighborhood $\mathbb{W}_{i,\loc}^{s,u}$ is taken sufficiently small, and this intersection is transversal (we should note that there exist, in fact \textit{infinitely many}, other intersections between $\mathbb{W}_i^s$ and $\mathbb{W}_i^u$).

In particular, the orbit $\mathcal{O}^{+}_{f_0}({x})$ (respectively, $\mathcal{O}^{-}_{f_0}({x})$), for ${x}\in S_0$, is bounded if and only if either ${x}\in\mathbb{S}$ or ${x}\in \mathbb{W}^\mathrm{s}_i\cap S_0$ (respectively, ${x}\in\mathbb{W}^\mathrm{u}_i\cap S_0$) for some $i \in\set{1,\dots, 4}$.

\subsection{Partial hyperbolicity: center-stable and center-unstable manifolds}\label{part2_2_3}

In the previous section we proved normal hyperbolicity of $f$ on certain submanifolds (namely, the curves of periodic points through the singularities) and derived the existence of two-dimensional analogs of stable and unstable manifolds. The result of the previous section is a special case of a more general fact, which is the subject of Proposition \ref{part2_prop7}. For the definition of a \textit{partially hyperbolic set} (as well as the notion of $(1,1,1)$ splitting), see Section \ref{b2}. Note also that the results of the previous section cannot be encapsulated into the following proposition, because in what follows, the results are stated for surfaces $S_V$, $V > 0$; due to presence of singularities $P_1,\dots, P_4$ in $S_0$, we had to treat the case of $S_0$ separately.  
\begin{prop}\label{part2_prop7}
Let
\begin{align*}
\mathcal{M} = \bigcup_{V\in(0, \infty)}S_V.
\end{align*}
For $V_1 \geq V_0 > 0$ let
\begin{align*}
\Omega = \bigcup_{V\in[V_0, V_1]}\Omega_V.
\end{align*}
Then $\mathcal{M}$ is a smooth, connected, $f$-invariant 3-dimensional submanifold of $\R^3$, $\Omega\subset\mathcal{M}$ is compact, $f$-invariant partially hyperbolic set with $(1,1,1)$ splitting, and the following holds.

There exist two families, denoted by $\mathcal{W}^\mathrm{s}$ and $\mathcal{W}^\mathrm{u}$, of smooth 2-dimensional connected manifolds injectively immersed in $\mathcal{M}$, whose members we denote by, respectively, $W^\mathrm{cs}$ and $W^\mathrm{cu}$, and call \textit{center-stable} and \textit{center-unstable} manifolds, with the following properties.
\begin{enumerate}
\item The family $\mathcal{W}^\mathrm{s,u}$ is $f$-invariant;
\item For every $x\in \Omega$ there exist unique $W^\mathrm{cs}(x)\in\mathcal{W}^{s}$ and $W^\mathrm{cu}(x)\in\mathcal{W}^\mathrm{u}$ such that $x\in W^\mathrm{cs}(x)\cap W^\mathrm{cu}(x)$;
\item Conversely, for every $W^\mathrm{cs}\in\mathcal{W}^\mathrm{s}$ and $W^\mathrm{cu}\in\mathcal{W}^\mathrm{u}$, there exist $x, y\in\Omega$ such that $x\in W^\mathrm{cs}$ and  $y\in W^\mathrm{cu}$. In fact, if $W_1^\mathrm{cs}, W_2^\mathrm{cs}\in\mathcal{W}^\mathrm{s}$ and $W_1^\mathrm{cu}, W_2^\mathrm{cu}\in\mathcal{W}^\mathrm{u}$ with $W_1^\mathrm{cs}\cap W_2^\mathrm{cs}\neq\emptyset$ and $W_1^\mathrm{cu}\cap W_2^\mathrm{cu}\neq\emptyset$, then $W_1^\mathrm{cs} = W_2^\mathrm{cs} = W^\mathrm{cs}(x)$ for some $x\in\Omega$, and $W_1^\mathrm{cu} = W_2^\mathrm{cu} = W^\mathrm{cu}(y)$ for some $y\in\Omega$.
\item For any $V > 0$ and any $W^\mathrm{cs, cu}\in\mathcal{W}^\mathrm{s,u}$, $W^\mathrm{cs, cu}\cap S_V = W^\mathrm{s,u}(x)$ for every $x\in\Omega_V\cap W^\mathrm{cs, cu}$; moreover, this intersection is transversal.
\end{enumerate}
\end{prop}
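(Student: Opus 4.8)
The plan is to obtain the family $\mathcal{W}^{\mathrm{s},\mathrm{u}}$ on $\mathcal{M}$ by gluing together, over $V\in[V_0,V_1]$, the one-dimensional stable and unstable manifolds of the hyperbolic sets $\Omega_V$ furnished by Theorem \ref{part2_thm1}, and to derive the partial hyperbolicity of $\Omega$ from the hyperbolicity of each $f_V$ together with the extra (neutral) direction transverse to the surfaces $S_V$. First I would set up coordinates adapted to the foliation $\{S_V\}$: since $I$ is a submersion on $\mathcal{M}$ (its gradient does not vanish for $V>0$ — this is where the singularities of $S_0$ are genuinely excluded, hence the hypothesis $V_0>0$), the surfaces $S_V$ foliate $\mathcal{M}$ smoothly, and $f$ permutes the leaves while fixing $I$. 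This gives, at each $p\in\Omega$, a splitting $T_p\mathcal{M}=T_pS_V\oplus N_p$ with $N_p=\ker Df^{*}dI$ a one-dimensional $Df$-invariant complement (invariant because $I\circ f=I$). On $T_pS_V$ the hyperbolic splitting $E^{\mathrm s}_p\oplus E^{\mathrm u}_p$ of Theorem \ref{part2_thm1} is available, uniformly in $V\in[V_0,V_1]$ by compactness; combined with the neutral direction $N_p$ (on which $Df$ has no expansion or contraction in a suitable metric, or at worst subexponential behaviour dominated by the hyperbolic rates), this yields the $(1,1,1)$ splitting $E^{\mathrm s}\oplus E^{c}\oplus E^{\mathrm u}$ on $\Omega$, with the cone/domination estimates inherited leaf-by-leaf. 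Compactness and $f$-invariance of $\Omega$ are immediate from Theorem \ref{part2_thm1} and the fact that $f$ preserves each $S_V$.

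Next I would construct the manifolds. Within each leaf $S_V$, Theorem \ref{part2_thm1} and the standard stable manifold theory (as invoked for Corollary \ref{part2_cor1}) give, for each $x\in\Omega_V$, the one-dimensional $W^{\mathrm s}(x)$ and $W^{\mathrm u}(x)$; moreover, two such stable manifolds in the same leaf either coincide or are disjoint, and similarly for unstable ones. To build the two-dimensional center-stable manifold I would take a point $x\in\Omega_{V}$ and push its local stable manifold along the normal direction $N$: because $N$ is $Df$-invariant and the local product structure of $\Omega$ varies smoothly with $V$, the union $\bigcup_{V'} W^{\mathrm s}_{\mathrm{loc}}(x_{V'})$ over the nearby leaves, where $x_{V'}$ is the continuation of $x$ in $\Omega_{V'}$, is a smooth two-dimensional disc transverse to each $S_V$; globalizing by $W^{\mathrm{cs}}(x)=\bigcup_{n\ge 0}f^{-n}(W^{\mathrm{cs}}_{\mathrm{loc}}(f^n x))$ gives an injectively immersed smooth surface. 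This is exactly the partially hyperbolic center-stable manifold theorem (Hirsch–Pugh–Shub), which I would cite from Appendix \ref{b2}; property (1), $f$-invariance of the family, is built into this construction, and property (4), $W^{\mathrm{cs}}\cap S_V=W^{\mathrm s}(x)$ with transversal intersection, follows because $W^{\mathrm{cs}}$ is transverse to the neutral direction by construction while $W^{\mathrm s}(x)$ is its trace in the leaf. Properties (2) and (3) — existence, uniqueness, and the "coincide or disjoint" dichotomy — then follow from the corresponding facts inside each leaf (Theorem \ref{part2_thm1} gives $\Omega_V$ locally maximal, so stable/unstable manifolds through distinct points of $\Omega_V$ are distinct and the leaves are the stable/unstable sets), transported across leaves via the smooth continuation of $\Omega_V$.

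The main obstacle I anticipate is the smoothness and coherence of the continuation $V\mapsto \Omega_V$ and of its stable/unstable manifolds as $V$ ranges over $[V_0,V_1]$: a priori Theorem \ref{part2_thm1} is stated leafwise, so one must check that the hyperbolic continuation is well-defined (structural stability / persistence of hyperbolic sets applied to the one-parameter family $f_V$, using that $f$ depends analytically on the base point and the leaves vary smoothly), that the resulting union over $V$ is genuinely a smooth submanifold of $\R^3$ and not merely a measurable lamination, and that "locally maximal" is preserved so that the dichotomy in (3) is clean. Handling this is precisely the place where one needs $V$ to stay in a compact interval bounded away from $0$, so that the hyperbolicity constants and the sizes of the local manifolds are uniform; near $V=0$ the argument would break because of the conic singularities $P_1,\dots,P_4$, which is why $S_0$ is excluded from $\mathcal{M}$. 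I would phrase the smooth-dependence step as an application of the general persistence theorem for normally hyperbolic (here, uniformly hyperbolic) invariant sets, referring to the appendix rather than reproving it, and then the remaining items are bookkeeping.
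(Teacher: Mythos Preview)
Your overall strategy coincides with the paper's: exploit the leafwise hyperbolicity from Theorem \ref{part2_thm1}, assemble the center-stable manifold as $\bigcup_V W^{\mathrm s}_{\mathrm{loc}}(H_V(x))$ over the structural-stability continuation $H_V$, and cite Hirsch--Pugh--Shub for smoothness. The identification of the main obstacle (smooth coherence of $V\mapsto\Omega_V$ and uniformity of constants on $[V_0,V_1]$) is also on the mark.

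There is, however, a genuine slip in your setup of the splitting. The expression $N_p=\ker Df^{*}dI$ does not give a complement: since $I\circ f=I$ one has $f^*dI=dI$, so $\ker(f^*dI)=\ker dI=T_pS_V$ itself. More to the point, the relation $I\circ f=I$ only yields $Df$-invariance of $T_pS_V$, not of any particular transversal line; the normal $\mathbb{R}\nabla I(p)$, for instance, is \emph{not} $Df$-invariant (one computes $Df_p^T\nabla I(f(p))=\nabla I(p)$, which is invariance under the transpose, not under $Df$). So you cannot simply declare a geometric $E^c$ and read off the $(1,1,1)$ splitting. The paper sidesteps this by passing to explicit skew-product coordinates: fix $\widetilde V>0$, choose smooth diffeomorphisms $\pi_V:S_{\widetilde V}\to S_V$, and conjugate $f$ to $\mathcal G(V,x)=(V,\ \widetilde\pi^{-1}\!\circ f\circ\widetilde\pi(V,x))$ on $I\times S_{\widetilde V}$. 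In these coordinates the partial hyperbolicity of $\Omega$ follows at once from the hyperbolicity of $\Omega_{\widetilde V}$, and the center bundle is produced dynamically rather than prescribed in advance. For item (4) the paper also gives a sharper transversality argument than ``by construction'': since $H_V$ is the fixed point of a contraction depending Lipschitz-continuously on $V$ (via \cite{Hirsch1970}), one has $\norm{H_V(x)-H_{V_0}(x)}\le C\abs{V-V_0}$; combined with $I(H_V(x))=V$ this forces the tangent of the curve $V\mapsto H_V(x)=W^{\mathrm{cs}}_{\mathrm{loc}}\cap W^{\mathrm{cu}}_{\mathrm{loc}}$ out of $\ker dI=TS_{V_0}$, giving the transversality of $W^{\mathrm{cs,cu}}$ with $S_V$.
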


\begin{proof}
All statements about $\mathcal{M}$, as well as compactness of $\Omega$, are trivially true. The surfaces $\set{S_V}$, $V > 0$, are all diffeomorphic. Fix $\widetilde{V} > 0$, and let $\pi_V: S_{\widetilde{V}}\rightarrow S_V$ be a diffeomorphism, depending smoothly on $V$, with $\pi_{\widetilde{V}} = \mathrm{Id}|_{S_{\widetilde{V}}}$. Now, after a smooth change of coordinates, $f:\mathcal{M}\rightarrow\mathcal{M}$ may be considered as a skew product of identity on an interval $I$ with a map on $S_{\widetilde{V}}$:

\begin{center}
\begin{tikzpicture}
 \matrix (m) [matrix of math nodes,
	      row sep = 3em,
	      column sep = 4em,
	      minimum width = 2em,text height=1.5ex, text depth=0.25ex]
{
  I\times S_{\widetilde{V}}	&	I\times S_{\widetilde{V}}\\
  \bigcup_V S_V			&	\bigcup_V S_V\\
};
\path[-stealth]
  (m-1-1)	edge node [left]	{$\widetilde{\pi}$}	(m-2-1)
		edge node [above]	{$\mathcal{G}$}		(m-1-2)
  (m-2-1)	edge node [above]	{$f$}			(m-2-2)
  (m-1-2)	edge node [right]	{$\widetilde{\pi}$}	(m-2-2);
\end{tikzpicture}
\end{center}

where $\widetilde{\pi}(V, x) = \pi_V(x)$, and $\mathcal{G}(V, x) = (V, \tilde{\pi}^{-1}\circ f \circ \tilde{\pi}(V, x))$. Now all statements about $\Omega$ follow from Theorem \ref{part2_thm1} (in particular, noting that $\Omega_{\widetilde{V}}$ is hyperbolic).

We now construct the family $\mathcal{W}^\mathrm{s}$. The family $\mathcal{W}^\mathrm{u}$ can be constructed similarly by considering $f^{-1}$ in place of $f$.

Fix $\widetilde{V} > 0$ and $x\in \Omega_{\widetilde{V}}$. Take $\delta > 0$ small (in particular so that $\widetilde{V} - \delta > 0$) and for $V\in [\widetilde{V}-\delta, \widetilde{V}+\delta]$ let $H_V: \Omega_{\widetilde{V}}\rightarrow\Omega_{V}$ be the topological conjugacy (see Section \ref{b1-1}). Then $W^\mathrm{cs}_\loc(x) = \bigcup_{V\in(\widetilde{V} - \delta, \widetilde{V}+\delta)}W_\loc^\mathrm{s}(H_V(x))$ is a smooth two-dimensional manifold (see \cite[Section 6]{Hirsch1977} for proof of smoothness) that can be extended to all $V$, and the sought $W^\mathrm{cs}(x)$ is then given by 
\begin{align*}
W^\mathrm{cs}(x) = \bigcup_{n\in\N} f^{-n}(W_\loc^\mathrm{cs}(x)).
\end{align*}
The collection $\set{W^\mathrm{cs}(x)}_{x\in\Omega_{\widetilde{V}}}$ gives the family $\mathcal{W}^\mathrm{s}$.

Fix $x\in\Omega$ and consider the curve $V\mapsto H_V(x)$ in a neighborhood of $x$. This curve is the intersection of $W^\mathrm{cs}_\loc(x)$ with $W^\mathrm{cu}_\loc(x)$, hence is smooth. Since $f_V$ depends smoothly, and hence Lipschitz-continuously (when restricted to a compact subset), on $V$, by \cite[Theorem 7.3]{Hirsch1970}, $H_V$ is the fixed point of a contracting map on a certain Banach space that depends Lipschitz-continuously on $V$. So by \cite[Theorem 1.1]{Hirsch1970}, if $V_0 > 0$ is fixed, there exists $C > 0$ such that for all $V$ close to $V_0$,
\begin{align*}
\frac{\norm{H_V(x) - H_{V_0}(x)}}{\abs{V - V_0}} \leq C,
\end{align*}
proving transversality of intersection of $\set{H_V(x)}_{V}$ with $S_{V_0}$. Hence $W^\mathrm{cs, cu}(x)$ intersects $S_{V}$, $V > 0$, transversely, as claimed.
\end{proof}
\begin{rem}\label{part2_rem9}
From the proof of Proposition \ref{part2_prop7} and Section \ref{b1-2} it is evident that the local center-stable and center-unstable manifolds at $x$, $W^\mathrm{cs}_\loc(x)$ and $W^\mathrm{cu}_\loc(x)$, depend continuously on the point $x$ in the $C^1$ topology. Consequently, by compactness we obtain uniform (in $x$) transversality of $W^\mathrm{cs, cu}_\loc(x)$ with the surface $S_V$, $V> 0$.
\end{rem}
As a consequence of Proposition \ref{part2_prop7}, we obtain
\begin{prop}\label{part2_cor9}
Let $\mathcal{M}$ be as in Proposition \ref{part2_prop7}. Suppose $\gamma: [0, 1]\rightarrow\mathcal{M}$ is smooth and regular. Suppose further that $\gamma$ intersects the members of $\mathcal{W}^\mathrm{s}$ transversely with its endpoints not lying on center-stable manifolds. Then the set of all points $x\in[0,1]$ for which $\mathcal{O}^+_f(f\circ\gamma(x))$ is bounded is a Cantor set.
\end{prop}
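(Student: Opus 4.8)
The plan is to identify the set in question, call it $C$, with $\gamma^{-1}\!\left(\bigcup\mathcal{W}^{\mathrm s}\right)$, and then to exploit the local product structure of the center-stable lamination together with the transversality of $\gamma$.

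First I would note that $\mathcal{O}_f^+(f\circ\gamma(x))$ differs from $\mathcal{O}_f^+(\gamma(x))$ by a single point, so the two are bounded simultaneously; by Corollary \ref{part2_cor1}, with $V=I(\gamma(x))>0$, this happens precisely when $\gamma(x)\in W^{\mathrm s}(\Omega_V)$. Proposition \ref{part2_prop7}(4) says that for $V>0$ every center-stable manifold meets $S_V$ in the stable manifold (within $S_V$) of a point of $\Omega_V$, while $W^{\mathrm s}(y)=W^{\mathrm{cs}}(y)\cap S_V$ for $y\in\Omega_V$; hence $W^{\mathrm s}(\Omega_V)=S_V\cap\bigcup\mathcal{W}^{\mathrm s}$ for every $V>0$, and therefore
\begin{align*}
C:=\set{x\in[0,1]:\mathcal{O}_f^+(f\circ\gamma(x))\text{ is bounded}}=\gamma^{-1}\!\left(\textstyle\bigcup\mathcal{W}^{\mathrm s}\right).
\end{align*}
Since $\gamma([0,1])\subset\bigcup_{V\in[V_0,V_1]}S_V$ for some $0<V_0\le V_1<\infty$, Lemma \ref{part2_lem0_1} shows the set of points of this strip with bounded forward semiorbit is closed, so $C$ is closed in $[0,1]$, hence compact; and $C\neq\emptyset$ is exactly the hypothesis that $\gamma$ meets $\bigcup\mathcal{W}^{\mathrm s}$.

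The heart of the matter is to show $C$ is perfect and totally disconnected, and for this I would establish that near each of its points $\bigcup\mathcal{W}^{\mathrm s}$ is a $C^1$ lamination by $2$-dimensional leaves with a Cantor transversal. Fix $x_0\in C$, let $p_0=\gamma(x_0)$, $V_0=I(p_0)$, so $p_0\in W^{\mathrm{cs}}(y_0)$ for some $y_0\in\Omega_{V_0}$. For $n$ large, $f^n(p_0)$ lies in a small neighborhood of $\Omega$; there, the local product structure of the locally maximal hyperbolic sets $\Omega_V$ (Theorem \ref{part2_thm1}, together with Remark \ref{part2_rem5}, which says the sets $W^{\mathrm u}_\loc(\cdot)\cap\Omega_V$ are Cantor sets) and the $C^1$ continuity of $W^{\mathrm{cs}}_\loc(\cdot)$ in the base point (Remark \ref{part2_rem9}) exhibit $\bigcup\mathcal{W}^{\mathrm s}$ as such a lamination, and applying the diffeomorphism $f^{-n}$ transports this to a neighborhood $U$ of $p_0$. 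Concretely one obtains a $C^1$ chart $\Psi:U\to(a,b)\times D^2$ with $\Psi\!\left(U\cap\bigcup\mathcal{W}^{\mathrm s}\right)=K\times D^2$, $K\subset(a,b)$ a Cantor set, the leaves corresponding to the slices $\set{k}\times D^2$. As $\gamma$ is transverse to the members of $\mathcal{W}^{\mathrm s}$ and meets one at $x_0$, in this chart $\Psi\circ\gamma(x)=(g(x),h(x))$ with $g'(x_0)\neq0$; shrinking the domain so that $g$ is a diffeomorphism onto its image, $C$ agrees near $x_0$ with $g^{-1}(K)$, which is homeomorphic to $K$ intersected with an open interval about $g(x_0)\in K$. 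Thus every point of $C$ is an accumulation point of $C$, and no point of $C$ has a neighborhood in $C$ containing an interval; since $C\subset\R$, $C$ is perfect and totally disconnected. Combined with the previous paragraph, $C$ is a Cantor set.

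I expect the main obstacle to be precisely this geometric input: that $\bigcup\mathcal{W}^{\mathrm s}$ is locally a $C^1$ lamination with Cantor transversal near an \emph{arbitrary} one of its points, not just near points of $\Omega$. The delicacy is that $p_0$ may lie far from $\Omega$ along $W^{\mathrm{cs}}(y_0)$ and that the center-stable manifolds are only injectively immersed, so a priori $\bigcup\mathcal{W}^{\mathrm s}$ could accumulate on itself near $p_0$ in a complicated way; this is circumvented by the $f$-invariance of the family $\mathcal{W}^{\mathrm s}$, which pushes the local question into a neighborhood of the hyperbolic set $\Omega$, where local maximality supplies the required product structure. Everything else — the identification of $C$, its closedness, and the transversality bookkeeping in the chart — is routine once Proposition \ref{part2_prop7}, Corollary \ref{part2_cor1}, and Lemma \ref{part2_lem0_1} are available.
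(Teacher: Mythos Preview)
Your proposal is correct and follows essentially the same approach as the paper: identify $C=\gamma^{-1}\!\left(\bigcup\mathcal{W}^{\mathrm s}\right)$, invoke Lemma \ref{part2_lem0_1} for compactness, and then deduce perfectness and total disconnectedness from the Cantor transversal structure supplied by Remark \ref{part2_rem5} and Proposition \ref{part2_prop7}. The paper's proof is a two-line sketch that simply cites these ingredients; your argument fills in precisely the detail the paper suppresses---namely, how to transport the local product structure from a neighborhood of $\Omega$ back to a neighborhood of an arbitrary $p_0\in\bigcup\mathcal{W}^{\mathrm s}$ via $f^{-n}$---so the two proofs are the same in substance.
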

\begin{proof}
By construction of the center-stable manifolds, for $s\in[0,1]$, $\gamma(s)$ has a bounded forward orbit if and only if $\gamma(s)$ belongs to a center-stable manifold.

Compactness follows from Lemma \ref{part2_lem0_1}. Now absence of isolated points and total disconnectedness follow from Remark \ref{part2_rem5} and Proposition \ref{part2_prop7}.
\end{proof}

We are ready to prove Theorem \ref{thm:main}.

\section{Proof of main results}\label{part3_1}

In this section we prove Theorem \ref{thm:main}. The proof of the theorem is inherently technical and involves nontrivial notions from the theory of hyperbolic and partially hyperbolic dynamical systems. To make the reading easier where possible, we have included as a separate section the necessary notions from the theory of dynamical systems in Appendix \ref{b}. Throughout this section, references are made to the appendix where appropriate. The reader is also advised to use Section \ref{sec:strategy} as a road map.

\subsection{A short comparison and contrast with Schr\"odinger operators}\label{sec:cc}

Those readers who are familiar with the results and methods of spectral theory of discrete one-dimensional quasi-periodic Schr\"odinger operators may justly inquire as to why we couldn't simply apply the methods that have already been developed for Schr\"odinger operators (see \cite{Damanik200X} for a broad overview). To answer this question, let us quickly recall the setup for the Schr\"odinger operator, denoted by $H$. The operator acts on $\ell^2(\Z)$ as
\begin{align*}
 (H\phi)_n = \phi_{n-1} + \phi_{n+1} + V\omega_n\phi_n,
\end{align*}
where $V \in\R$, and $\omega$ is the sequence obtained by performing the Fibonacci substitution on the letters $0$ and $1$, starting with, say, $0$: $0\mapsto 01\mapsto 010\mapsto 01001\mapsto\cdots$. This sequence can then be naturally extended to the left. For details, see the recent survey \cite{Damanik200X} and references therein. It turns out that there exists $\gamma: \R\rightarrow\R^2$, explicitly given by
\begin{align*}
 \gamma(E) = \left(\frac{E - V}{2}, \frac{E}{2}, 1\right),
\end{align*}
such that $E$ belongs to the spectrum of $H$ if and only if $\mathcal{O}^+_f(\gamma(E))$ is bounded. On the other hand, we have $I(\gamma(E)) = V^2/4$, which is clearly independent of $E$ and is non-negative (and is zero if and only if $V = 0$, which corresponds to the \textit{free Laplacian} case, whose spectrum is $[-2, 2]$). Thus the action of $f$ needs to be considered only on one surfaces, $S_{V^2/4}$, for any chosen and fixed $V > 0$ (this is not to say that the problem is trivial -- far from it!). In our case, however, as we shall soon see, the value of the invariant $I$ depends on the spectral parameter, forcing us to consider the action of $f$ on all $S_{V \geq 0}$ at once. It turns out that this is also what is responsible for multifractality (i.e. nonuniform local scaling properties) of the spectrum (in contrast to the case of Schr\"odinger operators). Let us conclude by mentioning that we have encountered the same difficulties (with the same consequences of multifractality) in a few other models (
see \cite{Damanik0000my1, Yessen2011a, Yessen2012a}).

\subsection{Preliminary technical platform}

The appearance of $\lambda^2$ in $\gamma_{(J_0,J_1)}$ in \eqref{model_eq5} makes $\gamma$ symmetric in $\lambda$ with respect to the origin. By abuse of notation, let us write $\lambda$ in place of $\lambda^2$, where $\lambda$ is allowed to take values in $[0,\infty)$.

Take $r = J_0/J_1$ and, for convenience, let us also write $\gamma_r$ in place of $\gamma_{(J_0,J_1)}$. Hence

\begin{align}\label{eq:new_gamma}
\gamma_r(\lambda) = \left(\frac{\lambda - (1 + J_1^2)}{2J_1},\frac{\lambda - (1 + r^2J_1^2)}{2rJ_1},\frac{1+r^2}{2r}\right).
\end{align}

\begin{prop}\label{part2_prop11}
For every $J_1 > 0$, there exists $r_0 = r_0(J_1)\in(0,1)$, such that for all $r\in(1 - r_0, 1 + r_0)$ and $r\neq 1$, the curve $\gamma_r$ in \eqref{eq:new_gamma} intersects the center-stable manifolds transversely.
\end{prop}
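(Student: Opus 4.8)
The plan is to prove transversality by a perturbation argument anchored at the degenerate parameter $r = 1$, where $\gamma_1$ lies entirely on the singular surface $S_0$. First I would establish that for $r = 1$ the image $\gamma_1([0,\infty))$ lands in $S_0$ (check that $I(\gamma_1(\lambda)) \equiv 0$; this is a direct substitution into \eqref{part2_eq7} using the explicit form \eqref{eq:new_gamma}), and moreover identify where it sits relative to the cube $\mathbb{S}$ from \eqref{center_part} and the singularity $P_1$. The key geometric fact from Section \ref{part2_2_2} is that the center-stable manifold $W^\mathrm{cs}(\mathrm{Per}_2(f))$ is tangent to $S_0$ along the strong-stable submanifold and has a \emph{quadratic} tangency with $S_0$ there. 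So at $r = 1$ the curve $\gamma_1$, being confined to $S_0$, is necessarily tangent to $W^\mathrm{cs}$ wherever it meets it — transversality fails in the most degenerate possible way, and the whole difficulty is to show that turning on $r \neq 1$ lifts $\gamma_r$ off $S_0$ in a direction that produces genuine (one-dimensional-in-$\mathcal{M}$) transversal crossings.

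The main step is a quantitative study of how $\gamma_r$ leaves $S_0$ as $r$ moves away from $1$. I would compute $\partial_r\big|_{r=1} I(\gamma_r(\lambda))$ and show it does not vanish (for $\lambda$ in the relevant range), so that $\gamma_r$ crosses the surfaces $S_V$ transversally in $\mathcal{M}$ for $r$ near but not equal to $1$ — this uses the same Hirsch–Pugh–Shub machinery invoked in the proof of Proposition \ref{part2_prop7}, together with Remark \ref{part2_rem9} giving \emph{uniform} transversality of the local center-stable manifolds $W^\mathrm{cs}_\loc(x)$ with the surfaces $S_V$, $V > 0$. The strategy is: a point $p = \gamma_r(\lambda)$ with bounded forward orbit lies on some $W^\mathrm{cs}(x)$, hence $p \in S_V$ with $V = I(p) > 0$ and $p \in W^\mathrm{cs}(x) \cap S_V = W^\mathrm{s}(x)$ inside $S_V$; transversality of $\gamma_r$ to $W^\mathrm{cs}(x)$ in $\mathcal{M}$ then follows from combining (a) transversality of $\gamma_r$ to $S_V$ in $\mathcal{M}$, controlled by the nonvanishing of $\partial_r I\circ\gamma_r$, with (b) the behaviour of the $S_V$-component of $\gamma_r$, which for $r$ close to $1$ is a small perturbation of the $S_0$-tangential picture and must be shown to still transversally meet the one-dimensional $W^\mathrm{s}(x)$ inside $S_V$ — here one exploits the quadratic (rather than higher-order) nature of the tangency at $r=1$, so that a linear-in-$(r-1)$ displacement off $S_0$ already suffices. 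A compactness/continuity argument over the (compact) range of $\lambda$ for which $\gamma_r(\lambda)$ can have bounded orbit, using the uniform transversality from Remark \ref{part2_rem9} and the $C^1$-continuity of $W^\mathrm{cs}_\loc(x)$ in $x$, then yields a single threshold $r_0 = r_0(J_1)$ working for all such $\lambda$ simultaneously.

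I expect the main obstacle to be precisely step (b): since, as the authors emphasize, there is no explicit or quantitative description of $W^\mathrm{cs}(\mathrm{Per}_2(f))$ away from a neighborhood of the singularities, one cannot simply compute the crossing angle. The argument must be purely qualitative — extracting from normal hyperbolicity of the curve $\mathrm{Per}_2(f)$ (Section \ref{a1}, Theorem \ref{a1_thm1}) and from the quadratic tangency along the strong-stable manifold enough control on the $1$-jet of $W^\mathrm{cs}$ transverse to $S_0$ to guarantee that the first-order-in-$(r-1)$ deviation of $\gamma_r$ is not itself tangent to $W^\mathrm{cs}$. A secondary subtlety is the behaviour near $\lambda$ values where $\gamma_1(\lambda)$ is close to the singular point $P_1$ (where $W^\mathrm{cs}$ is only the \emph{center}-stable object of a normally hyperbolic curve rather than a stable manifold of a hyperbolic set); there one would invoke the normally-hyperbolic persistence theory of Section \ref{b3} rather than the hyperbolic conjugacy $H_V$, but the perturbation scheme is the same. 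Finally one must confirm the endpoint condition — that the endpoints of $\gamma_r$ (as a curve on the relevant compact $\lambda$-interval) do not lie on center-stable manifolds — which follows from Proposition \ref{part2_prop0} by checking the escape criterion at the extreme values of $\lambda$.
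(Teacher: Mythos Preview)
Your high-level outline is in the right spirit, but step (b)---which you yourself flag as the obstacle---is where essentially all of the paper's work goes, and your proposal does not supply the mechanism that makes it go through. First a factual slip: $\partial_r\big|_{r=1} I(\gamma_r(\lambda)) = 0$, since $I(\gamma_r(\lambda)) = \tfrac{\lambda}{4}(1/r - r)^2$ has a double zero at $r=1$. What is relevant is $\partial_\lambda I(\gamma_r(\lambda)) = \tfrac{1}{4}(1/r-r)^2$, nonzero for $r\neq 1$, and the paper uses the identity $\partial_\lambda I = I/\lambda$ to get the bound $\measuredangle(\gamma_r'(\lambda), S_V) \leq C_0 V$.

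The essential missing idea is an \emph{invariant cone field}. On each $\mathbb{S}_{V,U}$ (the compact piece of $S_V$ away from the singularities) the paper defines $K_V^\eta(x) = \{(\mathbf{u},\mathbf{v})\in T_xS_V\oplus (T_xS_V)^\perp: \mathbf{u}\in\mathcal{K}_V^u(x),\ \|\mathbf{v}\|\leq \eta\sqrt{V}\|\mathbf{u}\|\}$, where $\mathcal{K}_V^u$ is an unstable cone in $T_xS_V$ pulled over from the Anosov torus model via the semiconjugacy $F$. The factor $\sqrt{V}$ is the crux: because $\mathbb{W}_i^{\mathrm{s}}$ has \emph{quadratic} tangency with $S_0$, at points of $S_V$ it makes angle $\sim\sqrt{V}$ with $S_V$, while $\gamma_r'$ makes angle $\sim V \ll \sqrt{V}$; hence vectors in $K_V^\eta$ are transversal to $\mathbb{W}_i^{\mathrm{s}}$ at a fundamental domain near the periodic curve. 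One then proves (Lemmas \ref{cones_on_compacta}--\ref{invariance_3d_cones}, plus a local linearization Lemma \ref{technical} to handle passages near the singularities $P_i$) that these cones are preserved under $Df$, so transversality propagates from the fundamental domain back along all of $\mathbb{W}_i^{\mathrm{s}}$. Finally one checks that $\gamma_r'$ actually lies in $K_V^\eta$---with separate treatments for $J_1\neq 1$ and $J_1=1$, the latter requiring an extra estimate (Lemma \ref{bound_on_v_to_lambda}) since $\gamma_r(0)$ sits close to $P_1$---and density of $\{\mathbb{W}_i^{\mathrm{s}}\}$ in the center-stable lamination (Lemma \ref{part2_lem13}) finishes the proof. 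Your proposal contains neither the cone construction, nor its invariance under iteration, nor the $V$ versus $\sqrt{V}$ comparison that turns ``quadratic tangency'' into a usable inequality; without these, the claim that ``linear-in-$(r-1)$ displacement already suffices'' remains an assertion.
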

\begin{proof}
We begin by showing that $\gamma_r$ intersects uniformly transversely the manifolds $\mathbb{W}^\mathrm{s}_i$, $i \in \set{1,\dots,4}$. We shall see later (actually, we won't prove this but point to the work of S. Cantat \cite{Cantat2009}) that $\mathbb{W}_i^s$, for $i\in\set{1,\dots,4}$, form a dense (in the $C^1$ topology) subfamily of the family $\mathcal{W}^s$ from Proposition \ref{part2_prop7}, and the conclusion of Proposition \ref{part2_prop11} will follow. We shall then combine this result with Proposition \ref{part2_cor9}. 

\begin{lem}\label{part2_lem12}
For all $r$ sufficiently close to one and not equal to one, $\gamma_r$ intersects $\mathbb{W}^\mathrm{s}_i$, for $i\in\set{1,\dots,4}$, uniformly transversely.
\end{lem}
\begin{proof}
Returning to the map $\mathcal{A}$ in \eqref{torus_map}, we see that $\mathcal{A}$ is hyperbolic and is given by the matrix $A = \left(\begin{smallmatrix}1 & 1\\ 1 & 0\end{smallmatrix}\right)$ with eigenvalues
\begin{align*}
\mu = \frac{1 + \sqrt{5}}{2}\text{\hspace{5mm}and\hspace{5mm}}-\mu^{-1}=\frac{1 - \sqrt{5}}{2}.
\end{align*}
Let us denote by $\vt{v}^\mathrm{s}, \vt{v}^\mathrm{u}\in\R^2$ the stable and unstable eigenvectors of $A$:
\begin{align*}
A\vt{v}^\mathrm{s} = -\mu^{-1}\vt{v}^\mathrm{s},\hspace{2mm}A\vt{v}^\mathrm{u} = \mu\vt{v}^\mathrm{u},\hspace{2mm}\norm{\vt{v}^\mathrm{s,u}} = 1.
\end{align*}
Fix some small $\zeta > 0$ (in general we want $\zeta < 1$) and define the stable and unstable cone fields on $\R^2$ in the following way:
\begin{align}\label{cones_on_torus}
K^\mathrm{s}(p)& = \set{\vt{v}\in T_p\R^2: \vt{v} = v^\mathrm{u}\vt{v}^\mathrm{u} + v^\mathrm{s}\vt{v}^\mathrm{s}, \abs{v^\mathrm{s}}\geq\zeta^{-1}\abs{v^\mathrm{u}}},\\
K^\mathrm{u}(p)& = \set{\vt{v}\in T_p\R^2: \vt{v} = v^\mathrm{u}\vt{v}^\mathrm{u} + v^\mathrm{s}\vt{v}^\mathrm{s}, \abs{v^\mathrm{u}}\geq\zeta^{-1}\abs{v^\mathrm{s}}}.\notag
\end{align}
These cone fields are invariant:
\begin{align*}
A(K_p^\mathrm{u})\subset\Int(K^\mathrm{u}(Ap))\text{\hspace{5mm}and\hspace{5mm}}A^{-1}(K^\mathrm{s}(p))\subset\Int(K^\mathrm{s}(A^{-1}p)).
\end{align*}
Also, the iterates of the map $A$ expand vectors from the unstable cones, and the iterates of the map $A^{-1}$ expand vectors from the stable cones. That is, there exists a constant $C > 0$ such that
\begin{align*}
&\forall\vt{v}\in K^\mathrm{u}(p)\text{\hspace{5mm}}\forall n\in\N\text{\hspace{5mm}} \norm{A^n\vt{v}} > C\mu^n\norm{\vt{v}},\\
&\forall\vt{v}\in K^\mathrm{s}(p)\text{\hspace{5mm}}\forall n\in\N\text{\hspace{5mm}} \norm{A^{-n}\vt{v}} > C\mu^n\norm{\vt{v}}.
\end{align*}
The families of cones $\set{K^\mathrm{s}(p)}_{p\in\R^2}$ and $\set{K^\mathrm{u}(p)}_{p\in\R^2}$ can also be considered on $\mathbb{T}^2 = \R^2/\Z^2$.

The differential of the semiconjugacy $F$ in \eqref{semiconjugacy}, $DF$, sends these cone families to $Df$-invariant stable and unstable cone families on $\mathbb{S}\setminus\set{P_1,\dots,P_4}$. Let us denote these images by $\set{\mathcal{K}^\mathrm{s}}$ and $\set{\mathcal{K}^\mathrm{u}}$, respectively. It is clear that away from a neighborhood of singularities, the cones $\mathcal{K}^{s,u}$ have nonzero size. The following lemma sates that the size of these cones is uniformly bounded away from zero on $\mathbb{S}\setminus\set{P_1,\dots, P_4}$.

\begin{lem}[{\cite[Lemma 3.1]{Damanik2009}}]\label{cone_projection}
The differential of the semiconjugacy $DF$ induces a map of the unit bundle of $\mathbb{T}^2$ to the unit bundle of $\mathbb{S}\setminus\set{P_1,\dots,P_4}$. The derivatives of the restrictions of this map to a fiber are uniformly bounded. In particular, the sizes of cones in families $\set{\mathcal{K}^\mathrm{s}}$ and $\set{\mathcal{K}^\mathrm{u}}$ are uniformly bounded away from zero.
\end{lem}
%
%
%
%
\begin{figure}[t]
\centering
\includegraphics[scale=.4]{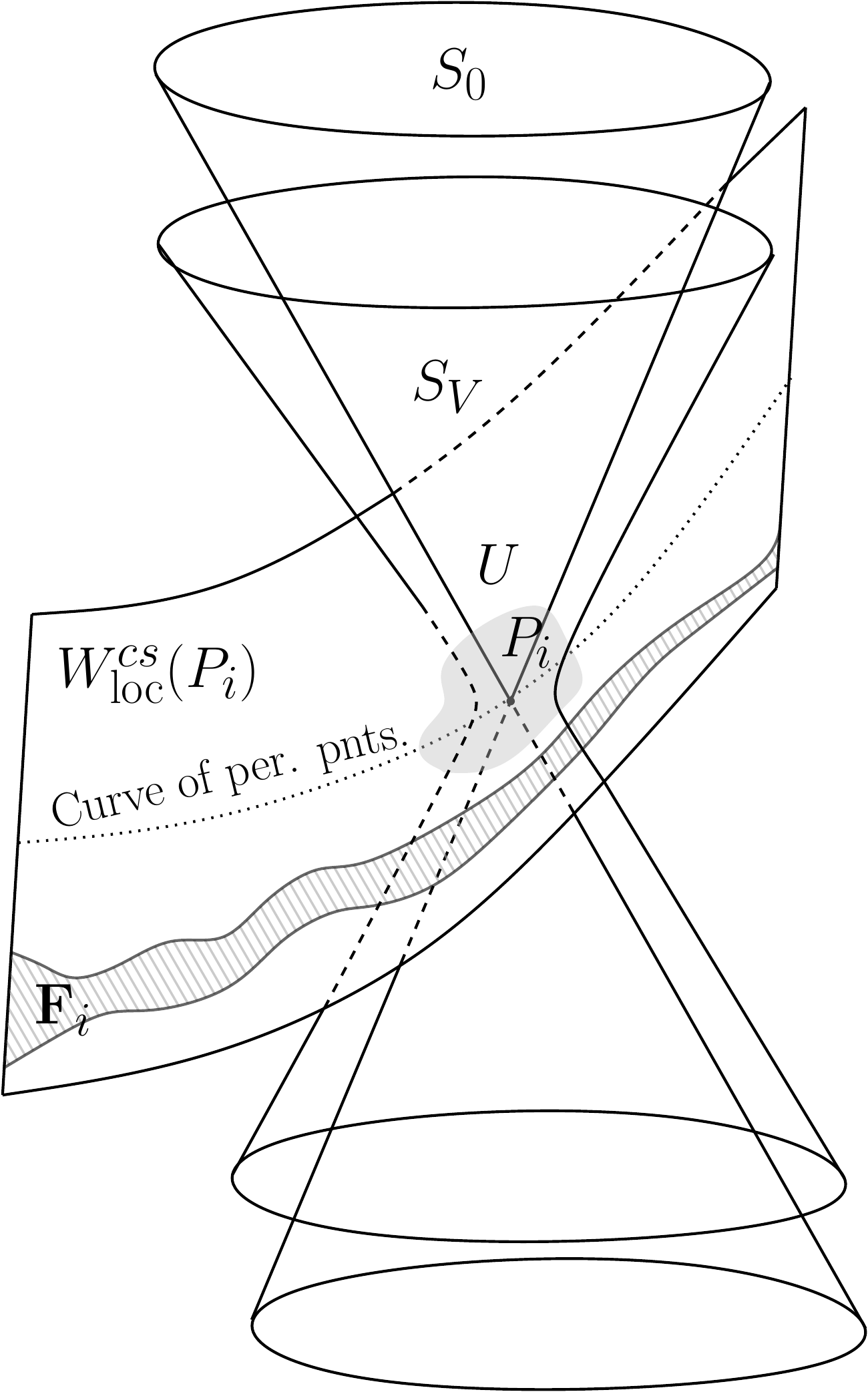}
\caption{}
\label{fig_fund_domain}
\end{figure}
Fix $j_0\in \set{1,\dots, 4}$. Let $\alpha$ be a smooth curve in $\mathbb{W}^s_{j_0}$ that is $C^1$ close to, and disjoint from, the curve of periodic points through the singularity $P_{j_0}$. Assume also that $\alpha\cap S_0\in\mathbb{S}$. Let $\mathbb{F}_{j_0}$ be the fundamental domain in $\mathbb{W}^s_{j_0}$ bounded by $\alpha$ and $f^{-1}(\alpha)$ (for the definition of a fundamental domain, see Section \ref{b1-3}). Let $U$ be a neighborhood of $\set{P_1,\dots,P_4}$ in $\R^3$ so small, that $U\cap \mathbb{F}_{j_0} = \emptyset$ (see Figure \ref{fig_fund_domain}). 

Given $V_0 > 0$ sufficiently small, for all $V\in [0, V_0)$, the surface $S_V\setminus U$ consists of five smooth connected components (with boundary), one of which is compact. Let $\mathbb{S}_{V, U}$ denote the compact component. The family $\set{\mathbb{S}_{V, U}}_{V\in [0, V_0)}$ depends smoothly on $V$, that is, there exists a family of smooth projections, depending smoothly on $V$: 
\begin{align*}
\set{\pi_V: \mathbb{S}_{0, U}\rightarrow \mathbb{S}_{V, U}}_{V\in [0, V_0)},\text{\hspace{5mm}}\pi_0 = \mathrm{Id}_{\mathbb{S}_{0,U}}.
\end{align*}
Assuming $V_0$ is sufficiently small, $D\pi_V$ carries the cones $\set{\mathcal{K}^\mathrm{s,u}}$ to nonzero cones on $\mathbb{S}_{V, U}$; denote these cones by $\set{\mathcal{K}^\mathrm{s,u}_V}$.

In the next series of lemmas we shall construct in some special way some families of cone fields and show that they are invariant under the action by $Df$. The reason for doing this is the following. As we have already mentioned, we do not have any quantitative information about $\mathbb{W}^s_i$. In particular, we cannot check directly whether the given line, $\gamma_r$, is transversal to $\mathbb{W}^s_i$. However, based on qualitative properties of $\mathbb{W}_i^s$ and quantitative information that \textit{we do have} about the surfaces $S_V$, $V\geq 0$, and the line $\gamma_r$, we can construct a cone field with the following properties. 

\begin{itemize}
 \item This cone field is transversal to $\mathbb{W}_i^s$ at points in the set $\mathbb{F}_i\cap S_V$ for all $V> 0$ sufficiently small (here we use quadratic tangency of $\mathbb{W}_i^s$ with $S_0$);
 
 \item The line $\gamma_r$ falls inside this cone field for all $r$ sufficiently close to $1$ (we can check this directly from the explicit expression of $\gamma$);
 
 \item This cone field is invariant under the action by $Df$ (we use both, the geometry of $S_V$ and some dynamical properties of $f$).
\end{itemize}

We then take any point in $\gamma_r\cap \mathbb{W}_i^s$, say $\gamma_r(x)$, and iterate it by $f$ until it falls inside $\mathbb{F}_i$; say $f^k(\gamma_r(x))\in \mathbb{F}_i$. By the invariance of the constructed cone field, we must then have $Df^k(\gamma'(x))$ inside the cone at $f^k(x)$, which is transversal to $\mathbb{W}_i^s$. Thus $\gamma'(x)$ must have been transversal to $T_{\gamma(x)}\mathbb{W}^s_i$ to begin with.

Below, Lemmas \ref{cones_on_compacta}, \ref{invariance_of_cones} and \ref{uniform_expansion} establish invariance of $\set{\mathcal{K}_V^{s,u}}$, as well as scaling properties of vectors from these cones under the action by $Df$, by considering different cases. The final result is recorded in Corollary \ref{invariance_and_expansion}.

\begin{lem}\label{cones_on_compacta}
For all $N\in\N$ there exists $V_0 > 0$ sufficiently small such that for all $k\in\Z_{\geq 0}$ with $k\leq N$ and all $V\in [0, V_0)$, if $x\in \mathbb{S}_{V, U}$ and $f^k(x)\in \mathbb{S}_{V, U}$, then
\begin{align*}
Df^k_x(\mathcal{K}^\mathrm{u}_V(x))\subset\Int(\mathcal{K}^\mathrm{u}_V(f^k(x))).
\end{align*}
\end{lem}
\begin{proof}
Since $\pi_V$ depends smoothly on $V$, for $x\in\mathbb{S}_{0,U}$, the cones $\mathcal{K}^\mathrm{u}(x)$ and $\mathcal{K}^\mathrm{u}_V(\pi_V(x))$ are close provided that $V$ is close to zero. Since $f_V$ depends smoothly on $V$, $Df_V^k$ and $Df_0^k$ are close. In particular, for a given $x\in\mathbb{S}_{0,U}$ with $f^k(x)\in\mathbb{S}_{0,U}$, if $\pi_V(x)\in\mathbb{S}_{V,U}$ and $f^k(\pi_V(x))\in\mathbb{S}_{V,U}$, then $Df^k_V(\mathcal{K}^\mathrm{u}_V(\pi_V(x)))$ and $Df^k(\mathcal{K}^\mathrm{u}(x))$ are close. Thus by compactness of the surfaces $\mathbb{S}_{V,U}$, we can choose $V_0$ suitably small, so that the conclusion of the Lemma holds.
\end{proof}
\begin{lem}[{\cite[Lemma 5.4]{Damanik2009}}]\label{invariance_of_cones}
Assuming $U$ is sufficiently small, there exists sufficiently large $N \in \N$ and sufficiently small $V_0 > 0$, such that for all $V\in [0, V_0)$ the following holds. If $x\in \mathbb{S}_{V, U}$ and $k$ is the smallest number such that $f^k(x)\in \mathbb{S}_{V,U}$ and $k\geq N$, then
\begin{align*}
Df^k_x(\mathcal{K}^\mathrm{u}_V(x))\subset\Int(\mathcal{K}^\mathrm{u}_V(f^k(x))).
\end{align*}
\end{lem}
\begin{lem}[{\cite[Lemma 5.2]{Damanik2009}}]\label{uniform_expansion}
There exists $V_0 > 0$ sufficiently small, $C > 0$ and $\mu > 1$, such that for all $V\in [0, V_0)$ the following holds. If $x\in \mathbb{S}_{V, U}$ and for $k\in\N$, $f^k(x)\in\mathbb{S}_{V, U}$, and if $\vt{v}\in \mathcal{K}_V^\mathrm{u}(x)$, then
\begin{align*}
\norm{Df^k_x(\vt{v})}\geq C\mu^k\norm{\vt{v}}.
\end{align*}
\end{lem}
Combination of Lemmas \ref{cones_on_compacta}, \ref{invariance_of_cones}, \ref{uniform_expansion} gives
\begin{cor}\label{invariance_and_expansion}
Assuming $U$ is small, there exists $V_0 > 0$ sufficiently small, $C > 0$ and $\mu > 1$ such that for all $V\in[0, V_0)$, the following holds. If $x\in \mathbb{S}_{V,U}$, $\vt{v}\in\mathcal{K}_V^\mathrm{u}(x)$, $k\in\N$ and $f^k(x)\in\mathbb{S}_{V,U}$, then
\begin{align*}
Df^k_x(\mathcal{K}^\mathrm{u}_V(x))\subset\Int(\mathcal{K}^\mathrm{u}_V(f^k(x)))\text{\hspace{5mm}and\hspace{5mm}}\norm{Df^k_x(\vt{v})}\geq C\mu^k\norm{\vt{v}}.
\end{align*}
\end{cor}
With $U$ and $V_0$ satisfying the hypothesis of Corollary \ref{invariance_and_expansion}, let us construct the following cone field on $\mathbb{S}_{V, U}$, for $V\in [0, V_0)$ and $\eta > 0$:
\begin{align}\label{3d_cones}
K_V^\eta(x) = \set{(\vt{u},\vt{v})\in T_x\mathbb{S}_{V, U}\oplus (T_x\mathbb{S}_{V,U})^\perp: \vt{u}\in\mathcal{K}_V^\mathrm{u}(x)\text{, }\norm{\vt{v}}\leq\eta\sqrt{V}\norm{\vt{u}}}.
\end{align}

The cone field from \eqref{3d_cones} is of central importance, as it is the one that will be shown to be transversal to $\mathbb{W}_i^s$, and to contain the line $\gamma_r$. However, as it is defined, it may not be invariant in the sense of the preceding corollary. The next lemma establishes that it is \textit{almost invariant} (which is enough for us). 

\begin{lem}\label{invariance_3d_cones}
For every $\widehat{\eta}>0$ there exists $\eta = \eta(\widehat{\eta}) > 0$, $\eta < \widehat{\eta}$, and $V_0 > 0$ sufficiently small, such that for any $V\in [0, V_0)$, any $x\in \mathbb{S}_{V,U}$, $k\in\Z_{\geq 0}$, if $f^k(x)\in\mathbb{S}_{V,U}$, then
\begin{align*}
Df^k_x(K_V^\eta(x))\subset K_V^{\widehat{\eta}}(f^k(x)).
\end{align*}
\end{lem}
\begin{proof}
Smooth dependence of the surfaces $\set{S_V}_{V > 0}$ on $V$ and invariance under $f$ implies the following.
\begin{lem}\label{bound_on_gradiants}
For any $V > 0$, $x\in S_V$ and $k\in\Z$, if $\vt{v}\in (T_x S_V)^\perp$, then
\begin{align}\label{scaling_gradiants}
\norm{\proj{(T_{f^k(x)}S_V)^\perp}{Df^k_x(\vt{v})}} = \frac{\norm{\nabla I(x)}}{\norm{\nabla I(f^k(x))}}\norm{\vt{v}},
\end{align}
where $\nabla I$ is the gradient of the Fricke-Vogt character (see \eqref{part2_eq7}). In particular, there exists $D > 0$ such that for all $V\in (0,V_0)$ and any $x\in \mathbb{S}_{V, U}$, if $f^k(x)\in\mathbb{S}_{V, U}$, then for every $\vt{v}\in (T_x S_V)^\perp$, we have
\begin{align*}
\norm{\proj{(T_{f^k(x)}S_{V,U})^\perp}{Df_x^k(\vt{v})}}\leq D\norm{\vt{v}}.
\end{align*}
In fact, we can take
\begin{align*}
D = \sup\set{\frac{\norm{\nabla I(x)}}{\norm{\nabla I(y)}}: x,y\in \mathbb{S}_{V,U}, V\in [0, V_0]} <\infty.
\end{align*}
\end{lem}
\begin{proof}
Let
\begin{align*}
\mathcal{M} = \bigcup_{V > 0}S_V.
\end{align*}
Integrate the gradient vector field $x\mapsto\nabla I(x)$ on $\mathcal{M}$, and let $\alpha_x$ denote a compact arc along the integral curve through $x$, say parameterized on $[-1,1]$ with $\alpha_x(0) = x$. Let $\beta = f^k(\alpha_x)$. Then
\begin{align*}
\norm{\nabla I(x)}^2 = (I\circ\alpha_x)'(0) = (I\circ\beta)'(0) = C\nabla I(f^k(x))\cdot\nabla I(f^k(x)),
\end{align*}
where $C\nabla I(f^k(x))$ is the projection of $\beta'(0)$ onto $(T_{f^k(x)}S_V)^\perp$, $C > 0$ a constant. Hence
\begin{align*}
\norm{C\nabla I(f^k(x))} = \frac{\norm{\nabla I(x)}}{\norm{\nabla I(f^k(x))}}\cdot\norm{\nabla I(x)}.
\end{align*}
\end{proof}

\begin{figure}[t]
\centering
\includegraphics[scale=.4]{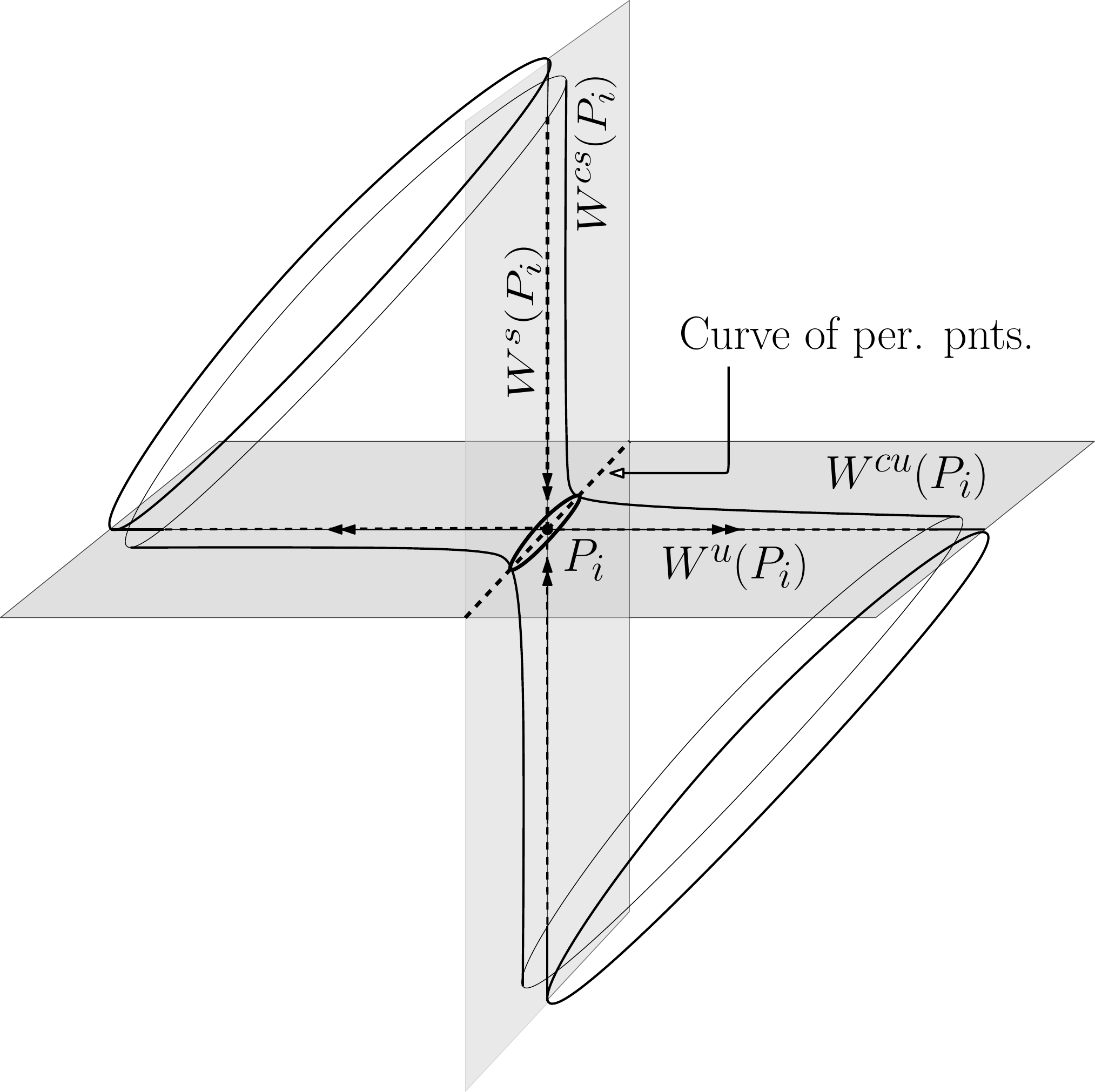}
\caption{}
\label{fig_rect}
\end{figure}

Let $D$ be as in Lemma \ref{bound_on_gradiants} and $\mu$ and $C$ as in Corollary \ref{invariance_and_expansion}. Let $k_0\in\N$ be the smallest number such that $C\mu^{k_0} > D$. Fix $N\in\N$ with $N > k_0$. Let $U^*$ be a neighborhood of $\set{P_1,\dots,P_4}$ in $\R^3$ such that $\overline{U^*}\subset U$, so small that if $x\in U^*$ and $l\in\N$ is the smallest number such that $f^l(x)\notin U$, then $l > N$.

\textit{Case (i)}. Suppose $x\in \mathbb{S}_{V, U}$, $f^k(x)\in\mathbb{S}_{V, U}$ and $N > k\geq k_0$. Then $C\mu^k > D$, hence the expansion in the cone $\mathcal{K}_V^\mathrm{u}(x)$ dominates the expansion along the normal. On the other hand, the normal at $x$, under the action of $Df^k_x$, may tilt to the side away from $\mathcal{K}_V^\mathrm{u}(f^k(x))$. However, since $k < N$, $\set{x,f(x),\dots,f^k(x)}\subset\mathbb{S}_{V,U^*}$. By compactness of $S_{V, U^*}$, the angle between the image under $Df^k_x$ of the normal at $x$ and $\mathbb{S}_{V,U^*}$ remains uniformly bounded away from zero. This, together with the fact that $\mathcal{K}_V^\mathrm{u}(x)$ is mapped into the interior of $\mathcal{K}_V^\mathrm{u}(f^k(x))$, allows us to choose $V_0$ sufficiently small to compensate for the tilt in the normal. Hence for any $\eta > 0$, there exists $V_0 > 0$ small, such that $Df^k_x(K_V^\eta(x))\subset K_V^\eta(f^k(x))$.

\textit{Case (ii)}. If $x\in \mathbb{S}_{V,U}$, $f^k(x)\in \mathbb{F}_{j_0}\subset \mathbb{S}_{V,U}$ and $k < N$, then for sufficiently small $\eta$ (depending only on $N$ and independent of $x$), $K_V^\eta(x)\subset K_V^{\widehat{\eta}}(f^k(x))$; that is, given that the number of iterations does not exceed a given constant, the distortion can be controlled. 

\textit{Case (iii)}. We now handle the case when $x$, under iterates of $f$, passes through $U^*$. By symmetries of the map $f$, it is enough to consider only a neighborhood of $P_1$.

Say $U^*$ is a neighborhood of $P_1$. If $U^*$ is sufficiently small, there exists a smooth change of coordinates $\Phi: U^*\rightarrow\R^3$ such that $\Phi(P_1) = (0,0,0)$ and the following holds.

Denote by $\mathbb{W}^\mathrm{s, u}_1(P_1)$ a small neighborhood of the point $P_1$ on the manifold $\mathbb{W}^\mathrm{s,u}_1$. We have
\begin{itemize}
\item $\Phi(\mathrm{Per}_2(f))$ is part of the line $\set{x = 0, z = 0}$;
\item $\Phi(\mathbb{W}^\mathrm{s}_1(P_1))$ is part of the plane $\set{z = 0}$;
\item $\Phi(\mathbb{W}^\mathrm{u}_1(P_1))$ is part of the plane $\set{x = 0}$;
\item $\Phi(\mathbb{W}^\mathrm{s}_1\cap S_0)$ is part of the line $\set{y = 0, z = 0}$;
\item $\Phi(\mathbb{W}^\mathrm{u}_1\cap S_0)$ is part of the line $\set{x = 0, y = 0}$.
\end{itemize}
Denote $\mathfrak{S}_V = \Phi(S_V)$. Then $\set{\mathfrak{S}_V}_{V>0}$ is a family of smooth surfaces depending smoothly on $V$, $\mathfrak{S}_0$ is diffeomorphic to a cone, contains lines $\set{y = 0, z = 0}$ and $\set{x = 0, y = 0}$, and at each nonzero point on those lines it has a quadratic tangency with the $xy$- and $zy$-plane (see Figure \ref{fig_rect}).

For a point $p$, denote its coordinates by $(x_p, y_p, z_p)$.
\begin{lem}[{\cite[Propositions 3.12 and 3.13]{Damanik2010}}]\label{technical}
Given $C_1, C_2 > 0$, $\rho > 1$, there exists $\delta_0 > 0$, $C > 0$, $\mu>1$ and $N_0\in\N$, such that for any $\delta\in(0,\delta_0)$, the following holds.

Let $g: \R^3\rightarrow\R^3$ be a $C^2$ diffeomorphism such that
\begin{enumerate}[(i)]
\item $\norm{g}_{C^2}\leq C_1$;
\item The plane $\set{z = 0}$ is invariant under the iterates of $g$;
\item $\norm{Dg(p) - A}<\delta$ for every $p\in\R^3$, where
\begin{align*}
A = \begin{pmatrix}
\rho^{-1} & 0 & 0\\
0 & 1 & 0\\
0 & 0 &\rho
\end{pmatrix}
\end{align*}
is a constant matrix. 
\end{enumerate}
Introduce the following cone field on $\R^3$:
\begin{align}\label{round_cones}
\widetilde{K}_p = \set{\vt{v}\in T_p\R^3, \vt{v} = \vt{v}_{xy}+\vt{v}_z:\abs{\vt{v}_z}\geq C_2\sqrt{\abs{z_p}}\norm{\vt{v}_{xy}}}.
\end{align}
Then for any $p\in\R^3$ satisfying $\abs{z_p} \leq 1$, 
\begin{enumerate}[(1)]
\item $Dg(\widetilde{K}_p)\subset\widetilde{K}_{g(p)}$;
\item if $\abs{z_{g^N(p)}}>1$ with $N\geq N_0$, then for any $\vt{v}\in\widetilde{K}_p$, if $Dg^N(\vt{v}) = \vt{u}_{xy}+\vt{u}_z$, then
\begin{align*}
\norm{\vt{u}_{xy}}<2\delta^{1/2}\abs{\vt{u}_z}\text{\hspace{5mm}and\hspace{5mm}}\norm{Dg^N(\vt{v})}\geq C\mu^N\norm{\vt{v}}.
\end{align*}
\end{enumerate}
\end{lem}
For a given $\delta > 0$, if the neighborhood $U^*$ of singularities is small enough, then at every point $p\in U^*$ the differential $D(\Phi\circ f\circ\Phi^{-1})(p)$ satisfies condition (iii) of Lemma \ref{technical}. Since the tangency of $\mathfrak{S}_0$ with the plane $\set{z = 0}$ is quadratic, there exists $C_2 > 0$ such that every vector tangent to $\mathfrak{S}_0$ from the cone $D\Phi(\mathcal{K}^\mathrm{u})$ also belongs to the cone in \eqref{round_cones}. The same holds for vectors tangent to $\mathfrak{S}_V$ from the cones $D\Phi(\mathcal{K}_V^\mathrm{u})$ for $V$ small enough. Therefore, Lemma \ref{technical} can be applied to all those vectors. In particular, we have
\begin{lem}\label{using_technical}
Assume $U$ is so small that Lemma \ref{technical} can be applied. There exists $C > 0$ such that if $x\in\mathbb{S}_{V, U}$, $f(x)\in U$ and $l\in\N$ is the smallest number such that $f^{l-1}(x)\in U$ and $f^l(x)\notin U$, then If $l > N_0$, then for any $\vt{v}\in T_x\R^3$ with $D\Phi(Df_x(\vt{v}))\in\widetilde{K}_{\Phi(f(x))}$, we have
\begin{enumerate}
\item $\proj{T_{f^l(x)}\mathbb{S}_{V,U}}{Df^l_x(\vt{v})}\in\mathcal{K}_V^\mathrm{u}(f^l(x))$;
\item $\norm{Df_x^l(\vt{v})}\geq C\mu^l\norm{\vt{v}}$, where $\mu$ is as in Lemma \ref{technical}.
\end{enumerate}
\end{lem}

Now, if $U$ is sufficiently small and $V_0 > 0$ is sufficiently small, and $x\in \mathbb{S}_{V, U}$ with $f(x)\in U$, then for any $\eta > 0$ we have
\begin{align*}
D\Phi(Df_x(K_V^\eta(x)))\subset \widetilde{K}_{\Phi(f(x))}.
\end{align*}
Hence Lemma \ref{using_technical} can be applied to vectors in $Df(K_V^\eta(x))$. In particular, choosing $U^*$ so small that $N > N_0$, we see that if $x\in \mathbb{S}_{V, U}$, $f(x)\in U$ and $k\in \N$ is the smallest number such that $f^k(x)\notin U$, and $\set{x, f(x),\dots,f^{k-1}(x)}\cap U^*\neq \emptyset$, then $Df^k_x(K_V^\eta(x))\subset K_V^\eta(f^k(x))$, for any $\eta > 0$.

The proof of Lemma \ref{invariance_3d_cones} follows by combining cases (i) and (ii), with
\begin{align*}
\eta = \widehat{\eta}\min\set{\frac{C}{D}, 1},
\end{align*}
where $C$ is as in Corollary \ref{invariance_and_expansion} and $D$ is as in Lemma \ref{bound_on_gradiants}. 
\end{proof}
We immediately obtain, as a consequence of Lemma \ref{invariance_3d_cones}, the following
\begin{cor}\label{cor_cone_invariance}
Since the fundamental domain $\mathbb{F}_{j_0}$ has quadratic tangency with $S_0$, there exists $V_0>0$ and $\widehat{\eta}>0$ such that for all $V\in[0,V_0)$ and $x\in\mathbb{S}_{V, U}\cap\mathbb{F}_{j_0}$, the cone $K_V^{\widehat{\eta}}(x)$ is transversal to $\mathbb{F}_{j_0}$. Let $\eta < \widehat{\eta}$ be as in Lemma \ref{invariance_3d_cones}. Then for all $x\in\mathbb{S}_{V,U}\cap\mathbb{W}_{j_0}^s$ and $\vt{v}\in K_V^\eta(x)$, if $x$ does not lie in the region bounded by $\alpha$ and the curve of periodic points (i.e. $x$ lies in $\bigcup_{n\geq 0}f^{-n}(\mathbb{F}_{j_0}))$, then $\vt{v}$ is transversal to $\mathbb{W}_{j_0}^s$.
\end{cor}
\begin{proof}
For every $x\in\mathbb{S}_{V,U}\cap\mathbb{W}_{j_0}^s$ that satisfies the hypothesis of the corollary, there exists $k\in\N$, such that $f^k(x)\in\mathbb{F}_{j_0}$. The result follows by Lemma \ref{invariance_3d_cones}.
\end{proof}
Now recall the definition of $\gamma_r(\lambda)$ from \eqref{eq:new_gamma}. With $I$ denoting the Fricke-Vogt character (see \eqref{part2_eq7}), we have
\begin{align}\label{invariant_at_gamma}
I(\gamma_r(\lambda))& = \frac{\lambda}{4}\left(\frac{1}{r} - r\right)^2;&\\
\frac{\partial I(\gamma_r(\lambda))}{\partial\lambda}& = \frac{1}{4}\left(\frac{1}{r} - r\right)^2\notag.&
\end{align}
Consequently $\gamma_1(\lambda)$ is the line that contains the singularities $P_1$ and $P_2$ (see \eqref{part2_eq9}) and lies entirely on $S_0$. For $r\neq 1$, $\gamma_r$ intersects the surfaces $\set{S_V}_{V\geq 0}$ transversely, intersecting each surface in a unique point, and intersects $S_0$ when $\lambda = 0$.

\textit{Case (i) (Assuming $J_1\neq 1$)}. Assume $J_1 \neq 1$. If $\lambda = 0$ then by (2) of Proposition \ref{part2_prop0},  $\mathcal{O}_f^+(\gamma_1(0))$ escapes. By continuity, there exists $r_0\in(0,1)$ such that for all $r\in(1-r_0,1+r_0)$, the set
\begin{align*}
\set{x\in\gamma_r: \mathcal{O}^+_f(x)\text{ is bounded }}
\end{align*}
lies on a line segment $\Lambda_r$ whose endpoints belong to $U$(the neighborhood of singularities) (see Figures \ref{fig_projection} and \ref{fig_gamma_1}). 
\begin{figure}[t]
\begin{minipage}[b]{0.4\linewidth}
\centering
\includegraphics[scale=.32]{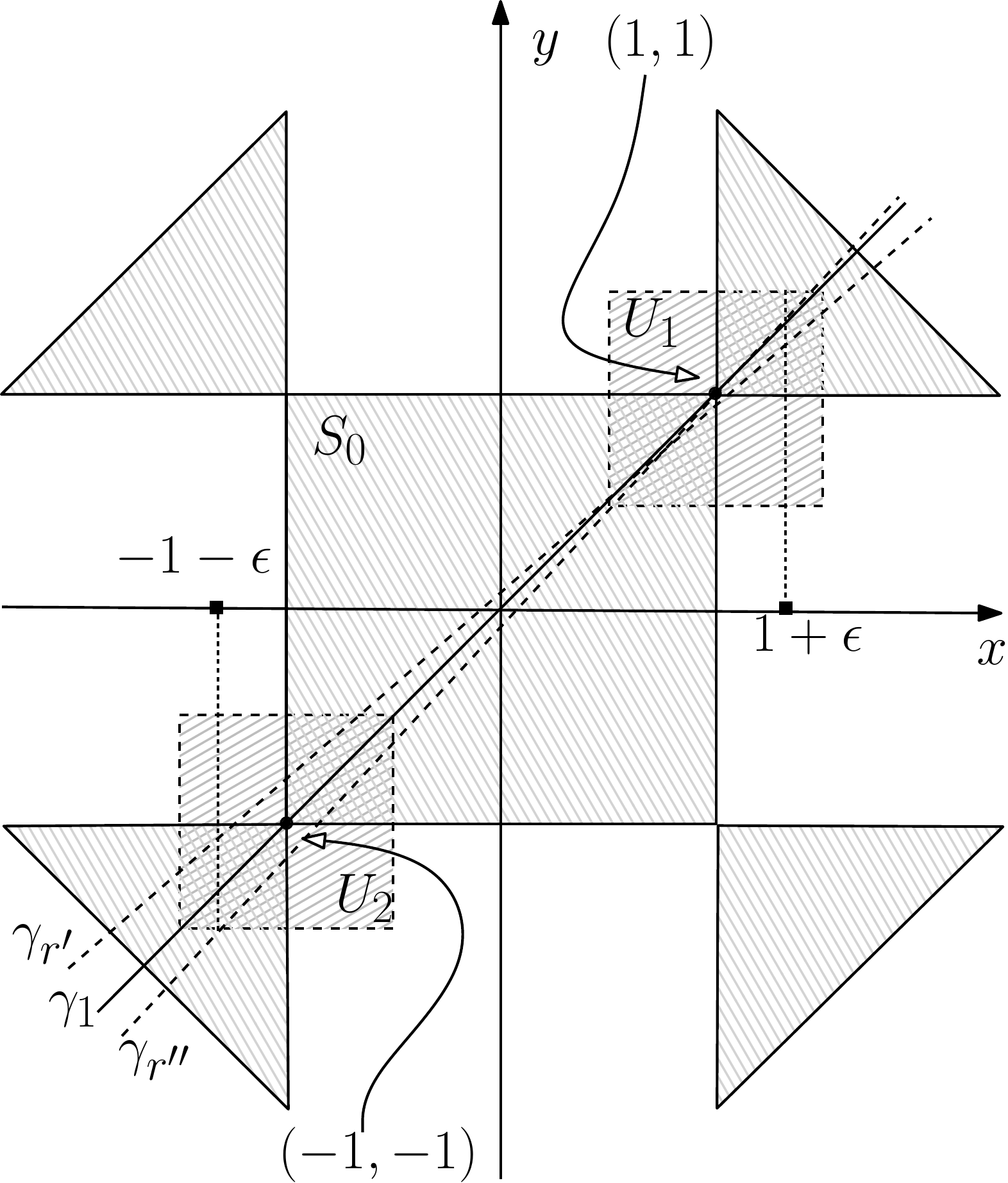}
\caption{
}
\label{fig_projection}
\end{minipage}
\hspace{0.5cm}
\begin{minipage}[b]{0.5\linewidth}
\centering
\includegraphics[scale=.32]{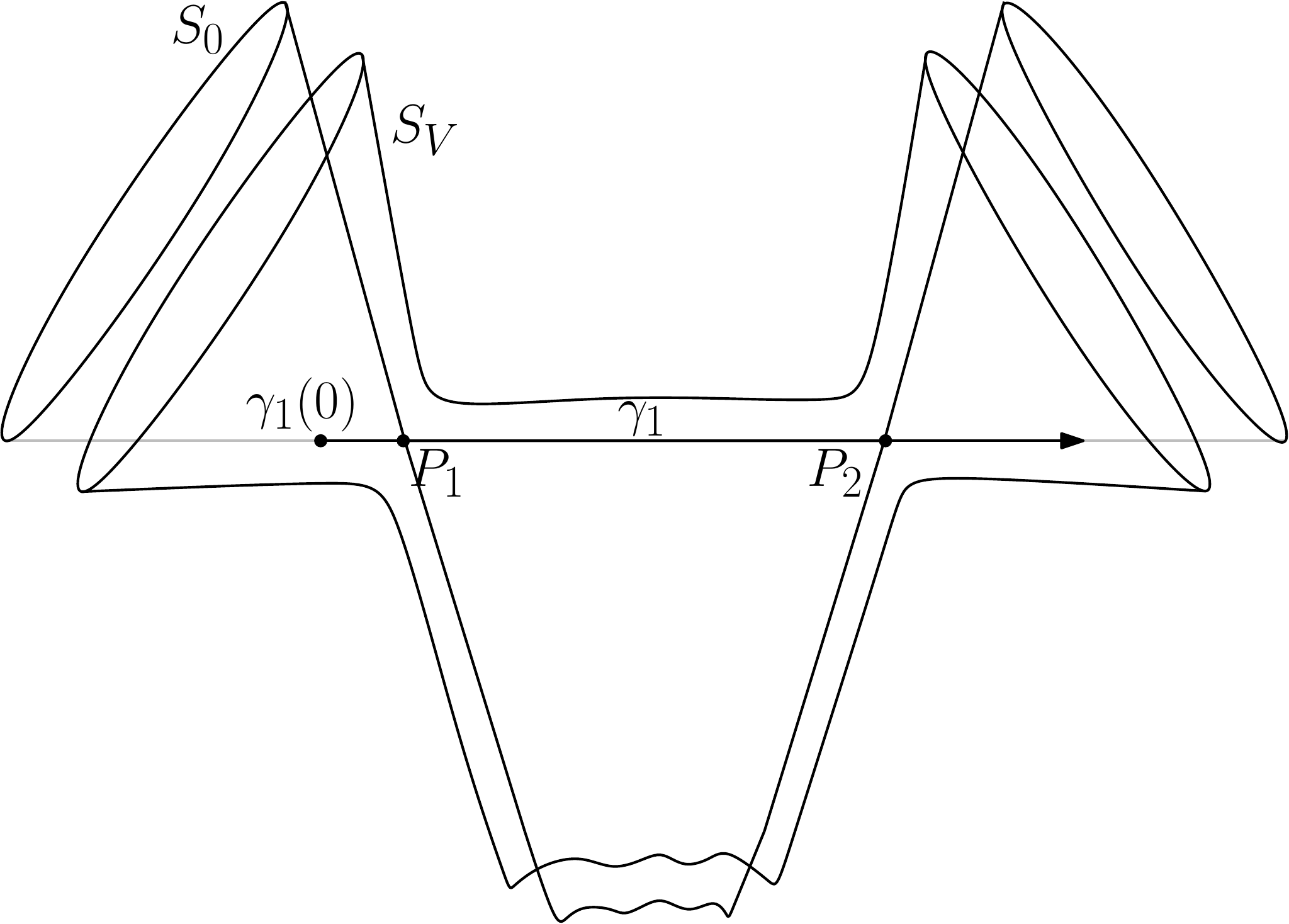}
\caption{
}
\label{fig_gamma_1}
\end{minipage}
\end{figure}
Let $V_0$ be so small that Corollary \ref{cor_cone_invariance} can be applied, and $r_0\in (0,1)$ such that for all $r\in(1- r_0, 1 + r_0)$, the line segment $\Lambda_r$ lies entirely in $U\cup\left(\bigcup_{V\in[0,V_0)}\mathbb{S}_{V,U}\right)$. Let $\eta$ and $\widehat{\eta}$ be as in Corollary \ref{cor_cone_invariance}. If $\lambda > 0$ is such that $\gamma_r(\lambda)\in S_V$, from \eqref{invariant_at_gamma} we have
\begin{align}\label{invariant_at_gamma2}
\frac{\partial I(\gamma_r(\lambda))}{\partial\lambda} = \frac{1}{\lambda}I(\gamma_r(\lambda)).
\end{align}
Since the set $\set{\lambda: \gamma_r(\lambda)\in\Lambda_r, r\in (1 - r_0, 1 + r_0)}$ is bounded away from zero uniformly in $r$, there exists $C_0 > 0$ such that for all such $\lambda$, if $\gamma_r(\lambda)\in\mathbb{S}_{V,U}$, then
\begin{align}\label{bound_on_angles}
\measuredangle(\gamma_r'(\lambda), S_V)\leq C_0V.
\end{align}
On the other hand, $F^{-1}(\gamma_1\cap\mathbb{S}) = \set{\theta = -\phi}$ (see \eqref{semiconjugacy}), and $(1, -1)$ is not an eigenvector of $\mathcal{A}$, hence by taking $K^\mathrm{u}(p)$ in \eqref{cones_on_torus} wider as necessary, we have that for all $\lambda$ such that $\gamma_r(\lambda)\in\mathbb{S}\setminus\set{P_1,\dots,P_4}$, $\gamma_r(\lambda)\in\Int(\mathcal{K}^\mathrm{u})$. By continuity we have, for all $r$ sufficiently close to $1$ and $V_0$ close to $0$, if $\lambda$ is such that $\gamma_r(\lambda)\in\mathbb{S}_{V,U}$, then
\begin{align}\label{projection_of_gamma_r}
\proj{T_{\gamma_r(\lambda)}S_V}{\gamma_r}\in\mathcal{K}_V^\mathrm{u}.
\end{align}
Combined with \eqref{bound_on_angles}, this gives $\gamma_r(\lambda)\in K_V^\eta(\gamma_r(\lambda))$.

Now suppose $\lambda$ is such that $\gamma_r(\lambda)\in\Lambda_r\cap U$. Let $k\in\N$ be the smallest number such that $f^k(\gamma_r(\lambda))\in\mathbb{S}_{V,U}$. If $k < N_0$, where $N_0$ is as in Lemma \ref{technical}, then $\gamma_r(\lambda)\notin U^*$. The cones $\mathcal{K}_V^\mathrm{u}$ defined on $\mathbb{S}_{V, U}$, $V\in [0, V_0)$, can be defined on $\mathbb{S}_{V, U^*}$ in the same way, by taking $V_0$ smaller as necessary. Hence by taking $r_0$ closer to $1$ as necessary, for all $r\in (1 - r_0, 1 + r_0)$, if $\lambda$ is such that $\gamma_r(\lambda)\in\mathbb{S}_{V, U^*}$, then \eqref{projection_of_gamma_r} again holds. Assuming $V_0$ was initially taken sufficiently small, we must have
\begin{align*}
Df^k_{\gamma_r(\lambda)}(\gamma'_r(\lambda))\in K_V^\eta(f^k(\gamma_r(\lambda)).
\end{align*}
Now suppose $k > N_0$.
\begin{lem}\label{lem_arb1}
For all sufficiently small $C^1$-perturbations of $\gamma_1$, $\Phi(\gamma_1)$ is tangent to the cones in \eqref{round_cones} on $\Phi(U)$, with $U$ a sufficiently small neighborhood of $P_1$.
\end{lem}
\begin{proof}
Observe that $\gamma_1$ lies in the plane $\set{z = 1}$. Notice, from \eqref{eq_per_pnts}, that the curve of periodic points, $\mathrm{Per}_2(f)$, is transversal to $\set{z = 1}$ at the point $P_1$. A simple calculation shows that the eigenvector corresponding to the smallest eigenvalue of $Df_{P_1}$ is also transversal to $\set{z = 1}$.  Hence $\Phi(\gamma_1)$ is transversal to $\mathbb{W}^u_1(P_1)$ (a neighborhood of $P_1$ in $\mathbb{W}^u_1$), and so also to the plane $\set{z = 0}$ in the rectified coordinates. Hence all sufficiently small $C^1$-perturbations of $\Phi(\gamma_1)$ are (uniformly) transversal to $\set{z = 0}$, and therefore tangent to the cones in \eqref{round_cones} in $\Phi(U)$, for sufficiently small $U$.
\end{proof}
Now Lemma \ref{using_technical} can be applied. We get
\begin{align}\label{projection_of_gamma_r2}
\proj{T_{f^k(\gamma_r(\lambda))}S_V}{Df^k_{\gamma_r(\lambda)}(\gamma'_r(\lambda))}\in \mathcal{K}_V^\mathrm{u}(f^k(\gamma_r(\lambda)).
\end{align}
On the other hand, by \eqref{bound_on_angles} we have
\begin{align*}
\norm{\proj{(T_{\gamma_r(\lambda)}S_V)^\perp}{\gamma'_r(\lambda)}}\leq \frac{C_0V}{\norm{\nabla I(\gamma_r(\lambda))}},
\end{align*}
and after applying \eqref{scaling_gradiants} (Lemma \ref{bound_on_gradiants}) we obtain
\begin{align}\label{gamma_normal}
&\norm{\proj{(T_{f^k(\gamma_r(\lambda))}S_V)^\perp}{Df^k_{\gamma_r(\lambda)}(\gamma'_r(\lambda))}}=\\
&= \frac{\norm{\nabla I(\gamma_r(\lambda))}}{\norm{\nabla I(f^k(\gamma_r(\lambda)))}}\norm{\proj{(T_{\gamma_r(\lambda)}S_V)^\perp}{\gamma'_r(\lambda)}}\leq \frac{C_0}{\norm{\nabla I(f^k(\gamma_r(\lambda)))}}V\notag.
\end{align}
By compactness, for all $V\in [0, V_0)$, $x\in \mathbb{S}_{V,U}$, $\norm{\nabla I(x)}$ is uniformly bounded away from zero. Thus, assuming $V_0$ is sufficiently small, combining equations \eqref{projection_of_gamma_r2} and \eqref{gamma_normal}, we get
\begin{align*}
Df^k_{\gamma_r(\lambda)}(\gamma'_r(\lambda))\in K_V^\eta(f^k(\gamma_r(\lambda))).
\end{align*}
We can now use Corollary \ref{cor_cone_invariance} to conclude that $\Lambda_r$ intersects $\mathbb{W}^\mathrm{s}_{j_0}$ transversely.

\textit{Case (ii) (Assuming $J_1 = 1$)}. When $J_1 = 1$, $\gamma_r(\lambda)$ has the form
\begin{align}\label{gamma_with_J_1}
\gamma_r(\lambda) = \left(\frac{\lambda - 2}{2}, \frac{\lambda - (1 + r^2)}{2r}, \frac{1 + r^2}{2r}\right).
\end{align}
Observe that if $r\neq 1$, then $\mathcal{O}^+_f(\gamma_r(0))$ escapes, since $f^2(\gamma_r(0))$ satisfies (2) of Proposition \ref{part2_prop0}. Hence for all $r\neq 1$ (and sufficiently close to $1$) there exists $\lambda_0(r) > 0$ such that for all $\lambda < \lambda_0(r)$, $\mathcal{O}^+_f(\gamma_r(\lambda))$ escapes, by Lemma \ref{part2_lem0_1}. We have
\begin{lem}\label{bound_on_v_to_lambda}
There exists $C_0 > 0$ such that for all $r\neq 1$ sufficiently close to $1$, and $\lambda_0(r)$ as above, if $\lambda \geq \lambda_0(r)$ and $\gamma_r(\lambda)\in S_V$, then
\begin{align*}
\frac{V}{\lambda} \leq C_0\sqrt{V}.
\end{align*}
\end{lem}
\begin{proof}
Let $\pi_1: \R^3\rightarrow \R$ denote projection onto the first coordinate. Since $r\approx 1$, $\lambda_0(r)\neq 0$ is close to zero, hence from \eqref{gamma_with_J_1},
\begin{align*}
\abs{\pi_1(\gamma_r(\lambda_0))} < 1.
\end{align*}
So by Proposition \ref{part2_prop0}, $\mathcal{O}^+_f(\gamma_r(\lambda_0))$ will diverge if $\abs{\pi_1\circ f^i(\gamma_r(\lambda_0))} > 1$, $i = 1, 2$. A simple calculation shows the existence of a constant $D > 0$, independent of $r$, such that
\begin{align*}
\abs{\pi_1\circ f(\gamma_r(\lambda_0))} \leq 1\text{\hspace{2mm}}or\text{\hspace{2mm}}\abs{\pi_1\circ f^2(\gamma_r(\lambda_0))} \leq 1\implies\lambda_0 \geq \frac{(r - 1)^2}{D}.
\end{align*}
Now, say $\gamma_r(\lambda)\in S_V$, and its forward orbit is bounded. Then $\lambda \geq \lambda_0(r)$. From \eqref{invariant_at_gamma},
\begin{align*}
V = \frac{\lambda}{4}\left(\frac{1}{r} - r\right)^2.
\end{align*}
So
\begin{align*}
\frac{\sqrt{V}}{\lambda} = \frac{\abs{r^2 - 1}}{2r\sqrt{\lambda}}\leq \frac{\abs{r^2 - 1}}{2r\sqrt{\lambda_0}}\leq \frac{\sqrt{D}(r + 1)}{2r}.
\end{align*}
The right side above is uniformly bounded for all $r$ away from zero. 
\end{proof}
Suppose $\gamma_r(\lambda)\in S_V$, $\gamma_r(\lambda)\in U^*\cap \Lambda_r$ and $k\in \N$ is the smallest number such that $f^k(\gamma_r(\lambda))\in\mathbb{S}_{V,U}$. Then $\lambda\neq 0$ and $k > N_0$. From equation \eqref{invariant_at_gamma2} and a calculation similar to \eqref{gamma_normal}, it follows that
\begin{align}\label{gamma_normal2}
&\norm{\proj{(T_{f^k(\gamma_r(\lambda))}S_V)^\perp}{Df^k_{\gamma_r(\lambda)}(\gamma'_r(\lambda))}}\notag\\
&=\norm{\proj{(T_{f^k(\gamma_r(\lambda))}S_V)^\perp}{Df^k_{\gamma_r(\lambda)}(\proj{(T_{\gamma_r(\lambda)}S_V)^\perp}{\gamma'_r(\lambda)}}}\\
&\leq \frac{D_0}{\lambda}V\leq D_0C_0\sqrt{V}\notag,
\end{align}
where $C_0$ is as in Lemma \ref{bound_on_v_to_lambda} and
\begin{align*}
D_0 = \sup\set{\frac{1}{\norm{\nabla I(x)}}: x\in\bigcup_{V\in[0,V_0)}\mathbb{S}_{V,U}}.
\end{align*}
On the other hand, $\norm{\gamma'_r(\lambda)} > 1$ independently of $r$. Now by Lemma \ref{lem_arb1}, Lemma \ref{using_technical} can be applied, so that by Part (2), with $C > 0$ and $\mu>1$ as in the Lemma, we obtain
\begin{align}\label{with_implication}
\norm{Df^k_{\gamma_r(\lambda)}(\gamma'_r(\lambda))}\geq C\mu^k\norm{\gamma'_r(\lambda)} > C\mu^k.
\end{align}
Hence we have
\begin{enumerate}
\item $\proj{T_{f^k(\gamma_r(\lambda))}S_V}{Df^k_{\gamma_r(\lambda)}(\gamma'_r(\lambda)}\in\mathcal{K}_V^\mathrm{u}(f^k(\gamma_r(\lambda)))$,
\item and from \eqref{gamma_normal2} and \eqref{with_implication},
\begin{align*}
\frac{\norm{\proj{(T_{f^k(\gamma_r(\lambda))}S_V)^\perp}{Df^k_{\gamma_r(\lambda)}(\gamma'_r(\lambda))}}}{\norm{Df^k_{\gamma_r(\lambda)}(\gamma'_r(\lambda))}}\leq \frac{D_0C_0}{C\mu^k}\sqrt{V}.
\end{align*}
\end{enumerate}
Hence
\begin{align*}
\measuredangle(Df^k_{\gamma_r(\lambda)}(\gamma'_r(\lambda)), S_V)\leq \widetilde{\eta}\sqrt{V},
\end{align*}
where $\widetilde{\eta}$ can be made arbitrarily small if $k$ is sufficiently large (i.e. if $U^*$ is initially chosen sufficiently small). Hence 
\begin{align*}
Df^k_{\gamma_r(\lambda)}(\gamma'_r(\lambda))\in K_V^\eta,
\end{align*}
and Corollary \ref{cor_cone_invariance} can be applied.

\textit{Case (iii) (If a point doesn't enter $\mathbb{F}_{j_0}$)}. Assume that the point $\gamma_r(\lambda)$ lies on the region of $\mathbb{W}^\mathrm{s}_{j_0}$ bounded by the curve of periodic points and the curve $\alpha$ (see discussion following Lemma \ref{cone_projection}). This is a finite region, hence contains at most finitely many points of intersection. So there exists $r_0\in(0,1)$, such that for all $r\in (1 - r_0, 1 + r_0)$, if $\gamma_r$ intersects this region, then this intersection is transversal. 

Combination of Cases (i), (ii) and (iii) gives transversality in Lemma \ref{part2_lem12}. To prove uniform transversality, observe that by Lemma \ref{invariance_3d_cones}, the cones $\set{K_V^\eta}$ are transversal to $\set{\mathbb{W}^\mathrm{s}_i}_{i\in\set{1,\dots,4}}$ for $V > 0$ sufficiently small. It follows that the cones $\set{K_V^{\eta/2}}$ are uniformly transversal to $\set{\mathbb{W}^\mathrm{s}_i}_{i\in\set{1,\dots,4}}$ on surfaces away from $S_0$. Hence by taking $r$ closer to one as necessary, but not equal to one, we can ensure that $\gamma_r$ lies in the cones $\set{K_V^{\eta/2}}$, and hence intersects $\set{\mathbb{W}^\mathrm{s}_i}_{i\in\set{1,\dots,4}}$ uniformly transversely. This finally concludes the proof of Lemma \ref{part2_lem12}.
\end{proof}
For justification of the following result, see \cite[Section 2]{Casdagli1986} and \cite[Section 5.4]{Cantat2009}.
\begin{lem}\label{part2_lem13}
For $V > 0$, $\set{\mathbb{W}^\mathrm{s}_i\cap S_V}_{i\in\set{1,\dots,4}}$ is dense (in the $C^1$ topology) in the lamination $W^\mathrm{s}(\Omega_V)$.
\end{lem}

Now from our construction of the family $\mathcal{W}^s$ in Proposition \ref{part2_prop7}, together with Remark \ref{part2_rem9}, combined with Lemma \ref{part2_lem13}, we get that $\mathbb{W}^s_{i\in\set{1,\dots,4}}$ is dense (in the $C^1$ topology) in the family $\mathcal{W}^s$ from Proposition \ref{part2_prop7}. Proposition \ref{part2_prop11} now follows from Lemma \ref{part2_lem12}.
\end{proof}

\subsection{Proof of Theorem \ref{thm:main}-(i)}\label{sec:proof-main-i}

The proof of Theorem \ref{thm:main}-(i) relies entirely on the dynamics of the trace map $f$, avoiding any spectral-theoretic considerations. We start with

\begin{prop}\label{part2_cor10}
With $\gamma_r$ from \eqref{eq:new_gamma}, define
\begin{align*}
\widetilde{B}_{\infty,r} = \set{\lambda\in [0,\infty): \mathcal{O}^+_f(\gamma_r(\lambda))\text{ is bounded}}.
\end{align*}
Define also
\begin{align*}
\widetilde{\sigma}_{k,r} = \set{\lambda\in[0,\infty): \abs{\pi\circ f^k(\gamma_r(\lambda))}\leq 1},
\end{align*}
where $\pi$, as above, denotes projection onto the third coordinate. Then $\widetilde{\sigma}_k\xrightarrow[k\rightarrow\infty]{}\widetilde{B}_\infty$ in Hausdorff metric.
\end{prop}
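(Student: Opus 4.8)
The plan is to show that $\widetilde B_{\infty,r}$ is the Hausdorff limit of the nested-in-the-limit family $\widetilde\sigma_{k,r}$ by exploiting Proposition \ref{part2_prop0}, which characterizes unbounded semiorbits through an \emph{open} (and eventually stable) escape condition. First I would record the elementary monotonicity-type fact hidden in Proposition \ref{part2_prop0}(1): if $\abs{\pi\circ f^k(\gamma_r(\lambda))}\le 1$ fails to hold and this failure persists (which part (1), with $C=1$, guarantees once we are past the critical index $k_0$), then $\lambda\notin\widetilde\sigma_{j,r}$ for all large $j$; conversely if the orbit is bounded then $\abs{\pi\circ f^k(\gamma_r(\lambda))}\le 1$ for every $k$ (a point of a bounded orbit lying on $S_{V}$ with $\abs{x_{k-1}}\le 1$ stays in $[-1,1]^3$-type region by the trichotomy in Proposition \ref{part2_prop0}). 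Hence $\widetilde B_{\infty,r}=\bigcap_{k}\bigcup_{j\ge k}\widetilde\sigma_{j,r}$, and in fact $\widetilde B_{\infty,r}=\bigcap_{k}\widetilde\sigma_{k,r}$ up to the finitely many ``boundary'' parameters where equality $\abs{x_k}=1$ occurs; since $\pi\circ f^k\circ\gamma_r$ is a nonconstant polynomial in $\lambda$, those are isolated and cause no trouble.

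Next I would establish the two Hausdorff-convergence inclusions. For the ``$\limsup$'' direction: fix $\epsilon>0$; I claim $\widetilde\sigma_{k,r}\subset\{\lambda:\dist(\lambda,\widetilde B_{\infty,r})<\epsilon\}$ for $k$ large. Suppose not; then there is a sequence $\lambda_k\in\widetilde\sigma_{k,r}$ with $\dist(\lambda_k,\widetilde B_{\infty,r})\ge\epsilon$. By Proposition \ref{part2_cor10}'s setup all relevant $\lambda$ lie in a fixed compact interval (outside a bounded $\lambda$-range the first-coordinate argument of Proposition \ref{part2_prop0}(2) forces immediate escape, exactly as in Case (ii) of Lemma \ref{part2_lem12}), so pass to a convergent subsequence $\lambda_k\to\lambda_\ast$. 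For each fixed $m$, $\abs{\pi\circ f^m(\gamma_r(\lambda_k))}\le 1$ for all $k\ge m$ by the persistence in Proposition \ref{part2_prop0}(1) (an orbit that is inside $[-1,1]$ at step $k\ge m$ was inside at step $m$, else it would have satisfied the escape condition and never returned). Letting $k\to\infty$ and using continuity of $\pi\circ f^m\circ\gamma_r$ gives $\abs{\pi\circ f^m(\gamma_r(\lambda_\ast))}\le 1$ for every $m$, i.e. $\lambda_\ast\in\widetilde B_{\infty,r}$, contradicting $\dist(\lambda_k,\widetilde B_{\infty,r})\ge\epsilon$. For the ``$\liminf$'' direction: $\widetilde B_{\infty,r}\subset\widetilde\sigma_{k,r}$ for \emph{every} $k$ directly, since a bounded forward semiorbit satisfies $\abs{\pi\circ f^k}\le 1$ at each step; thus $\sup_{b\in\widetilde B_{\infty,r}}\dist(b,\widetilde\sigma_{k,r})=0$. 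Combining the two directions yields $\hdist(\widetilde\sigma_{k,r},\widetilde B_{\infty,r})\to 0$. Note $\widetilde B_{\infty,r}$ is nonempty (it contains, e.g., parameters landing on $\mathbb S$ or on the relevant center-stable manifold, via Proposition \ref{part2_prop11}) and compact by Lemma \ref{part2_lem0_1}, so the Hausdorff distance is well defined.

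The step I expect to be the main obstacle is the uniform-in-$k$ control needed for the $\limsup$ inclusion: a priori a parameter could have $\abs{\pi\circ f^k(\gamma_r(\lambda))}\le 1$ for some large but not all smaller $k$, i.e. the orbit could dip back into the unit cube after wandering out, which would break the clean identity $\widetilde\sigma_{k,r}\supset\widetilde\sigma_{k+1,r}\supset\cdots$. The resolution is precisely Proposition \ref{part2_prop0}(1) with $C=1$ (together with part (2) being an open condition, as already used in Lemma \ref{part2_lem0_1}): once the orbit of $\gamma_r(\lambda)$ \emph{first} violates $\abs{x_k}\le 1$ at two consecutive steps with the preceding term still $\le 1$, it escapes to infinity in every coordinate (Proposition \ref{part2_prop2}) and never satisfies $\abs{\pi\circ f^j}\le 1$ again; and if it never violates it, the orbit is bounded. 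This dichotomy is exactly what upgrades pointwise convergence to Hausdorff convergence on the fixed compact $\lambda$-interval, and I would spell it out carefully rather than the routine compactness bookkeeping.
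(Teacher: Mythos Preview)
Your argument rests on a claim that is false: it is \emph{not} true that a bounded forward semiorbit of $\gamma_r(\lambda)$ satisfies $\abs{\pi\circ f^k(\gamma_r(\lambda))}\leq 1$ for every $k$. Proposition \ref{part2_prop0}(1) says that unboundedness is equivalent to the existence of the pattern $\abs{x_{k_0-1}}\leq C$, $\abs{x_{k_0}}>C$, $\abs{x_{k_0+1}}>C$; its negation is \emph{not} ``$\abs{x_k}\leq 1$ for all $k$'' but only ``whenever $\abs{x_{k-1}}\leq C$, at least one of $\abs{x_k},\abs{x_{k+1}}$ is $\leq C$''. So a bounded orbit may (and typically does) leave $[-1,1]$ in one coordinate before returning. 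A concrete obstruction: here $x_{-1}=(1+r^2)/(2r)>1$ for $r\neq 1$ is constant in $\lambda$, so $\widetilde\sigma_{0,r}=\emptyset$ while $\widetilde B_{\infty,r}\neq\emptyset$; even for large $k$ this persists, since points of $\Omega_V$ with $V>0$ need only have \emph{one} coordinate strictly below $1$ in absolute value (this is exactly Lemma \ref{part2_cor10-helper1} in the paper), and the period-two point $\bigl(x,\tfrac{x}{2x-1},x\bigr)$ on one branch of $\vartheta\cap S_V$ has third coordinate $x>1$. Consequently neither your ``$\liminf$'' inclusion $\widetilde B_{\infty,r}\subset\widetilde\sigma_{k,r}$ nor the nesting $\widetilde\sigma_{k,r}\supset\widetilde\sigma_{k+1,r}$ (used implicitly in your ``$\limsup$'' argument when you assert $\lambda_k\in\widetilde\sigma_{k,r}\Rightarrow\lambda_k\in\widetilde\sigma_{m,r}$ for $m\leq k$) holds.

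This is precisely the difficulty the paper's proof addresses. The correct replacement for nesting is the observation that $\Sigma_k\overset{\mathrm{def}}{=}\widetilde\sigma_{N+k}\cup\widetilde\sigma_{N+k+1}\cup\widetilde\sigma_{N+k+2}$ \emph{is} decreasing and intersects to $\widetilde B_\infty$; this handles the direction $\sup_{\lambda\in\widetilde\sigma_k}\dist(\lambda,\widetilde B_\infty)\to 0$. The other direction --- showing that each $\widetilde\sigma_k$ by itself meets every small neighborhood of $\widetilde B_\infty$ --- is genuinely nontrivial and is where the real work lies: the paper produces, near every point of $\widetilde B_\infty$, a pair of gap endpoints $e_1,e_2$ lying on the two branches of $\mathbb{W}^s_1$ through $P_1$, and uses that their forward orbits converge to the two points of a period-two orbit on $\vartheta$, one of which always has third coordinate of modulus $<1$. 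This guarantees that for every large $k$, at least one of $e_1,e_2$ lies in $\widetilde\sigma_k$. Your proposal contains no mechanism to produce such points, and without one the Hausdorff convergence (as opposed to mere convergence of $\Sigma_k$) cannot be concluded.
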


\begin{proof}
For convenience we drop $r$ and simply write $\widetilde{B}_\infty$ and $\widetilde{\sigma}_k$, keeping in mind implicit dependence of these sets on $r$.

We begin the proof with the following simple observation.
\begin{lem}\label{part2_cor10-helper1}
For any $V > 0$, if $x = (x_1,x_2,x_3)\in\Omega_V$, then one of the coordinates of $x$ is strictly smaller than one in absolute value.
\end{lem}

\begin{proof}
If all three coordinates are equal to one in absolute value, then $x$ has $-1$ as one of its coordinates, and $1$ for the other two (otherwise, $x$ is one of the four singularities of $S_0$, but by hypothesis $V > 0$). Then iterating forward twice gives a point that satisfies (2) of Proposition \ref{part2_prop0}, hence $x$ cannot be an element of $\Omega_V$.

If at least one but not all of its coordinates is equal to one in absolute value, and the rest are strictly greater than one in absolute value, then by applying either $f$ or $f^{-1} = (y,z,2yz - x)$, we shall obtain a point $(y_1,y_2,y_3)$ with $\abs{y_1},\abs{y_2}>1$ and $\abs{y_1 y_2}>\abs{y_3}$ or $\abs{y_2},\abs{y_3}>1$ and $\abs{y_2y_3}>\abs{y_1}$. In the first case, by (2) of Proposition \ref{part2_prop0}, the point $(y_1,y_2,y_3)$ has unbounded forward semiorbit; in the second case, by a similar result applied to $f^{-1}$, $(y_1,y_2,y_3)$ has unbounded backward semiorbit. In either case, $x$ cannot belong to $\Omega_V$. 
\end{proof}

By construction of the center-stable manifolds, it follows that $\widetilde{B}_\infty$ is precisely the intersection of $\gamma_r$ with the center-stable manifolds. As in the proof of Corollary \ref{cor_cone_invariance}, this intersection occurs on a compact line segment $\Lambda_r$ along $\gamma_r$. Now application of Lemma \ref{part2_lem0_1} shows that $\widetilde{B}_\infty$ is compact.

By Lemma \ref{part2_cor10-helper1} it follows that for all $\lambda\in \widetilde{B}_\infty$, there exists $N_\lambda\in\N$, such that for all $k\geq N_\lambda$, $f^k(\gamma(\lambda))$ belongs to $\widetilde{\sigma}_{k}\cup\widetilde{\sigma}_{k+1}\cup\widetilde{\sigma}_{k + 2}$. By compactness of $\widetilde{B}_\infty$, there exists such $N\in\N$ uniformly for all $\lambda\in \widetilde{B}_\infty$. Define
\begin{align*}
\Sigma_k = \widetilde{\sigma}_{N+k}\cup\widetilde{\sigma}_{N+k+1}\cup\widetilde{\sigma}_{N+k+2}.
\end{align*}
It is a simple observation that follows from Proposition \ref{part2_prop0} (see, for example, \cite{Damanik2005}) that $\Sigma_k\supset\Sigma_{k+1}$ for all $k\geq 0$, and $\widetilde{B}_\infty = \bigcap_{k\geq 0}\Sigma_k$. It follows that $\Sigma_k\xrightarrow[k\rightarrow\infty]{}\widetilde{B}_\infty$ in Hausdorff metric. It is therefore sufficient to prove that
\begin{align}\label{eq:part2_cor10}
\limsup_{k\rightarrow\infty}\hdist(\widetilde{\sigma}_{N+k},\Sigma_k) = 0.
\end{align}

\begin{figure}
\centering
\setlength{\unitlength}{0.6mm}
\begin{picture}(120,120)

\put(10,10){\framebox(100,100)}

\put(5,3){(0,0)}

\put(50,3){(0, 1/2)}

\put(102,3){(1,0)}

\put(5,113){(0,1)}

\put(102,113){(1,1)}

\put(8,8){$\bullet$}

\put(108,8){$\bullet$}

\put(58,8){$\bullet$}

\put(8,108){$\bullet$}

\put(108,108){$\bullet$}

\put(10,10){\line(161,100){36.2}}

\put(60,10){\line(-61,100){13.8}}

\put(40,16){$1$}

\linethickness{0.5mm}

\put(11,10){\line(1,0){50}}

\linethickness{0.075mm}

\put(60,10){\line(161,100){36.2}}

\put(110,10){\line(-61,100){13.8}}

\put(90,16){$1$}

\put(60,110){\line(-161,-100){36.2}}

\put(10,110){\line(61,-100){13.8}}

\put(25,100){$1$}

\put(110,110){\line(-161,-100){36.2}}

\put(60,110){\line(61,-100){13.8}}

\put(75,100){$1$}

\put(10,60){\line(61,-100){22.1}}

\put(18,28){$5$}

\put(10,60){\line(161,100){22.1}}

\put(10,110){\line(61,-100){22.1}}

\put(18,78){$5$}

\put(110,60){\line(-61,100){22.1}}

\put(102,38){$5$}

\put(110,60){\line(-161,-100){22.1}}

\put(110,10){\line(-61,100){22.1}}

\put(102,88){$5$}

\put(60,60){\line(-161,-100){36.3}}

\put(60,60){\line(161,100){36.3}}

\put(60,60){\line(-61,100){22.1}}

\put(60,60){\line(61,-100){22.1}}

\put(88,46.35){\line(-161,-100){14.3}}

\put(32,73.65){\line(161,100){14.3}}

\put(37.88,46.28){\line(61,-100){8.4}}

\put(82.12,73.72){\line(-61,100){8.4}}

\put(34,34){$4$}

\put(84,84){$4$}

\put(34,84){$3$}

\put(84,34){$3$}

\put(34,60){$2$}

\put(84,60){$2$}

\put(60,34){$6$}

\put(60,84){$6$}

\end{picture}
\caption{The Markov partition for
$T|_{\mathbb{S}}$ (picture taken from \cite{Damanik2009}).}\label{fig:Casdagli-Markov}
\end{figure}
%
%
%

Recall that $f|_\mathbb{S}$ is a factor of the toral automorphism $\mathcal{A}: \mathbb{T}^2\rightarrow\mathbb{T}^2$ defined in \eqref{torus_map}, and the factor map $F$ is given in \eqref{semiconjugacy}. The Markov partition for $\mathcal{A}$ on $\mathbb{T}^2$ is given in Figure \ref{fig:Casdagli-Markov} (see \cite{Casdagli1986} and \cite{Damanik2009} for more details). This Markov partition is carried to a Markov partition on $\mathbb{S}$ for $f$ by the factor map $F$.

Let $\Lambda_1$ be the line segment along $\gamma_1$ that connects the singularities $P_1$ and $P_2$. Then $\Lambda_1$ is precisely the set of those points on $\gamma_1$ whose forward orbit under iterations of $f|_{S_0}$ is bounded. The set $F^{-1}(\Lambda_1)$ is the line segment connecting $(0,0)$ and $(0,1/2)$ in the Markov partition shown in Figure \ref{fig:Casdagli-Markov}. This set is densely intersected by the stable manifold on $\mathbb{T}^2$ of the point $(0,0)$, and these intersections are carried by $F$ to a dense subset of $\Lambda_1$ formed by intersections of $\Lambda_1$ with the strong-stable manifold on $\mathbb{S}$ of the point $P_1$.

Let $\vartheta$ be the curve of period-two periodic points for $f$, passing through $P_1$, as defined in \eqref{eq_per_pnts}. Recall that $\mathbb{W}^\mathrm{s}_1$ denotes the stable manifold to $\vartheta$. Since $\vartheta$ has two smooth branches connecting at $P_1$, $\mathbb{W}^\mathrm{s}_1$ can be realized as two smooth manifolds, call them $\mathbb{W}^\mathrm{s}_{1,j}$, $j=1,2$, that connect smoothly along the strong-stable manifold of $P_1$ on $S_0$.

\begin{lem}\label{lem:thm-i-helper1}
Let $\Lambda_r$ be a compact line segment along $\gamma_r$ which contains the intersection of $\gamma_r$ with the center-stable manifolds. Assume also that the endpoints of $\Lambda_r$ belong to this intersection. Then for all $r\approx 1$ but not equal to one, there exists a set $\set{G^i_r}_{i\in\N}$ of open, mutually disjoint subintervals of $\Lambda_r$ (we call them gaps), such that $\widetilde{B}_\infty\subset\Lambda_r\setminus\bigcup_i G^i_r$, and the collection of endpoints of all $G^i_r$ is a dense subset of $\widetilde{B}_\infty$. Moreover, for each $i$, one of the endpoints of $G^i_r$ belongs to $\mathbb{W}^\mathrm{s}_{1,1}$, and the other to $\mathbb{W}^\mathrm{s}_{1,2}$.
\end{lem}

\begin{proof}
Let $r_0\in(0,1)$, such that for all $r\in(1-r_0, 1+r_0)$, $r\neq 1$, $\Lambda_r$ intersects the center-stable manifolds transversely. Since $\Lambda_{r=1}$ intersects the strong-stable manifold of $P_1$ transversely (in a dense set of points), as soon as $r$ is slightly perturbed, a gap opens with one endpoint in $\mathbb{W}^\mathrm{s}_{1,1}$, the other in $\mathbb{W}^\mathrm{s}_{1,2}$ (see Figure \ref{fig:gap}). This gap persists for all $r\in(1-r_0, 1+r_0)$ (i.e. as long as $\Lambda_r$ intersects the center-stable manifolds transversely). 

In order to show that the endpoints of these gaps form a dense subset of $\widetilde{B}_\infty$, it is enough to show that no point inside of a gap belongs to $\widetilde{B}_\infty$. This follows from, for example, \cite[Theorem 5.22]{Cantat2009} (in fact, the strong-stable and strong-unstable manifolds of the eight points that are born from the singularities $\set{P_1,\dots,P_4}$ form boundaries of the Markov partition on $S_V$, $V > 0$--see also \cite{Casdagli1986}). 
\end{proof}

\begin{figure}[t]
\begin{minipage}[b]{0.4\linewidth}
\centering
\includegraphics[scale=.35]{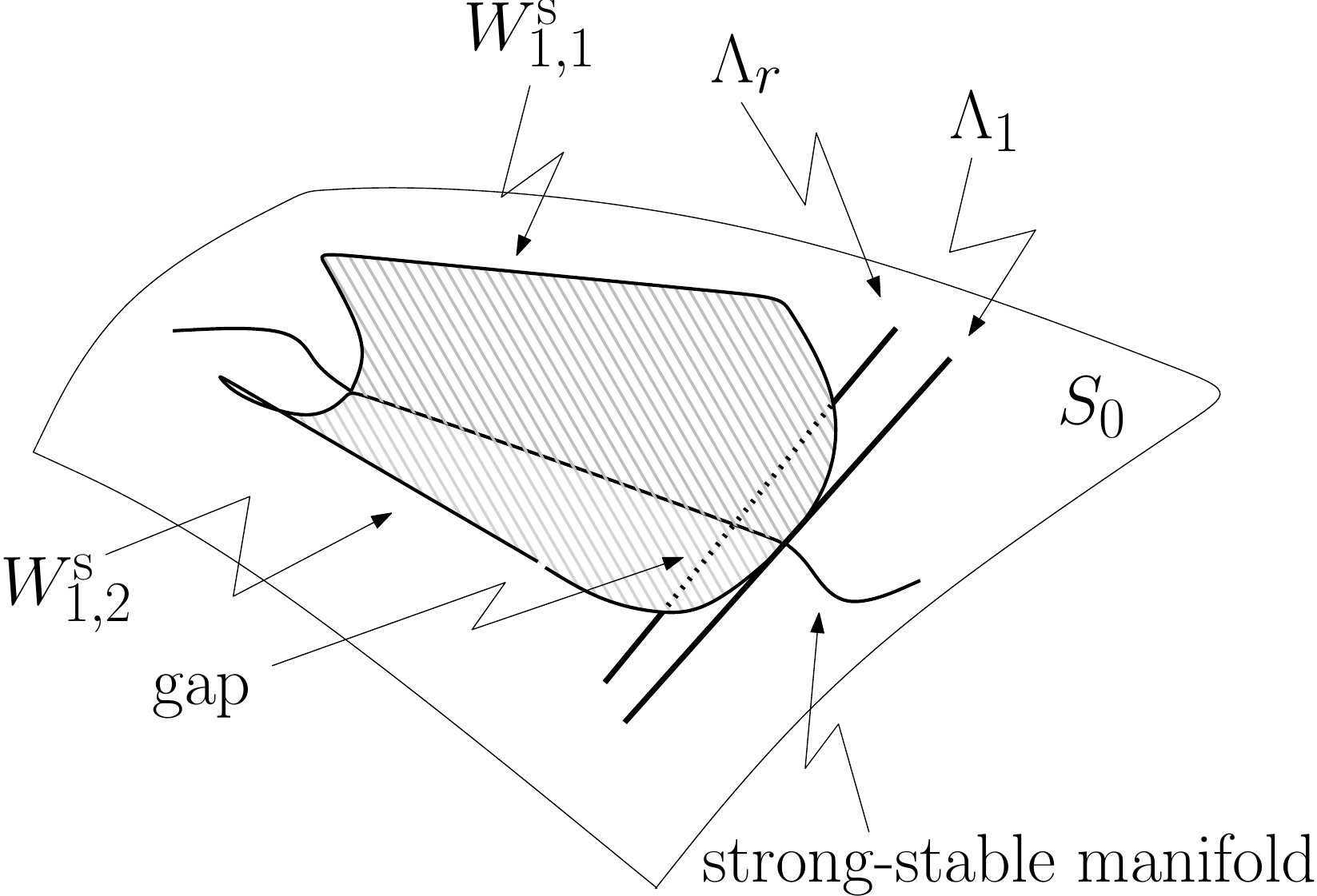}
\caption{}\label{fig:gap}
\end{minipage}
\hspace{6mm}
\begin{minipage}[b]{0.5\linewidth}
\centering
\includegraphics[scale=.35]{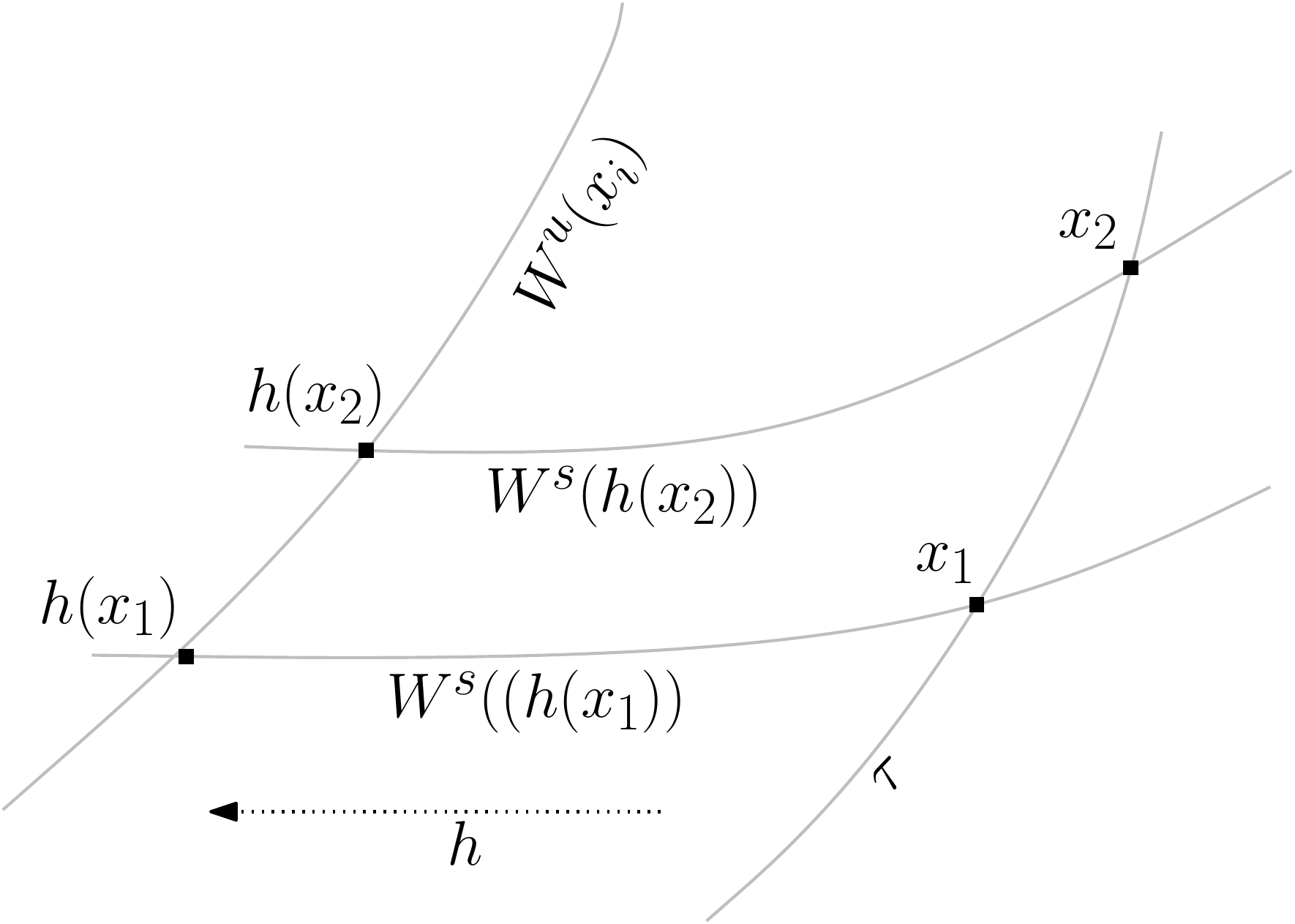}
\caption{}
\label{fig_holonomy}
\end{minipage}
\end{figure}

Fix $\epsilon > 0$ and let $C_1,\dots,C_m$ be an open cover of $\widetilde{B}_\infty$, with $\diam(C_i)\leq \epsilon$, $i = 1,\dots, m$. It follows that for all $k$ sufficiently large, $\Sigma_k$ is entirely contained in $\bigcup_i C_i$.

Now, for any $j\in\set{1,\dots,m}$, pick a gap whose endpoints lie inside of $C_j$. Call one endpoint $e_1$, and the other $e_2$. Assume that $e_1$ lies on $S_{V_1}$, and $e_2$ on $S_{V_2}$ (of course, $V_1, V_2 > 0$). Let $\vartheta\cap S_{V_1} = \set{p_1,q_1}$, and $\vartheta\cap S_{V_2}=\set{p_2,q_2}$ (here $f(p_i) = q_i$, $f(q_i) = p_i$). Say $p_1$ and $p_2$ lie on the same one of the two branches of $\vartheta$, and $e_1\in W^\mathrm{s}(p_1)$. Then $e_2\in W^\mathrm{s}(q_2)$. 

Observe that if $\mathbf{x} = (x, x/(2x - 1), x)\in\vartheta$, then $f(\mathbf{x}) = (x/(2x - 1), x, x/(2x-1))$. Hence if $\mathbf{x}\neq P_1$, either $\abs{\pi(\mathbf{x})} < 1$ or $\abs{\pi\circ f(\mathbf{x})}<1$. It follows that $\abs{\pi(p_i)}<1$ or $\abs{\pi(q_i)}<1$. Since
\begin{align*}
\abs{f^k(p_1) - f^k(e_1)}\xrightarrow[k\rightarrow\infty]{}0\text{\hspace{2mm} and \hspace{2mm}}\abs{f^k(q_2) - f^k(e_2)}\xrightarrow[k\rightarrow\infty]{}0,
\end{align*}
and $\abs{V_1 - V_2}$ is small provided that $\epsilon$ is sufficiently small (hence $\abs{p_1 - p_2}$ and $\abs{q_1 - q_2}$ are small), it follows that, for all $k$ sufficiently large, either $\abs{\pi\circ f^k(e_1)}<1$ or $\abs{\pi\circ f^k(e_2)}<1$. Therefore, for all $k$ sufficiently large, $\widetilde{\sigma}_{N+k}\cap C_j\neq \emptyset$, proving \eqref{eq:part2_cor10}.
\end{proof}
Now define
\begin{align*}
\widetilde{B}_\infty^{-} = \set{\lambda\in(-\infty, 0]:\abs{\lambda}\in \widetilde{B}_\infty} \text{\hspace{2mm} and \hspace{2mm}}
\widetilde{\sigma}_k^{-} = \set{\lambda\in(-\infty, 0]: \abs{\lambda}\in\widetilde{\sigma}_k}.
\end{align*}
Take $B_\infty = \widetilde{B}_\infty\cup\widetilde{B}_\infty^{-}$. Then $\sigma_k$ in \eqref{model_eq6} is precisely $\widetilde{\sigma}_k\cup\widetilde{\sigma}_k^{-}$, and so 
\begin{align*}
\sigma_k\xrightarrow[k\rightarrow\infty]{}B_\infty\text{\hspace{2mm}in Hausdorff metric.}
\end{align*}
This completes the proof of (i) of Theorem \ref{thm:main}. In what follows, we prove properties of $B_\infty$ stated in (ii)--(iv) of Theorem \ref{thm:main}.

\subsection{Proof of Theorem \ref{thm:main}-(ii)}\label{sec:proof-main-ii}

This is a direct consequence of Propositions \ref{part2_cor9} and \ref{part2_prop11}.

\subsection{Proof of Theorem \ref{thm:main}-(iii)}\label{sec:proof-main-iii}

Let $M$ be a smooth two-dimensional Riemannian manifold and $\Lambda\subset M$ a basic set for $f\in\diff^2(M)$. Let $\set{f_\alpha}\subset\diff^2(M)$ depending continuously on $\alpha\in\R^+$, with $f = f_0$. Then there exists $\beta > 0$ such that for all $\alpha\leq \beta$, $f_\alpha$ has a basic set $\Lambda_\alpha$ near $\Lambda_0 = \Lambda$. Let $\set{\tau_\alpha}$, $\alpha\in\R^+$, be a family of smooth compact regular curves depending continuously on $\alpha$ in the $C^1$-topology. Assume also that $\tau_0$ intersects $W^s(\Lambda_0)$ transversely. Then there exists $\beta\in\R^+$, such that for all $\alpha\in[0, \beta)$, $\tau_\alpha$ intersects $W^s(\Lambda_\alpha)$ transversely. Hence we may define the holonomy map
\begin{align}\label{holonomy_map}
h_\alpha: \tau_\alpha\cap W^s(\Lambda_\alpha)\rightarrow\Lambda_\alpha
\end{align}
by sliding points along the stable manifolds to an unstable one (see Figure \ref{fig_holonomy}). Then locally $h_\alpha^\loc$ and its inverse are well-defined, it is a homeomorphism onto its image, and, together with its inverse, is Lipschitz (Lipschitz continuity follows easily from Section \ref{a_1_3}).

\begin{prop}\label{lip_cont}
There exists $\beta > 0$ such that the Lipschitz constant for $h_\alpha^\loc$ and its inverse can be chosen uniformly for all $\alpha\in[0,\beta)$.
\end{prop}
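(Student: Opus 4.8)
The plan is to revisit the proof that each $h_\alpha^\loc$ is bi-Lipschitz (as indicated in Section \ref{a_1_3}) and to check that every constant entering the estimate can be bounded independently of $\alpha$ once $\beta$ is small enough. There are essentially four ingredients: (i) the hyperbolicity rates of the basic set $\Lambda_\alpha$, i.e.\ a contraction constant $\lambda_\alpha<1$ along $E^s$ and an expansion constant $\mu_\alpha>1$ along $E^u$; (ii) uniform $C^1$ control (size, curvature) on the local stable and unstable manifolds $W^s_\loc(x)$, $W^u_\loc(x)$, $x\in\Lambda_\alpha$; (iii) a lower bound $\theta_0>0$ on the angle of transversality between $\tau_\alpha$ and the leaves of $W^s(\Lambda_\alpha)$ (and, for the inverse map, between the fixed unstable transversal and $W^u(\Lambda_\alpha)$); and (iv) the bounded-distortion constant for the stable holonomy between two $C^1$-close unstable transversals, which in dimension two is controlled by $\|f_\alpha\|_{C^2}$ together with the rates in (i). If each of (i)--(iv) is uniform on $[0,\beta)$, then so is the resulting Lipschitz constant, and the same argument applied to $f_\alpha^{-1}$ (whose $C^2$ norm is also uniformly bounded) gives uniformity for the inverse holonomy.

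Uniformity of (i)--(iii) is obtained from the continuous dependence on $\alpha$ of all the relevant objects, exactly as already used in the proof of Proposition \ref{part2_prop7}. By structural stability (Appendix \ref{b1-1}), $\Lambda_\alpha$ is the hyperbolic continuation of $\Lambda_0$ and the splitting $E^s\oplus E^u$ over $\Lambda_\alpha$ depends continuously on $\alpha$; hence for $\beta$ small we may take $\sup_{\alpha<\beta}\lambda_\alpha<1$ and $\inf_{\alpha<\beta}\mu_\alpha>1$. The local invariant manifolds $W^{s,u}_\loc(x)$ are fixed points of the graph transform, which depends Lipschitz-continuously on the parameters by \cite[Theorems 1.1 and 7.3]{Hirsch1970}; they therefore vary continuously in the $C^1$ topology with both $x$ and $\alpha$ (cf.\ Remark \ref{part2_rem9} and \cite[Section 6]{Hirsch1977}), so their $C^1$ geometry is uniformly bounded on the compact set $\bigcup_{\alpha\in[0,\beta]}\Lambda_\alpha$. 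Finally, transversality of $\tau_0$ to $W^s(\Lambda_0)$ is a compact condition, $\tau_\alpha\to\tau_0$ in the $C^1$ topology, and the stable leaves move continuously in $C^1$; hence the transversality angle is bounded below by some $\theta_0>0$ for all $\alpha\in[0,\beta)$ after shrinking $\beta$.

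The substantive point is (iv). Fix a Markov partition for $f_0|_{\Lambda_0}$; for $\beta$ small it persists as a Markov partition for $f_\alpha|_{\Lambda_\alpha}$ with uniformly bounded geometry. Given $p,q\in\tau_\alpha\cap W^s(\Lambda_\alpha)$ close together, let $n=n(p,q)$ be the largest integer for which the subarc of $\tau_\alpha$ between $p$ and $q$ is carried by $f_\alpha^n$ into a single box; on that box the holonomy along $W^s_\loc$ between two $C^1$-close pieces of unstable transversals is $C^1$ with distortion at most $\exp\!\bigl(C\|f_\alpha\|_{C^2}\sum_{k\ge 0}\lambda_\alpha^{k}\bigr)$, by the standard telescoping estimate of hyperbolic dynamics (this is where one uses that $\dim M=2$, so the stable leaves are curves and the holonomy is a one-variable map). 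Pre- and post-composing with the first $n$ iterates (which contract distances transversally at rate $\le\lambda_\alpha^{\,n}$ and expand along $\tau_\alpha$, respectively along the unstable transversal, at rate $\ge\mu_\alpha^{\,n}$), one gets $c\le \dist\!\bigl(h_\alpha^\loc(p),h_\alpha^\loc(q)\bigr)/\dist(p,q)\le C$ with $c,C$ depending only on the uniform data of (i)--(iii) and on $\sup_{\alpha<\beta}\|f_\alpha\|_{C^2}<\infty$. The angle bound $\theta_0$ is used to convert ``distance along the transversal $\tau_\alpha$'' and ``distance along the unstable transversal'' into the intrinsic distances appearing in the definition of $h_\alpha^\loc$. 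The inverse holonomy is treated identically with the roles of $W^s$ and $W^u$, and of $f_\alpha$ and $f_\alpha^{-1}$, interchanged.

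The main obstacle I anticipate is not any single estimate but bookkeeping: making precise that the pieces of unstable transversals along which one compares at each scale $k\le n$ stay uniformly $C^1$-close and uniformly transverse as $\alpha$ varies, so that the per-step distortion bound $C\|f_\alpha\|_{C^2}\lambda_\alpha^{k}$ is legitimately uniform. This again reduces to the uniform $C^1$-continuity of the local invariant manifolds in $(x,\alpha)$ noted above, i.e.\ to the Lipschitz dependence of the graph-transform fixed point on parameters --- the same mechanism already exploited in the proof of Proposition \ref{part2_prop7}. Granting that, the remaining arguments are a routine compactness-and-continuity exercise.
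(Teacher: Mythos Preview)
Your proposal is correct and follows essentially the same route as the paper's sketch in Appendix~\ref{c}: iterate forward until the two transversals are at comparable scale, control the accumulated distortion by a telescoping (geometric) series, and then argue that every constant entering the estimate depends continuously on $\alpha$. The paper packages step (iv) via the $C^1$ stable foliation of Section~\ref{a_1_3} rather than via a Markov partition, but this is a cosmetic difference; both reduce to the same Anosov-type bounded-distortion computation.
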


To prove Proposition \ref{lip_cont}, one proceeds by a (rather standard) technique that was first introduced in \cite{Anosov1967}. The proof is sketched in Appendix \ref{c}.

Recall that a morphism of metric spaces $H:(M_1,d_1)\rightarrow (M_2,d_2)$ is said to be $(C,\nu)$-H\"older continuous provided that for all $x,y\in M_1$, $d_2(H(x),H(y))\leq Cd_1(x,y)^\nu$. We have the following, due to J. Palis and M. Viana \cite[Theorem B]{Palis1988}.
\begin{prop}\label{palis_viana}
Let $f:M\rightarrow M$ be a $C^1$ diffeomorphism on a Riemannian 2-manifold and $\Lambda\subset M$ a basic set for $f$ with $(1,1)$ splitting. Then there exists $C > 0$ and for any $\nu\in(0, 1)$ there exists $\mathcal{U}\subset \diff^1(M)$ an open neighborhood of $f$ such that for all $g\in\mathcal{U}$ and $x\in\Lambda$, $H_g|_{W^u(x)\cap \Lambda}$ and its inverse are $(C,\nu)$-H\"older continuous. Here $H_g:\Lambda\rightarrow\Lambda_g$ is the topological conjugacy (see Section \ref{b1-1}). 
\end{prop}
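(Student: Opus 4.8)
Since this is Theorem B of \cite{Palis1988}, the plan is to follow the Palis--Viana argument, and I outline the main steps. First I would invoke structural stability of basic sets: there is a $C^1$-neighborhood $\mathcal{U}$ of $f$ so that for every $g\in\mathcal{U}$ the continuation $\Lambda_g$ exists together with a topological conjugacy $H_g\colon\Lambda\to\Lambda_g$ that depends continuously on $g$, satisfies $H_f=\mathrm{Id}$, and carries unstable leaves to unstable leaves, so that $H_g|_{W^u(x)\cap\Lambda}$ lands in $W^u(H_g(x))\cap\Lambda_g$. Shrinking $\mathcal{U}$, I may assume the hyperbolic splitting $E^u_g\oplus E^s_g$ over $\Lambda_g$ is defined, varies continuously and is uniformly close to $E^u\oplus E^s$, and that the expansion/contraction constants are uniform over $\mathcal{U}$. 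Since the statement is symmetric under interchanging $(f,\Lambda)$ with $(g,\Lambda_g)$, it will be enough to bound $d(H_g(x),H_g(y))$ by a multiple of $d(x,y)^\nu$ for $x,y$ on a common unstable leaf; the bound for $H_g^{-1}$ then follows by swapping roles.

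Next I would introduce the forward separation time: for $x,y\in W^u(x_0)\cap\Lambda$ close, let $n=n(x,y)$ be the first integer with $d(f^n(x),f^n(y))\ge\delta$, where $\delta$ is a fixed small scale below the size of the local leaves and of a Markov partition of $\Lambda$. Uniform expansion along the one-dimensional leaf makes $n$ finite, keeps $d(f^n(x),f^n(y))$ between $\delta$ and a fixed constant, and makes $d(x,y)$ comparable to the contraction of $f^{-1}$ along the orbit segment $x,\dots,f^{n-1}(x)$ times $\delta$. Applying $H_g$ and using $H_g\circ f=g\circ H_g$, the same reasoning on the $g$ side shows that $d(H_g(x),H_g(y))$ is comparable to the contraction of $g^{-1}$ along the $g$-orbit segment of $H_g(x)$ of the same length $n$ times $d(g^n(H_g(x)),g^n(H_g(y)))$, which again lies in a fixed interval bounded away from $0$ and $\infty$ (by uniform continuity of $H_g$ and $H_g^{-1}$).

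The heart of the argument is the comparison of $\norm{Dg^n|_{E^u_g}}$ along the orbit of $H_g(x)$ with $\norm{Df^n|_{E^u}}$ along the orbit of $x$. I would take logarithms, turning both into sums of the single-step quantities $\log\norm{Dg|_{E^u_g}}$, resp.\ $\log\norm{Df|_{E^u}}$; because $g$ is $C^1$-close to $f$, $E^u_g$ close to $E^u$ and $H_g$ close to the identity, each summand changes by at most some $\varepsilon=\varepsilon(\mathcal{U})\to 0$, so the two sums differ by at most $n\varepsilon$. Since every summand for $f$ is at least $\log\lambda>0$ by uniform expansion, this yields $\norm{Dg^n|_{E^u_g}}\ge\norm{Df^n|_{E^u}}^{1-\varepsilon/\log\lambda}$, hence $d(H_g(x),H_g(y))\le C\,d(x,y)^{1-\varepsilon/\log\lambda}$, with $C$ absorbing the bounded-geometry constants of the Cantor set inside the leaf. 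Given $\nu\in(0,1)$ I would then shrink $\mathcal{U}$ so that $1-\varepsilon/\log\lambda>\nu$; uniformity of $C$ over $x\in\Lambda$ and $g\in\mathcal{U}$ is a compactness argument using the uniform hyperbolicity constants. The hard part is exactly this cocycle comparison in the $C^1$ category: one does not have the usual bounded-distortion estimates, so the comparison of the derivative products along orbit segments must be carried out using only uniform continuity of $Df$ together with the Markov/bounded-geometry structure of $\Lambda$ — this is the content of \cite{Palis1988} and is the step I expect to be the main obstacle.
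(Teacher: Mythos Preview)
The paper does not give its own proof of this proposition: it is stated with the attribution ``\cite[Theorem B]{Palis1988}'' and used as a black box. Your outline follows the Palis--Viana strategy faithfully --- structural stability to get $H_g$ close to the identity, the separation time $n(x,y)$ to reduce the H\"older estimate to a comparison of unstable derivative cocycles over orbit segments of length $n$, and the observation that each summand $\log\norm{Dg|_{E^u_g}(H_g(f^k x))}-\log\norm{Df|_{E^u}(f^k x)}$ is uniformly small because $H_g$ is $C^0$-close to the identity, $Dg$ is $C^0$-close to $Df$, and $E^u_g$ is $C^0$-close to $E^u$. The resulting bound $n\varepsilon$ on the log-ratio converts to the H\"older exponent $1-\varepsilon/\log\lambda$, and you correctly identify the delicate point: no $C^2$ bounded-distortion is available, so the argument has to go through the uniform continuity of the derivative and the combinatorics of the Markov structure. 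There is nothing to compare against in the present paper beyond the citation itself, and your sketch is a reasonable summary of what the cited reference does.
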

Under the hypothesis of and with the notation from Proposition \ref{lip_cont}, we get
\begin{lem}\label{holder_lemma}
Let $\beta > 0$ satisfying Proposition \ref{lip_cont}. There exists $C > 0$ and for any $\nu\in(0,1)$ there exists $\beta_0\in(0,\beta)$, such that for any $\alpha_1,\alpha_2\in [0, \beta_0)$, the map
\begin{align*}
[h_{\alpha_2}^\loc]^{-1}\circ H_{\alpha_1,\alpha_2}\circ h_{\alpha_1}^\loc: \tau_{\alpha_1}\cap W^s(\Lambda_{\alpha_1})\rightarrow \tau_{\alpha_2}\cap W^s(\Lambda_{\alpha_2}),
\end{align*}
where $H_{\alpha_1,\alpha_2}:\Lambda_{\alpha_1}\rightarrow\Lambda_{\alpha_2}$ is the topological conjugacy, is $(C,\nu)$-H\"older continuous (see Figure \ref{fig_palis_viana}); that is, we have the following diagram, with $(h_{\alpha_i}^{\loc})^{\pm 1}$ ($i = 1, 2$) Lipschitz, with the same Lipschitz constant, and $H_{\alpha_1,\alpha_2}$ H\"older continuous.
\begin{center}
\begin{tikzpicture}
 \matrix (m) [matrix of math nodes,
	      row sep = 3em,
	      column sep = 6em,
	      minimum width = 2em,text height=1.5ex, text depth=0.25ex]
 {
  \tau_1	&	W^u(y)\\
  \tau_2	&	H_{\alpha_1,\alpha_2}(W^u(y))\\
 };
 \path[-stealth]
  (m-1-1) edge node [above] {$h_{\alpha_1}$} (m-1-2)
  (m-1-2) edge node [right] {$H_{\alpha_1, \alpha_2}$} (m-2-2)
  (m-1-1) edge node [left]  {$h_{\alpha_2}^{-1}\circ H_{\alpha_1,\alpha_2}\circ h_{\alpha_1}$} (m-2-1)
  (m-2-2) edge node [below] {$h_{\alpha_2}^{-1}$} (m-2-1);
\end{tikzpicture}
\end{center}
where $y$ is such that $W^u(y)$ is an unstable manifold at the point $y$ containing the image of $h_{\alpha_1}$.
\end{lem}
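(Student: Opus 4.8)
The plan is to exhibit $[h_{\alpha_2}^{\loc}]^{-1}\circ H_{\alpha_1,\alpha_2}\circ h_{\alpha_1}^{\loc}$ as a composition of the shape $(\text{Lipschitz})\circ(\text{H\"older})\circ(\text{Lipschitz})$ and to track how exponents and constants behave under composition. First I would record the elementary observation that if $\varphi\colon(X,d_X)\to(Y,d_Y)$ is $(C_1,\mu_1)$-H\"older and $\psi\colon(Y,d_Y)\to(Z,d_Z)$ is $(C_2,\mu_2)$-H\"older, then $\psi\circ\varphi$ is $(C_2C_1^{\mu_2},\mu_1\mu_2)$-H\"older; in particular precomposing or postcomposing with a Lipschitz map ($\mu=1$) leaves the exponent unchanged and only alters the constant. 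Accordingly, given the target exponent $\nu\in(0,1)$, I would run the argument with intermediate exponent $\mu=\sqrt{\nu}\in(0,1)$.

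Next I would dispose of the middle factor. Write $H_{f_\alpha}\colon\Lambda_0\to\Lambda_\alpha$ for the topological conjugacy furnished by structural stability of the basic set $\Lambda_0$ of $f_0$ (Section \ref{b1-1}); then $H_{\alpha_1,\alpha_2}=H_{f_{\alpha_2}}\circ H_{f_{\alpha_1}}^{-1}$. Applying Proposition \ref{palis_viana} with $f=f_0$ and exponent $\mu$ produces a constant $C_0>0$ and a $C^1$-neighborhood $\mathcal{U}$ of $f_0$ such that for every $g\in\mathcal{U}$ the restrictions of $H_g$ and of $H_g^{-1}$ to any unstable slice ($W^{\mathrm{u}}(\cdot)$ intersected with the relevant basic set) are $(C_0,\mu)$-H\"older. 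Choose $\beta_0\in(0,\beta)$ so small that $f_\alpha\in\mathcal{U}$ for all $\alpha\in[0,\beta_0)$ (possible by continuity of $\alpha\mapsto f_\alpha$). Since a topological conjugacy carries unstable manifolds to unstable manifolds, for $\alpha_1,\alpha_2\in[0,\beta_0)$ the composition $H_{\alpha_1,\alpha_2}$, restricted to any $W^{\mathrm{u}}(y)\cap\Lambda_{\alpha_1}$, is $(C_0^{1+\mu},\mu^2)=(C_0^{1+\mu},\nu)$-H\"older by the composition rule.

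Finally I would attach the holonomies. By Proposition \ref{lip_cont} there is a Lipschitz constant $L$ for $h_\alpha^{\loc}$ and its inverse that is uniform over $\alpha\in[0,\beta)$. The image of $h_{\alpha_1}^{\loc}$ lies on a single local unstable manifold $W^{\mathrm{u}}(y)$; taking the unstable manifold anchoring $h_{\alpha_2}^{\loc}$ to be $W^{\mathrm{u}}(H_{\alpha_1,\alpha_2}(y))$ makes the three maps composable with matching domains and codomains (this is exactly the normalization displayed in the statement). Applying the composition rule with the two Lipschitz factors then shows $[h_{\alpha_2}^{\loc}]^{-1}\circ H_{\alpha_1,\alpha_2}\circ h_{\alpha_1}^{\loc}$ is $(L^{1+\nu}C_0^{1+\mu},\nu)$-H\"older; since $1+\nu<2$ and $1+\mu<2$ the constant is dominated by the $\nu$-independent quantity $C:=(1+L)^2(1+C_0)^2$, which yields $(C,\nu)$-H\"older continuity as claimed, with $C$ and the exponent $\nu$ independent of $\alpha_1,\alpha_2$.

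The argument is essentially bookkeeping on top of Propositions \ref{lip_cont} and \ref{palis_viana}; the one point that genuinely needs care — and the likeliest target of an objection — is the uniformity of all constants in the parameter. The device that handles it is precisely the factorization of $H_{\alpha_1,\alpha_2}$ through the fixed base map $f_0$, which pins down $C_0$ and $\mathcal{U}$ once and for all, leaving $f_\alpha\in\mathcal{U}$ as the only residual $\alpha$-dependence, while uniformity of the Lipschitz constant is already supplied by Proposition \ref{lip_cont}. A secondary, purely technical nuisance is ensuring that all the local holonomies and unstable slices in play are simultaneously defined on the relevant pieces of curve and of $W^{\mathrm{u}}$, which is arranged by shrinking $\beta_0$ and the local neighborhoods as needed.
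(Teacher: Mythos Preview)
Your proof is correct and follows essentially the same route as the paper: exhibit the map as the composition $(\text{Lipschitz})\circ(\text{H\"older})\circ(\text{Lipschitz})$ and invoke Propositions~\ref{lip_cont} and~\ref{palis_viana}. The paper's own argument is considerably terser---it simply sets up the composable maps and says the two propositions ``can be applied together''---whereas you have added the useful refinement of factoring $H_{\alpha_1,\alpha_2}=H_{f_{\alpha_2}}\circ H_{f_{\alpha_1}}^{-1}$ through the fixed base $f_0$ (with the $\mu=\sqrt{\nu}$ trick) to make the uniformity of the H\"older constant in $\alpha_1,\alpha_2$ explicit; this is a genuine improvement in rigor over the paper's sketch, not a different method.
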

\begin{rem}
 We cannot expect any higher modulus of continuity than H\"older in general. Indeed, we cannot expect the Hausdorff dimension of the hyperbolic sets of a family of diffeomorphisms to be constant (which would be implied if above instead of H\"older continuity we had Lipschitz); very simiple examples can be easily constructed. Yet more generally, holonomy maps along the so-called \textit{center foliations} are quite bad: see J. Milnor's exposition of Katok's example of so-called \textit{Fubini nightmare} in \cite{Milnor1997}, as well as genericity results in \cite{Shub2000}.
\end{rem}

\begin{figure}[t]
\centering
\includegraphics[scale=.5]{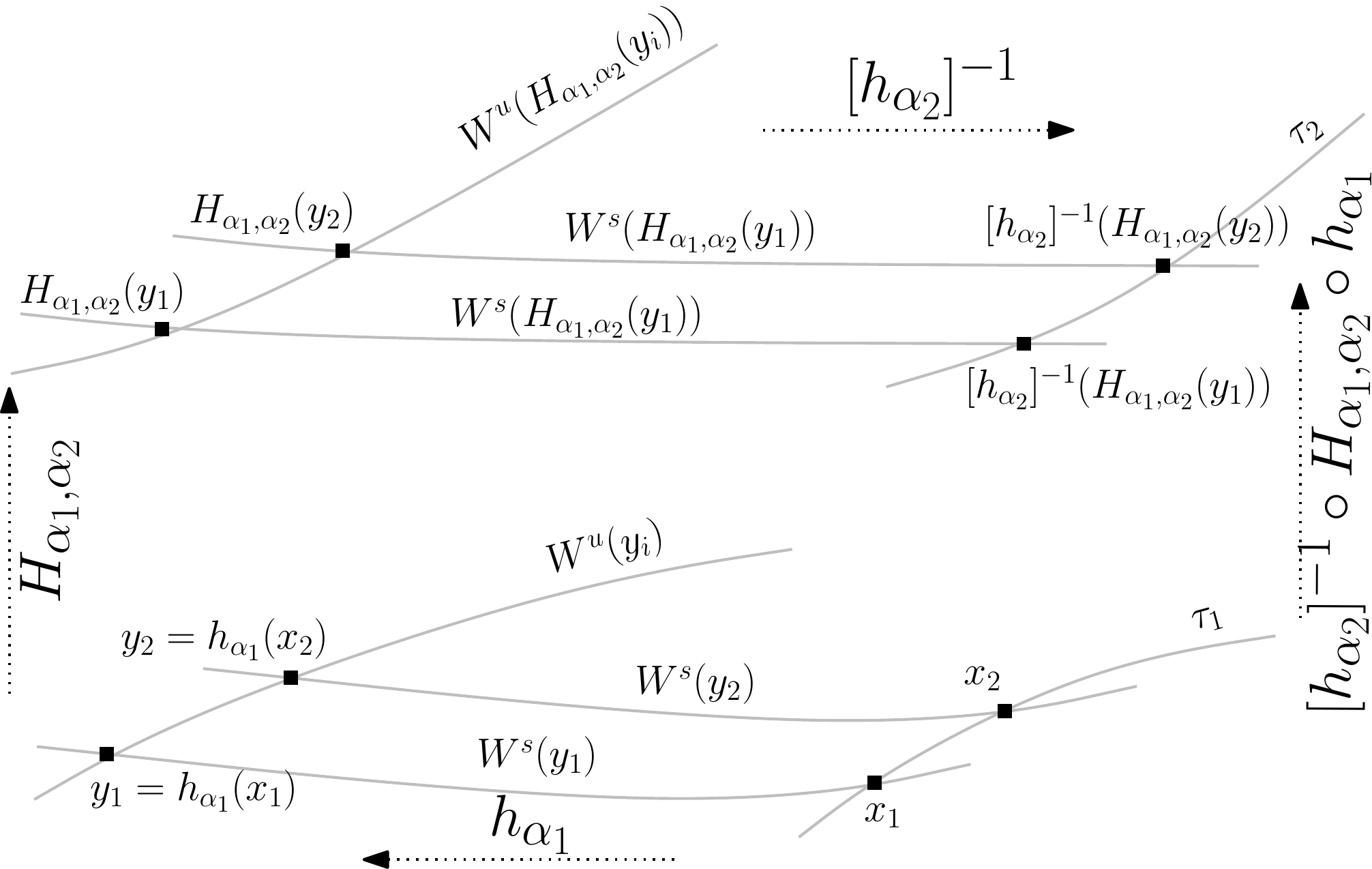}
\caption{}\label{fig_palis_viana}
\end{figure}
\begin{proof}
Let $x_1\in \tau_1\cap W^s(\Lambda_{\alpha_1})$ and $h_{\alpha_1}^\loc(x_1)\in W^u(y_1)$. Since the stable and unstable manifolds depend continuously on the point and on the diffeomorphism, if $\alpha_2$ is sufficiently close to $\alpha_1$, then $\tau_{\alpha_2}\cap W^s(H_{\alpha_1,\alpha_2}(y_1))\neq \emptyset$. Let $x_2\in W^s(H_{\alpha_1,\alpha_2}(y_1))\cap\tau_{\alpha_2}$ and $y_2 = H_{\alpha_1,\alpha_2}(y_1)$. Now $h_{\alpha_2}^\loc: U\rightarrow W^u(y_2)$, where $U$ is a neighborhood of $x_2$ in $\tau_{\alpha_2}$, is defined and Propositions \ref{lip_cont} and \ref{palis_viana} can be applied together.
\end{proof}

Recall from \eqref{eq:new_gamma}:
\begin{align*}
\gamma_r(\lambda) = \left(\frac{\lambda - (1 + J_1^2)}{2J_1},\frac{\lambda - (1 + r^2J_1^2)}{2rJ_1},\frac{1 + r^2}{2r}\right).
\end{align*}
Hence $\gamma_r$ lies in the plane
\begin{align*}
\Pi_r = \set{z = \frac{1 + r^2}{2r}}.
\end{align*}

Let $r_0 \in(0,1)$ be as in Proposition \ref{part2_prop11}. Let $r\in (1 - r_0, 1 + r_0)$, $r\neq 1$. Fix $x_0\in \gamma_r$ whose forward orbit under $f$ is bounded. Pick $\delta$ small, with $0 < \delta < V_0$, and let $\Gamma$ be a compact segment along $\gamma_r$ containing $x_0$, with endpoints lying on $S_{V_0 - \delta}$ and $S_{V_0 + \delta}$ (note, $x_0$ may be an endpoint of $\Gamma$). Then $\Gamma$ intersects the center-stable manifolds, as well as the surfaces $S_V$, $V\in [V_0 - \delta, V_0 + \delta]$, transversely. For $V\in [V_0 - \delta, V_0 + \delta]$, let $\tau_V$ denote the projection of $\Gamma$ onto $S_V$ along the plane $\Pi_r$. Then $\tau_V$ is a smooth, compact regular curve in $S_V$ intersecting $W^s(\Omega_V)$ transversely (transversality with the center-stable manifolds follows from Proposition \ref{part2_prop11}). Let $\mathcal{C} = \tau_{V_0}\cap W^s(\Omega_V)$. For every $x\in\mathcal{C}$, let
\begin{align*}
\vartheta_x = W^{cs}(x)\cap \Pi_r\cap\left(\bigcup_{V\in[V_0 - \delta, V_0 + \delta]}S_V\right),
\end{align*}
where $W^{cs}(x)$ is the center-stable manifold containing $x$. Then (see Figure \ref{fig_c1_c3})
\begin{enumerate}[\text{C}1.]
\item $\set{\tau_V}$ form a smooth foliation of $\Pi$;
\item $\vartheta_x$ is a smooth compact regular curve with endpoints lying in $S_{V_0 - \delta}$ and $S_{V_0 + \delta}$; 
\item $\set{\vartheta_x}_{x\in\mathcal{C}}$ intersects $\set{\tau_V}_{V\in[V_0 - \delta, V_0 + \delta]}$ and $\Gamma$ uniformly transversely, and the curves $\vartheta_x$ depend continuously on $x$ in the $C^1$-topology (see Remark \ref{part2_rem9});
\item For $\delta > 0$ sufficiently small, $\set{\tau_V}_{V\in[V_0 - \delta, V_0 + \delta]}$ and $\set{f_V}_{V\in[V_0 - \delta, V_0 + \delta]}$ satisfy the hypothesis of Lemma \ref{holder_lemma}. In particular, there exists $C > 0$ and for any $\nu \in (0,1)$ there exists $0 < \epsilon < \delta$, such that the map $\mathcal{C}\ni x\mapsto \tau_V\cap W^s(\Omega_V)$ defined by projecting points along the curves $\set{\vartheta_x}_{x\in\mathcal{C}}$ is $(C,\nu)$-H\"older continuous for all $V\in [V_0 - \epsilon, V_0 + \epsilon]$.
\end{enumerate}
\begin{figure}[t]
\centering
\includegraphics[scale=.5]{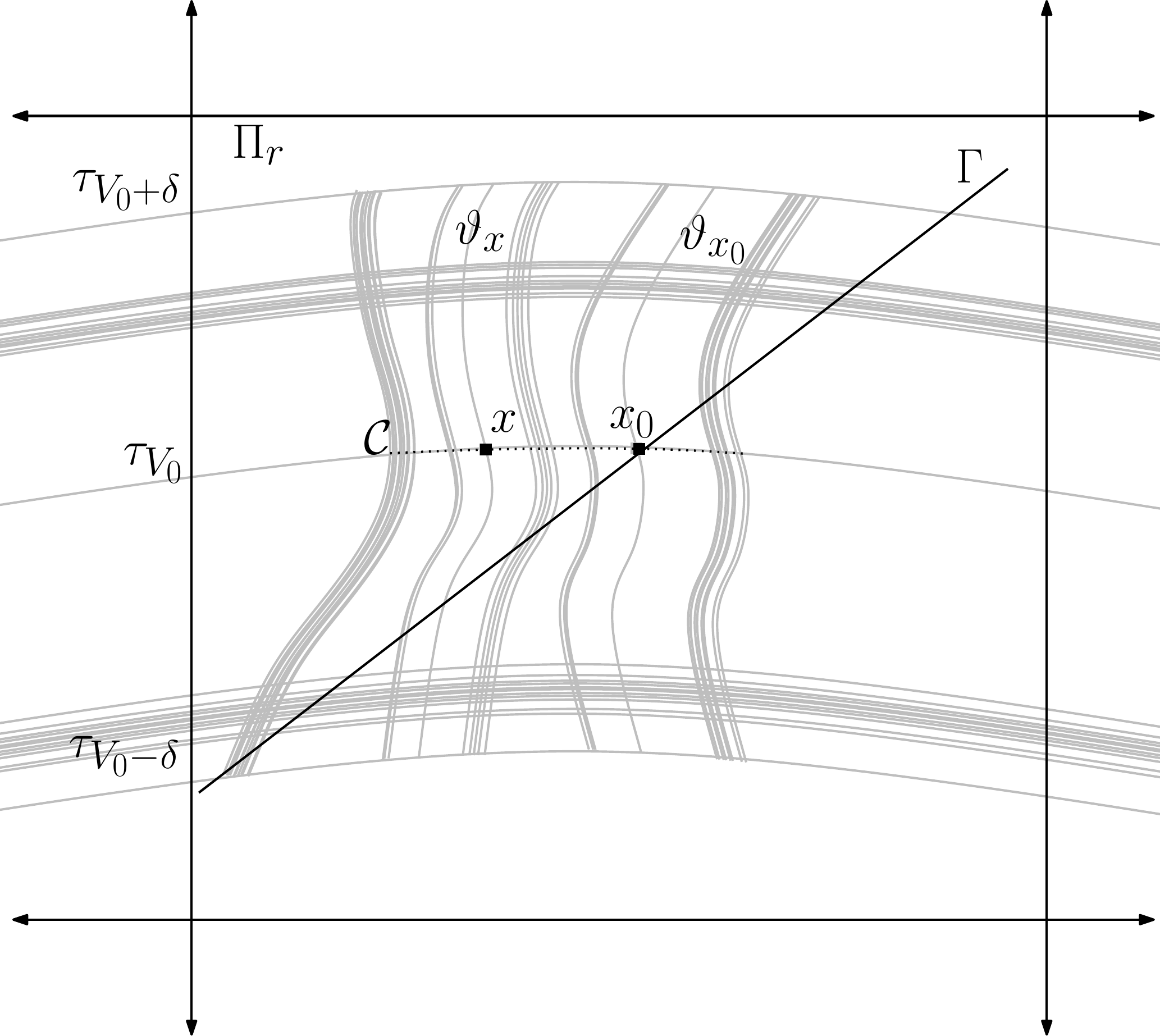}
\caption{}
\label{fig_c1_c3}
\end{figure}

Now, from C1--C4 it follows that there exists a sufficiently small neighborhood $U$ of $x_0$ in $\tau_{V_0}$ and $\widetilde{C} > 0$, such that the map $U\cap\mathcal{C} \ni x\mapsto \Gamma$ defined by projecting points along the curves $\set{\vartheta_x}_{x\in U\cap\mathcal{C}}$ is $(\widetilde{C},\nu)$-H\"older continuous (see \cite[Lemma 3.5]{Yessen2011} for technical details). Hence
\begin{align}\label{local_dim}
\lhdim(\Gamma,x_0) = \lhdim(\mathcal{C}, x_0).
\end{align}
On the other hand,
\begin{align*}
\lhdim(\mathcal{C},x_0) = \hdim(W_\loc^u(h(x_0))) = h^u(\Omega_{V_0}),
\end{align*}
where $h: \mathcal{C}\rightarrow \Omega_{V_0}$ is a holonomy map as defined in \eqref{holonomy_map}, and $h^u$ is defined in Section \ref{b1-4} as the Hausdorff dimension of $\Omega_V$ along leafs of the unstable lamination $W^\mathrm{u}(\Omega_V)$. Now, $h^u(\Omega_V)$ depends continuously (in fact analytically, as we shall see below) on $V$. It follows that the local Hausdorff dimension is continuous over $B_\infty$.

\begin{prop}[{\cite[Theorem 5.23]{Cantat2009}}]\label{cantat_prop}
Let $\gamma$ be an analytic curve in $\bigcup_{V> 0}S_V$ parameterized on $(0,1)$. Then $h^{s,u}\left(\Omega_{\gamma(t)}\right): (0,1)\rightarrow\R$ is analytic with values strictly between zero and one.
\end{prop}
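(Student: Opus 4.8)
The plan is to follow the route of \cite[Theorem 5.23]{Cantat2009}: use the thermodynamic formalism on the hyperbolic basic sets $\Omega_V$, $V>0$, of Theorem \ref{part2_thm1}, together with Bowen's dimension formula and the real-analytic dependence of topological pressure on parameters. First observe that along $\gamma$ the invariant $V(t):=I(\gamma(t))$ is a real-analytic, strictly positive function of $t$ (the analytic curve $\gamma$ composed with the polynomial $I$), so $h^{s,u}(\Omega_{\gamma(t)})=h^{s,u}(\Omega_{V(t)})$ and it suffices to show that $V\mapsto h^{s,u}(\Omega_V)$ is real-analytic on $(0,\infty)$ with values in $(0,1)$. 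For fixed $V$, since $S_V$ is a surface and $\Omega_V$ a transitive hyperbolic basic set with $(1,1)$ splitting, the restriction of $f_V$ to a one-dimensional unstable leaf is a trivially conformal expanding cookie-cutter, so the McCluskey--Manning formula applies: $h^u(\Omega_V)=t^u(V)$ is the unique zero of
\begin{align*}
t\longmapsto P_{f_V}\bigl(-t\,\varphi^u_V\bigr),\qquad \varphi^u_V(x):=\log\bigl\|Df_V|_{E^u(x)}\bigr\|,
\end{align*}
and, applying the same to $f_V^{-1}$, $h^s(\Omega_V)=t^s(V)$ is the unique zero of $t\mapsto P_{f_V^{-1}}(-t\,\varphi^s_V)$ with $\varphi^s_V(x)=\log\|Df_V^{-1}|_{E^s(x)}\|$. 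Both potentials are H\"older and, after passing to an adapted metric or a fixed iterate, uniformly bounded below by a positive constant on compact ranges of $V$.

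Next I would fix $V_0>0$, take a Markov partition for $f_{V_0}|_{\Omega_{V_0}}$ coding $\Omega_{V_0}$ by a fixed subshift of finite type $(\Sigma,\sigma)$, and, using the hyperbolic continuation, transport the geometric potentials to $\Sigma$ as functions $\widetilde\varphi^{s,u}_V\in C^\theta(\Sigma)$. The key analytic input is that $V\mapsto\widetilde\varphi^{s,u}_V$ is a \emph{real-analytic} curve in the H\"older Banach space $C^\theta(\Sigma)$, for a suitable $\theta\in(0,1)$ and with all estimates locally uniform in $V$; this holds because $f$ depends analytically on $V$, the line fields $E^{s,u}_V$ over $\Omega_V$ are fixed points of graph-transform contractions with analytically varying data, and the Markov coordinates can be chosen to depend analytically on $V$. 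Ruelle's theorem on the analyticity of pressure then makes $(t,V)\mapsto P_{f_{V_0}}(-t\,\widetilde\varphi^u_V)$ --- the logarithm of the leading eigenvalue of the corresponding transfer operator --- jointly real-analytic on its natural domain, with
\begin{align*}
\frac{\partial}{\partial t}P_{f_{V_0}}\bigl(-t\,\widetilde\varphi^u_V\bigr)=-\int\widetilde\varphi^u_V\,d\mu_{t,V}<0
\end{align*}
because $\widetilde\varphi^u_V>0$ uniformly. The analytic implicit function theorem then gives that $t^u(V)=h^u(\Omega_V)$ is real-analytic in $V$, and symmetrically for $t^s(V)$; composing with $V(t)$ gives real-analyticity of $t\mapsto h^{s,u}(\Omega_{\gamma(t)})$ on $(0,1)$.

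For the bounds, at $t=0$ the pressure equals $h_{\mathrm{top}}(f_V|_{\Omega_V})$, which is strictly positive: $\Omega_V$ is an infinite (indeed Cantor) transitive hyperbolic basic set, hence contains a horseshoe, and by structural stability this entropy is constant in $V>0$ --- it equals the entropy $\log\frac{1+\sqrt5}{2}$ of the Fibonacci Markov subshift, $f|_{\mathbb S}$ being a factor of the hyperbolic toral automorphism $\mathcal A$ of that entropy. Hence $t^u(V)>0$, and $t^s(V)>0$ as well since entropy is unchanged under $f\mapsto f^{-1}$. For the upper bound, the variational principle and Ruelle's inequality give $P_{f_V}(-\varphi^u_V)\le 0$, with equality forcing an invariant measure that achieves equality in Ruelle's inequality, hence an SRB measure on $\Omega_V$, hence $\Omega_V$ an attractor; but $\Omega_V$ is not an attractor, since an unstable leaf $W^u(x)$ with $x\in\Omega_V$ meets the stable lamination $W^s(\Omega_V)$ in only a Cantor set and therefore contains points with unbounded forward orbit, so $W^u(x)\not\subset\Omega_V$. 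Thus $P_{f_V}(-\varphi^u_V)<0$, giving $t^u(V)<1$, and the symmetric argument for $f_V^{-1}$ gives $t^s(V)<1$.

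The main obstacle is the analytic dependence claimed in the second paragraph: one must show that the geometric potentials vary \emph{analytically}, not merely continuously or H\"older-continuously, in $V$ as elements of one fixed H\"older Banach space, with uniform control on compact $V$-ranges --- this entails choosing the Markov coding so that its coordinates depend analytically on $V$ and tracking the analytic dependence of $E^{s,u}_V$ through the graph-transform fixed-point equation. Once this is secured, the remaining steps are standard thermodynamic-formalism bookkeeping, and in fact the statement is exactly \cite[Theorem 5.23]{Cantat2009}, to which the reader may simply be referred.
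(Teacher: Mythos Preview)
The paper does not prove this proposition at all: it is simply quoted from \cite[Theorem~5.23]{Cantat2009} and used as a black box. So there is no proof in the paper to compare against; the relevant question is whether your sketch is a faithful outline of the argument behind the cited result.

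Your route is the standard one --- Bowen/McCluskey--Manning dimension formula, Ruelle's analyticity of pressure, and the analytic implicit function theorem --- and it is essentially correct. A couple of small remarks. First, your reduction ``it suffices to show $V\mapsto h^{s,u}(\Omega_V)$ is analytic and compose with $V(t)$'' is fine, since $V(t)=I(\gamma(t))$ is a real-analytic function and composition of real-analytic maps is real-analytic; you do not even need $V(t)$ to be nonconstant. Second, your entropy computation via the factor map $F$ is a shade imprecise, since $F$ is two-to-one rather than a conjugacy; but topological entropy is preserved under finite-to-one factors, so the conclusion $h_{\mathrm{top}}(f_V|_{\Omega_V})=\log\frac{1+\sqrt5}{2}>0$ is correct (alternatively, structural stability alone already gives constancy in $V$, and positivity follows from $\Omega_V$ being infinite and hyperbolic). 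Your argument for $t^{s,u}(V)<1$ via the non-attractor observation is the right idea.

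You have also correctly identified the genuine obstacle: establishing that $V\mapsto\widetilde\varphi^{s,u}_V$ is a real-analytic curve in a fixed H\"older space. This is where Cantat does the real work, and your description of how one would approach it (analytic dependence of $E^{s,u}_V$ through graph-transform fixed points, analytic choice of Markov coordinates) is accurate but is a nontrivial exercise to carry out in full. Since the paper itself treats the proposition as an external input, citing \cite{Cantat2009} directly --- as you suggest in your last sentence --- is exactly what the paper does.
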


\begin{prop}[{\cite[Theorem 1]{Damanik2009c}}]\label{damanik_gorodetski_prop}
The Hausdorff dimension of $\Omega_V$ is right-continuous at $V = 0$: $\lim_{V\rightarrow 0^+}h^u(\Omega_V) = 1$.
\end{prop}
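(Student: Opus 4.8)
The goal is the lower bound $\liminf_{V\to 0^+} h^u(\Omega_V)\geq 1$; the opposite inequality $h^u(\Omega_V)\leq 1$ is automatic, since each local unstable leaf of the $(1,1)$-splitting of $\Omega_V$ (Theorem~\ref{part2_thm1}) is one-dimensional. Because $f_V$ restricted to an unstable leaf is a one-dimensional $C^{1+}$ expanding map with bounded distortion, $h^u(\Omega_V)$ is governed by Bowen's equation for the pressure $P_{f_V}(-s\log\|Df_V|_{E^u}\|)$, and all one needs is continuity of the zero of this pressure as $V\to 0^+$. The obstruction is that $S_V$ degenerates onto the singular surface $S_0$ and $f$ has a neutral direction along $\mathrm{Per}_2(f)$ through $P_1,\dots,P_4$, so the hyperbolicity of $\Omega_V$ is not uniform as $V\to 0^+$ and the Markov partition of $\Omega_V$ near the $P_i$ is delicate. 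The plan is to \emph{excise} the singularities: fix a small open $U\supset\{P_1,\dots,P_4\}$ and work with the sub-hyperbolic set $\Omega_{V,U}\subset\Omega_V$ consisting of points whose full $f$-orbit remains in the compact component $\mathbb{S}_{V,U}$ of $S_V\setminus U$, and then let $U$ shrink.

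First, for each fixed such $U$ the family $\{f_V|_{\mathbb{S}_{V,U}}\}_{V\in[0,V_0)}$ is a smooth family of maps on a fixed manifold: this is exactly the content of the smooth identifications $\pi_V:\mathbb{S}_{0,U}\to\mathbb{S}_{V,U}$ (depending smoothly on $V$, with $\pi_0=\mathrm{Id}$) introduced before Lemma~\ref{cones_on_compacta}. Hence $\Omega_{V,U}$ — the maximal invariant set of $f_V|_{\mathbb{S}_{V,U}}$ — is a uniformly hyperbolic basic set with hyperbolicity constants that do \emph{not} degenerate as $V\to 0^+$ (compare Lemmas~\ref{cones_on_compacta}--\ref{uniform_expansion}), topologically conjugate to a single subshift of finite type whose combinatorics is that of $\mathbb{S}_{0,U}$. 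By continuity of the unstable Hausdorff dimension of a basic set of a surface map under smooth perturbation (the McCluskey--Manning/Palis--Takens theorem, i.e.\ continuity of the Bowen root), $h^u(\Omega_{V,U})\to h^u(\Omega_{0,U})$ as $V\to 0^+$, where $\Omega_{0,U}\subset\mathbb{S}$ is the maximal $f_0$-invariant set in $\mathbb{S}_{0,U}$. Now identify $\Omega_{0,U}$ with an Anosov subsystem: by \eqref{semiconjugacy}, $\Omega_{0,U}=F(\Lambda_U)$ with $\Lambda_U\subset\mathbb{T}^2$ the maximal $\mathcal{A}$-invariant set in $\mathbb{T}^2\setminus F^{-1}(U)$, and since $F^{-1}(U)$ contains the (four-point) branch locus of $F$, the map $F|_{\Lambda_U}$ is an at most two-to-one, leafwise bi-Lipschitz factor map, so $h^u(\Omega_{0,U})=h^u(\Lambda_U)$. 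For the linear map $\mathcal{A}$ the unstable expansion is the constant $\mu=(1+\sqrt5)/2$, so Bowen's equation gives $h^u(\Lambda_U)=h_{\mathrm{top}}(\mathcal{A}|_{\Lambda_U})/\log\mu$.

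Finally let $U\downarrow\{P_1,\dots,P_4\}$. The sets $\Lambda_U$ increase to $\mathbb{T}^2\setminus\bigcup_{n\in\Z}\mathcal{A}^n(\{\text{four branch points}\})$, a set of full Lebesgue measure; since Lebesgue (Haar) is the measure of maximal entropy of the hyperbolic toral automorphism $\mathcal{A}$, with $h_{\mathrm{Leb}}(\mathcal{A})=\log\mu=h_{\mathrm{top}}(\mathcal{A})$, a standard approximation (e.g.\ restricting to sub-Markov-partitions that avoid a shrinking neighborhood of the bad points) gives $h_{\mathrm{top}}(\mathcal{A}|_{\Lambda_U})\to\log\mu$, hence $h^u(\Omega_{0,U})\to 1$. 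Combining with the monotonicity $h^u(\Omega_V)\geq h^u(\Omega_{V,U})$ and the previous paragraph, $\liminf_{V\to0^+}h^u(\Omega_V)\geq h^u(\Omega_{0,U})$ for every admissible $U$; letting $U$ shrink yields $\liminf_{V\to0^+}h^u(\Omega_V)\geq 1$, and with $h^u\leq 1$ this proves $\lim_{V\to0^+}h^u(\Omega_V)=1$. The crux of the argument — and the step I expect to be most delicate — is the continuity ``down to $V=0$'' of $h^u(\Omega_{V,U})$: one must control the Markov partition of $\Omega_{V,U}$ and the unstable-expansion cocycle uniformly in $V\in[0,V_0)$ (using the smooth family $S_V$, its normal bundle, and the cone fields of Section~\ref{subsec:dynamics-v}), and simultaneously verify that the $V=0$ limit system is genuinely the $\mathcal{A}$-subsystem $\Lambda_U$ — which is why it is essential to keep all estimates on the compacta $\mathbb{S}_{0,U}$, away from the branch points where the factor map $F$ degenerates.
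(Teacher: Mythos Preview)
The paper does not prove this proposition at all: it is stated with a bare citation to \cite[Theorem 1]{Damanik2009c} and used as a black box in Sections~\ref{sec:proof-main-iii} and~\ref{sec:last}. So there is no ``paper's own proof'' to compare against. What you have written is in fact a faithful outline of the argument in the cited Damanik--Gorodetski paper: excise a neighborhood $U$ of the singularities, use that on the compacta $\mathbb{S}_{V,U}$ the family $f_V$ is a genuine $C^2$ family with uniformly hyperbolic maximal invariant sets $\Omega_{V,U}$, invoke continuity of the unstable dimension (Bowen root) to pass to $V=0$, identify the $V=0$ limit with a subsystem of the linear toral automorphism $\mathcal{A}$ via the semiconjugacy $F$, and finally let $U$ shrink so that $h^u(\Omega_{0,U})\to 1$.

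Two small points deserve tightening if you want this to stand on its own. First, you should check that $\Omega_{V,U}$ is actually a \emph{basic} set (locally maximal and transitive), not merely hyperbolic; otherwise the McCluskey--Manning continuity result does not apply directly. This is most cleanly arranged by choosing $U$ so that $F^{-1}(U)$ is a union of Markov rectangles for $\mathcal{A}$, which forces $\Lambda_U$ (and hence each $\Omega_{V,U}$ for small $V$) to be a transitive subshift of finite type. Second, the assertion that $F|_{\Lambda_U}$ is ``leafwise bi-Lipschitz'' is correct but needs the observation that $DF$ is nonsingular away from the branch points (the preimages of $P_1,\dots,P_4$), so that on the compact set $\Lambda_U$, which by construction avoids them, $F$ is a local diffeomorphism with uniformly bounded distortion; this is precisely the content of Lemma~\ref{cone_projection}. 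With those two caveats your sketch is correct.
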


Now, for $r\in (1 - r_0, 1 + r_0)$, let
\begin{align*}
\Gamma_r = \set{\lambda: \mathcal{O}^+_f(\gamma_r(\lambda))\text{ is bounded}}.
\end{align*}
As we have already seen, $\Gamma_r$ is a Cantor set; in particular it contains limit points. On the other hand, $\gamma_r$ is an analytic curve that intersects $S_V$ for every $V\in [0, \infty)$. Combining this with  \eqref{local_dim}, Propositions \ref{cantat_prop} and \ref{damanik_gorodetski_prop}, we get that the local Hausdorff dimension is non-constant over the spectrum.

Also, \eqref{local_dim} in combination with Proposition \ref{cantat_prop} shows that the local Hausdorff dimension at every point of $B_\infty$ is strictly between zero and one, hence so is the global Hausdorff dimension. It follows that the Lebesgue measure of $B_\infty$ is zero.

\subsection{Proof of Theorem \ref{thm:main}-iv}\label{sec:last}

Observe that the line $\gamma_{(J_0,J_1)}$ is continuous in the parameters $(J_0,J_1)$. From constructions carried out in the previous section, it is evident that $\hdim(B_\infty(J_0,J_1))$ is also continuous in the parameters $(J_0,J_1)$. Also, by Proposition \ref{damanik_gorodetski_prop}, continuity extends to the pure case $J_0 = J_1$.

\section*{Acknowledgement}
I wish to express gratitude to my dissertation advisor, Anton Gorodetski, who originally introduced me to this problem and whose guidance and support have been essential for completion of this project.

I also wish to thank Michael Baake, Jean Bellissard, Alexander Chernyshev, David Damanik, Uwe Grimm, Svetlana Jitomirskaya, Ron Lifshitz and Laurent Raymond for meaningful and illuminating discussions.

A special thanks to Michael Baake, David Damanik and Anton Gorodetski for taking the time to read and comment on a draft.

Finally, I wish to thank the anonymous referees for their very helpful suggestions and remarks.

\appendixpage

\appendix

None of what is presented in the following appendices is new. In particular, all of appendix \ref{b} is by now part of the classical theory of hyperbolic and partially hyperbolic dynamical systems (and references are given to comprehensive surveys). 

\section{Normal hyperbolicity of six-cycles through singularities of \texorpdfstring{$S_0$}{} and symmetries of \texorpdfstring{$f$}{}}\label{a}

\subsection{Normal hyperbolicity of six-cycles through singularities of \texorpdfstring{$S_0$}{}}\label{a1}

As has been mentioned above, $S_0$ contains four conic singularities; explicitly, they are 
\begin{align}\label{eq:singularities} 
P_1 = (1,1,1),\hspace{2mm} P_2 = (-1,-1,1),\hspace{2mm} P_3 = (1,-1,-1),\hspace{2mm} P_4 = (-1,1,-1). 
\end{align} 
The point $P_1$ is fixed under $f$, while $P_2$, $P_3$ and $P_4$ form a three cycle: 
\begin{align*} 
P_1\overset{f}{\longmapsto}P_1;\hspace{4mm} P_2\overset{f}{\longmapsto}P_3\overset{f}{\longmapsto}P_4\overset{f}{\longmapsto}P_2 
\end{align*} 
(which can be verified via direct computation).

For each $i\in\set{1,\dots,4}$, there is a smooth curve $\rho_i$ which does not contain any self-intersections, passing through the singularity $P_i$, such that $\rho_i\setminus{P_i}$ is a disjoint union of two smooth curves---call them $\rho_i^l$ and $\rho_i^r$---with the following properties: 
\begin{itemize} 

\item $\rho_i^{l,r}\subset \bigcup_{V>0}S_V^\R$; 

\item $f(\rho_1^l) = \rho_1^r$ and $f(\rho_1^r) = f(\rho_1^l)$. In particular, points of $\rho_1^{l,r}$ are periodic of period two, and $\rho_1$ is fixed under $f$; 

\item The six curves $\rho_i^{l,r}$, $i = 2, 3, 4$, form a six cycle under $f$. In particular, points of $\rho_i^{l,r}$, $i = 2, 3, 4$, are periodic of period six, and hence for $i = 2, 3, 4$, $\rho_i$ is fixed under $f^6$. 

\end{itemize}
The curve $\rho_1$ is given explicitly by
\begin{align}\label{eq:per-2-curve} 
\rho_1 = \set{\left(x,\hspace{1mm}\frac{x}{2x-1},\hspace{1mm}x\right): x\in\left(-\infty,1/2\right)\cup\left(1/2,\infty\right)}. 
\end{align} 
Expressions for the other three curves can be obtained from \eqref{eq:per-2-curve} using symmetries of $f$ to be discussed below.  

It follows via a simple computation that for any $i \in \set{1,\dots, 4}$ and any point $p\in \rho_i^{l,r}$, the eigenvalue spectrum of $Df^6_p$ is $\set{1, \lambda(p), 1/\lambda(p)}$ with $0 < \abs{\lambda(p)} < 1$, where $Df^6_p$ denotes the differential of $f^6$ at the point $p$. The eigenspace corresponding to the eigenvalue $1$ is tangent to $\rho_i$ at $p$. At $P_i$, the eigenvalue spectrum of $Df^6_{P_i}$ is of the same form, and as above, the eigenspace corresponding to the unit eigenvalue is tangent to $\rho_i$ at $P_i$. It follows that the curves $\set{\rho_i}_{i\in\set{1,\dots,4}}$ are normally hyperbolic one-dimensional submanifolds of $\R^3$, as defined in Section \ref{b3}.

For $V > 0$, $S_V^\R\bigcap (\rho_1\cup\cdots\cup\rho_4)$ consists of eight hyperbolic periodic points for $f$ which are fixed hyperbolic points for $f^6$. The stable manifolds to these points form a dense sublamination of the stable lamination
\begin{align*}
W^s(\Omega_V) = \bigcup_{x\in\Omega_V}W^s(x),
\end{align*}
the union of global stable manifolds to points in $\Omega_V$, the nonwandering set for $f$ on $S_V$ from Theorem \ref{part2_thm1}. Moreover, this sublamination forms the boundary of the stable lamination. For details, see \cite{Casdagli1986, Damanik2009, Cantat2009}. This extends to the center-stable lamination: the stable manifolds to the normally hyperbolic curves $\rho_1,\dots,\rho_4$ form a dense sublamination of the (two-dimensional) center-stable lamination and forms the boundary of this lamination. In particular, if $\gamma$ is a smooth curve intersecting a center-stable manifold, and this intersection is not isolated, then in an arbitrarily small neighborhood of this intersection, $\gamma$ intersects the stable manifolds of all eight curves: $\set{\rho_i^{l,r}}_{i\in\set{1,\dots,4}}$.

\subsection{Symmetries of \texorpdfstring{$f$}{}}\label{a2}

The following discussion is taken from \cite{Damanik0000my1}; however, what follows does not appear in \cite{Damanik0000my1} as new results but a recollection of what is known. In particular, the reader should consult \cite{Baake1997, Roberts1994} and references therein, as well as earlier (and original) works \cite{Kohmoto1983, Kohmoto1992, Kadanoff0000, Kadanoff1984}.

Let us denote the group of symmetries of $f^6$ by $\mathcal{G}_\mathrm{sym}$, and the group of reversing symmetries of $f^6$ by $\mathcal{G}_\mathrm{rev}$; that is, 
\begin{align}\label{eq:sym-group} 
\mathcal{G}_\mathrm{sym} = \set{s\in\mathrm{Diff}(\R^3): s\circ f^6\circ s^{-1} = f^6}, 
\end{align} 
and 
\begin{align}\label{eq:rev-group} 
\mathcal{G}_\mathrm{rev} = \set{s\in\mathrm{Diff}(\R^3): s\circ f^6\circ s^{-1} = f^{-6}}, 
\end{align} 
where $\mathrm{Diff}(\R^3)$ denotes the set of diffeomorphisms on $\R^3$. 

Observe that $\mathcal{G}_\mathrm{rev}\neq\emptyset$. Indeed,  
\begin{align}\label{eq:rev-sym} 
s(x,y,z) = (z,y,x) 
\end{align} 
is a reversing symmetry of $f$, and hence also of $f^6$. Hence $f^6$ is smoothly conjugate to $f^{-6}$. It follows (see Appendix \ref{a1}) that forward-time dynamical properties of $f^6$, as well as the geometry of dynamical invariants (such as stable manifolds) are mapped smoothly and rigidly to those of $f^{-6}$. That is, forward-time dynamics of $f^6$ is essentially the same as its backward-time dynamics. 

The group $\mathcal{G}_\mathrm{sym}$ is also nonempty, and more importantly, it contains the following diffeomorphisms: 
\begin{align}\label{eq:symmetries} 
s_2: (x,y,z)\mapsto (-x, -y, z),\notag\\ 
s_3: (x,y,z)\mapsto(x,-y,-z),\\ 
s_4: (x,y,z)\mapsto(-x,y,-z).\notag 
\end{align} 
Notice that $s_i$ are rigid transformations. Also notice that 
\begin{align*}
s_i(P_1) = P_i,
\end{align*} 
and since $s_i$ is in fact a smooth conjugacy, we must have
\begin{align}\label{eq:symmetries-on-rho}
s_i(\rho_1) = \rho_i.
\end{align}
For a more general and extensive discussion of symmetries and reversing symmetries of trace maps, see \cite{Baake1997}.

\section{Background on uniform, partial and normal hyperbolicity}\label{b}

\subsection{Properties of locally maximal hyperbolic sets}\label{b1}

A more detailed discussion can be found in \cite{Hirsch1968, Hirsch1970, Hirsch1977, Hasselblatt2002b, Hasselblatt2002}.

A closed invariant set $\Lambda\subset M$ of a diffeomorphism $f: M\rightarrow M$ of a smooth manifold $M$ is called \textit{hyperbolic} if for each $x\in\Lambda$, there exists the splitting $T_x\Lambda = E_x^s\oplus E_x^u$ invariant under the differential $Df$, and $Df$ exponentially contracts vectors in $E_x^s$ and exponentially expands vectors in $E_x^u$. The set $\Lambda$ is called \textit{locally maximal} if there exists a neighborhood $U$ of $\Lambda$ such that
\begin{align}\label{part2_eq1}
\Lambda = \bigcap_{n\in\Z}f^n(U).
\end{align}
The set $\Lambda$ is called \textit{transitive} if it contains a dense orbit. It isn't hard to prove that the splitting $E_x^s\oplus E_x^u$ depends continuously on $x\in\Lambda$, hence $\dim(E_x^{s,u})$ is locally constant. If $\Lambda$ is transitive, then $\dim(E_x^{s,u})$ is constant on $\Lambda$. We call the splitting $E_x^s\oplus E_x^u$ a $(k_x^s, k_x^u)$ splitting if $\dim(E_x^{s,u}) = k^{s,u}$, respectively. In case $\Lambda$ is transitive, we shall simply write $(k^s, k^u)$. 
\begin{defn}\label{basic_set}
We call $\Lambda\subset M$ a \textit{basic set} for $f\in\diff^r(M)$, $r\geq 1$, if $\Lambda$ is a locally maximal invariant transitive hyperbolic set for $f$.
\end{defn}
Suppose $\Lambda$ is a basic set for $f$ with $(1,1)$ splitting. Then the following holds.

\subsubsection{Stability}\label{b1-1}

Let $U$ be as in \eqref{part2_eq1}. Then there exists $\mathcal{U}\subset \diff^1(M)$ open, containing $f$, such that for all $g\in\mathcal{U}$,
\begin{align}\label{part2_eq2}
\Lambda_g = \bigcap_{n\in\Z}g^n(U)
\end{align}
is $g$-invariant transitive hyperbolic set; moreover, there exists a (unique) homeomorphism $H_g:\Lambda\rightarrow\Lambda_g$ such that
\begin{align}\label{part2_eq3}
H_g\circ f|_{\Lambda} = g|_{\Lambda_g}\circ H_g;
\end{align}
that is, the following diagram commutes.
\begin{center}
\begin{tikzpicture}
 \matrix (m) [matrix of math nodes,
	      row sep = 3em,
	      column sep = 6em,
	      minimum width = 2em,text height=1.5ex, text depth=0.25ex]
 {
  \Lambda	&	\Lambda\\
  \Lambda_g	&	\Lambda_g\\
 };
 \path[-stealth]
  (m-1-1) edge node [above] {$f$} (m-1-2)
  (m-1-2) edge node [right] {$H_g$} (m-2-2)
  (m-1-1) edge node [left]  {$H_g$} (m-2-1)
  (m-2-1) edge node [below] {$g$} (m-2-2);
\end{tikzpicture}
\end{center}

Also $H_g$ can be taken arbitrarily close to the identity by taking $\mathcal{U}$ sufficiently small. In this case $g$ is said to be \textit{conjugate to} $f$, and $H_g$ is said to be \textit{the conjugacy}.

\subsubsection{Stable and unstable invariant manifolds}\label{b1-2}

Let $\epsilon > 0$ be small. For each $x\in\Lambda$ define the \textit{local stable} and \textit{local unstable} manifolds at $x$:
\begin{align*}
W_\epsilon^s(x) = \set{y\in M: d(f^n(x),f^n(y))\leq \epsilon \text{ for all }n\geq 0},
\end{align*}
\begin{align*}
W_\epsilon^u(x) = \set{y\in M: d(f^n(x),f^n(y))\leq \epsilon \text{ for all }n\leq 0}.
\end{align*}
We sometimes do not specify $\epsilon$ and write
\begin{align*}
W_\loc^s(x)\text{\hspace{5mm}and\hspace{5mm}}W_\loc^u(x)
\end{align*}
for $W_\epsilon^s(x)$ and $W_\epsilon^u(x)$, respectively, for (unspecified) small enough $\epsilon > 0$. For all $x\in\Lambda$, $W_\loc^{s,u}(x)$ is an embedded $C^r$ disc with $T_xW_\loc^{s,u}(x) = E_x^{s,u}$. The \textit{global stable} and \textit{global unstable} manifolds
\begin{align}\label{part2_eq4}
W^s(x) = \bigcup_{n\in\N}f^{-n}(W_\loc^s(x))\text{\hspace{5mm}and\hspace{5mm}}W^u(x) = \bigcup_{n\in\N}f^{n}(W_\loc^u(x))
\end{align}
are injectively immersed $C^r$ submanifolds of $M$. Define also the stable and unstable sets of $\Lambda$:
\begin{align}\label{part2_eq5}
W^s(\Lambda) = \bigcup_{x\in\Lambda}W^s(x)\text{\hspace{5mm}and\hspace{5mm}}W^u(\Lambda) = \bigcup_{x\in\Lambda}W^u(x).
\end{align}

If $\Lambda$ is compact, there exists $\epsilon > 0$ such that for any $x,y\in\Lambda$, $W_\epsilon^s(x)\cap W_\epsilon^u(y)$ consists of at most one point, and there exists $\delta > 0$ such that whenever $d(x,y) < \delta$, $x,y\in\Lambda$, then $W_\epsilon^s(x)\cap W_\epsilon^u(y)\neq \emptyset$. If in addition $\Lambda$ is locally maximal, then $W_\epsilon^s(x)\cap W_\epsilon^u(y)\in\Lambda$. 

The stable and unstable manifolds $W_\loc^{s,u}(x)$ depend continuously on $x$ in the sense that there exists $\Phi^{s,u}:\Lambda\rightarrow\emb^r(\R, M)$ continuous, with $\Phi^{s,u}(x)$ a neighborhood of $x$ along $W_\loc^{s,u}(x)$, where $\emb^r(\R, M)$ is the set of $C^r$ embeddings of $\R$ into $M$ \cite[Theorem 3.2]{Hirsch1968}.

The manifolds also depend continuously on the diffeomorphism in the following sense. For all $g\in\diff^r(M)$ $C^r$ close to $f$, define
$\Phi_g^{s,u}:\Lambda_g\rightarrow\emb^r(\R,M)$ as we defined $\Phi^{s,u}$ above. Then define
\begin{align*}
\tilde{\Phi}_g^{s,u}:\Lambda\rightarrow\emb^r(\R, M)
\end{align*}
by
\begin{align*}
\tilde{\Phi}_g^{s,u} = \Phi_g^{s,u}\circ H_g.
\end{align*}
Then $\tilde{\Phi}^{s,u}_g$ depends continuously on $g$ \cite[Theorem 7.4]{Hirsch1968}.

\subsubsection{Fundamental domains}\label{b1-3}

Along every stable and unstable manifold, one can construct the so-called \textit{fundamental domains} as follows. Let $W^s(x)$ be the stable manifold at $x$. Let $y\in W^s(x)$. We call the arc $\gamma$ along $W^s(x)$ with endpoints $y$ and $f^{-1}(y)$ a \textit{fundamental domain}. The following holds.
\begin{itemize}
\item $f(\gamma)\cap W^s(x) = y$ and $f^{-1}(\gamma)\cap W^s(x) = f^{-1}(y)$, and for any $k\in\Z$, if $k < -1$, then $f^k(\gamma)\cap W^s(x) = \emptyset$; if $k > 1$ then $f^k(\gamma)\cap W^s(x) = \emptyset$ iff $x\neq y$;
\item For any $z\in W^s(x)$, if for some $k\in\N$, $f^k(z)$ lies on the arc along $W^s(x)$ that connects $x$ and $y$, then there exists $n\in\N$, $n\leq k$, such that $f^n(z)\in\gamma$.
\end{itemize}
Similar results hold for the unstable manifolds.

\subsubsection{Invariant foliations}\label{a_1_3}

A stable foliation for $\Lambda$ is a foliation $\mathcal{F}^s$ of a neighborhood of $\Lambda$ such that
\begin{enumerate}
\item for each $x\in\Lambda$, $\mathcal{F}(x)$, the leaf containing $x$, is tangent to $E_x^s$;
\item for each $x$ sufficiently close to $\Lambda$, $f(\mathcal{F}^s(x))\subset\mathcal{F}^s(f(x))$.
\end{enumerate}
An unstable foliation $\mathcal{F}^u$ is defined similarly.

For a locally maximal hyperbolic set $\Lambda\subset M$ for $f\in\diff^1(M)$, $\dim(M) = 2$, stable and unstable $C^0$ foliations with $C^1$ leaves can be constructed; in case $f\in\diff^2(M)$, $C^1$ invariant foliations exist (see \cite[Section A.1]{Palis1993} and the references therein).

\subsubsection{Local Hausdorff and box-counting dimensions}\label{b1-4}

For $x\in\Lambda$ and $\epsilon > 0$, consider the set $W_\epsilon^{s,u}\cap\Lambda$. Its Hausdorff dimension is independent of $x\in\Lambda$ and $\epsilon > 0$.

Let
\begin{align}\label{part2_eq6}
h^{s,u}(\Lambda) = \dim_H(W_\epsilon^{s,u}(x)\cap\Lambda).
\end{align}
For properly chosen $\epsilon > 0$, the sets $W_\epsilon^{s,u}(x)\cap\Lambda$ are dynamically defined Cantor sets, so
\begin{align*}
h^{s,u}(\Lambda) < 1
\end{align*}
(see \cite[Chapter 4]{Palis1993}). Moreover, $h^{s,u}$ depends continuously on the diffeomorphism in the $C^1$-topology \cite{Manning1983}. In fact, when $\dim(M) = 2$, these are $C^{r-1}$ functions of $f\in\diff^r(M)$, for $r\geq 2$ \cite{Mane1990}.

Denote the box-counting dimension of a set $\Gamma$ by $\dim_{\mathrm{Box}}(\Gamma)$. Then
\begin{align*}
\dim_H(W^{s,u}_\epsilon(x)\cap \Lambda) = \dim_{\mathrm{Box}}(W^{s,u}_\epsilon(x)\cap \Lambda)
\end{align*}
(see \cite{Manning1983, Takens1988}).

\subsection{Partial hyperbolicity}\label{b2}

For a more detailed discussion, see \cite{Pesin2004, Hasselblatt2006}.

An invariant set $\Lambda\subset M$ of a diffeomorphism $f\in\diff^r(M)$, $r\geq 1$, is called \textit{partially hyperbolic (in the narrow sense)} if for each $x\in\Lambda$ there exists a splitting $T_xM = E_x^s\oplus E_x^c\oplus E_x^u$ invariant under $Df$, and $Df$ exponentially contracts vectors in $E_x^s$, exponentially expands vectors in $E_x^u$, and $Df$ may contract or expand vectors from $E_x^c$, but not as fast as in $E_x^{s,u}$. We call the splitting $(k_x^s, k_x^c, k_x^u)$ splitting if $\dim(E_x^{s,c,u}) = k_x^{s,c,u}$, respectively. We shall write $(k^s,k^c,k^u)$ if the dimension of subspaces does not depend on the point.

\subsection{Normal hyperbolicity}\label{b3}

For a more detailed discussion and proofs see \cite{Hirsch1977} and also \cite{Pesin2004}.

Let $M$ be a smooth Riemannian manifold, compact, connected and without boundary. Let $f\in\diff^r(M)$, $r\geq 1$. Let $N$ be a compact smooth submanifold of $M$, invariant under $f$. We call $f$ \textit{normally hyperbolic} on $N$ if $f$ is partially hyperbolic on $N$. That is, for each $x\in N$,
\begin{align*}
T_xM = E_x^s\oplus E_x^c\oplus E_x^u
\end{align*}
with $E_x^c = T_xN$. Here $E_x^{s,c,u}$ is as in Section \ref{b2}. Hence for each $x\in N$ one can construct local stable and unstable manifolds $W_\epsilon^s(x)$ and $W_\epsilon^u(x)$, respectively, such that
\begin{enumerate}
\item $x\in W_\loc^s(x)\cap W_\loc^u(x)$;
\item $T_x W_\loc^s(x) = E^s(x)$, $T_xW_\loc^u(x) = E^u(x)$;
\item for $n\geq 0$,
\begin{align*}
d(f^n(x),f^n(y))\xrightarrow[n\rightarrow\infty]{}0\text{ for all }y\in W_\loc^s(x),
\end{align*}
\begin{align*}
d(f^{-n}(x),f^{-n}(y))\xrightarrow[n\rightarrow\infty]{}0\text{ for all }y\in W_\loc^u(x).
\end{align*}
\end{enumerate}
(For the proof see \cite[Theorem 4.3]{Pesin2004}).
These can then be extended globally by
\begin{align}\label{eq:ss-manifold}
W^s(x)& = \bigcup_{n\in\N}f^{-n}(W^s_\loc(x));\\
W^u(x)& = \bigcup_{n\in\N}f^n(W^u_\loc(x)).
\end{align}

The manifold $W^s(x)$ is referred to as the \textit{strong-stable manifold}, while $W^u(x)$ is called the \textit{strong-unstable} manifold; sometimes to emphasize the point $x$, we add \textit{ at $x$}.

Set
\begin{align}\label{eq:cscu-def}
W_\loc^{cs}(N) = \bigcup_{x\in N}W_\loc^s(x)\text{\hspace{5mm}and\hspace{5mm}} W_\loc^{cu}(N) = \bigcup_{x\in N}W_\loc^u(x).
\end{align}
\begin{thm}[Hirsch, Pugh and Shub \cite{Hirsch1977}]\label{a1_thm1}
The sets $W_\loc^{cs}(N)$ and $W_\loc^{cu}(N)$, restricted to a neighborhood of $N$, are smooth submanifolds of $M$. Moreover,
\begin{enumerate}
\item $W^\mathrm{cs}_\loc(N)$ is $f$-invariant and $W^\mathrm{cu}_\loc$ is $f^{-1}$-invariant;
\item $N = W^\mathrm{cs}_\loc(N)\bigcap W^\mathrm{cu}_\loc(N)$;
\item For every $x\in N$, $T_x W^\mathrm{cs, cu}_\loc(N) = E_x^{s,u}\oplus T_xN$;
\item $W^\mathrm{cs}_\loc(N)$ ($W^\mathrm{cu}_\loc(N)$) is the only $f$-invariant ($f^{-1}$-invariant) set in a neighborhood of $N$;
\item $W^\mathrm{cs}_\loc(N)$ (respectively, $W^\mathrm{cu}_\loc(N)$) consists precisely of those points $y\in M$ such that for all $n\geq 0$ (respectively, $n\leq 0$), $d(f^n(x),f^n(y)) < \epsilon$ for some $\epsilon > 0$.
\item $W^\mathrm{cs,cu}_\loc(N)$ is foliated by $\set{W_\loc^\mathrm{s,u}(x)}_{x\in N}$.
\end{enumerate}
\end{thm}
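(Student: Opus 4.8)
The statement is the Hirsch--Pugh--Shub invariant manifold theorem for a normally hyperbolic invariant submanifold, so the plan is to prove it by the Hadamard graph transform (Perron method), as in \cite{Hirsch1977}. First I would pass to a local model. Using a tubular neighbourhood of $N$ adapted to the $Df$-invariant splitting $T_xM=E^\mathrm{s}_x\oplus E^\mathrm{c}_x\oplus E^\mathrm{u}_x$ (with $E^\mathrm{c}_x=T_xN$), identify a neighbourhood of $N$ in $M$ with a neighbourhood of the zero section of the bundle $E^\mathrm{s}\oplus E^\mathrm{u}\to N$; after multiplying the nonlinear part by a bump function supported near $N$, I may assume $f$ is globally defined on this bundle, coincides with the true map near $N$, and is $C^0$-close to the block bundle map $L$ acting as $Df|_{E^\mathrm{s}}$, $D(f|_N)$, $Df|_{E^\mathrm{u}}$ on the three summands. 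Normal hyperbolicity provides the spectral gap used throughout: the contraction rate on $E^\mathrm{s}$ is strictly smaller than the rates of $Df$ along $E^\mathrm{c}$, which are in turn strictly smaller than the expansion rate on $E^\mathrm{u}$; for the $C^r$ conclusions (here $C^\infty$, since the centre rates are $1$) one invokes the stronger $r$-normal hyperbolicity, i.e. $r$ copies of each gap inequality.

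To produce $W^\mathrm{cs}_\loc(N)$ I would realize it as the graph, over a small disc subbundle $D\subset E^\mathrm{s}\oplus E^\mathrm{c}$, of a section $\sigma\colon D\to E^\mathrm{u}$ vanishing on $N$ with Lipschitz constant at most $1$. On the complete metric space of such sections (sup metric) define the graph transform $\Gamma$ by requiring the graph of $\Gamma\sigma$ to be the component near $N$ of the image of $\mathrm{graph}(\sigma)$ inside the tube. The cone condition (slope at most $1$) is preserved because the $E^\mathrm{u}$-behaviour of $f$ strictly dominates the centre rates, so $\Gamma$ is well defined; the same domination, applied to $f^{-1}$, makes $\Gamma$ a contraction in the sup metric, and its unique fixed point $\sigma^\mathrm{cs}$ gives $W^\mathrm{cs}_\loc(N)=\mathrm{graph}(\sigma^\mathrm{cs})$. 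The manifold $W^\mathrm{cu}_\loc(N)$ is obtained symmetrically, exchanging $E^\mathrm{s}\leftrightarrow E^\mathrm{u}$ and $f\leftrightarrow f^{-1}$. Now: property (1) (forward $f$-invariance of $W^\mathrm{cs}_\loc$, backward invariance of $W^\mathrm{cu}_\loc$) and property (5) (the characterization as those $y$ whose forward, resp.\ backward, $f$-orbit stays in the tube) are read off from $\Gamma^n(\mathrm{graph}\,\sigma)\to W^\mathrm{cs}_\loc(N)$ together with the fact that any $y$ staying forward in the tube lies on every $\Gamma^n(\mathrm{graph}\,\sigma)$; property (4) is the uniqueness half of the fixed-point statement; property (2), $N=W^\mathrm{cs}_\loc(N)\cap W^\mathrm{cu}_\loc(N)$, follows because a point whose full orbit stays in the tube has vanishing $E^\mathrm{s}\oplus E^\mathrm{u}$ component by normal hyperbolicity, hence lies in $N$. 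For property (6) I would run a second, fibred graph transform inside $W^\mathrm{cs}_\loc(N)$: over each $x\in N$, a disc in $E^\mathrm{s}_x$ with values in $E^\mathrm{c}$, base point transported by $f|_N$, whose fixed section is the strong-stable disc $W^\mathrm{s}_\loc(x)$; these discs foliate $W^\mathrm{cs}_\loc(N)$, which is exactly \eqref{eq:cscu-def}, while the tangencies $T_xW^\mathrm{s}_\loc(x)=E^\mathrm{s}_x$ and $T_xW^\mathrm{cs}_\loc(N)=E^\mathrm{s}_x\oplus E^\mathrm{c}_x$ in property (3) come out of the cone estimates and the invariant-section theorem applied to the derivative cocycle.

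The main obstacle is smoothness: the bare contraction only yields a Lipschitz (and, with one more argument, $C^1$) section, and upgrading to $C^r$ requires the fibre contraction theorem of Hirsch and Pugh. One lifts $\Gamma$ to the bundle of $1$-jets of sections over $D$, observes that the induced fibre maps are contractions precisely by the $1$-normal hyperbolicity gap, gets a continuous invariant $1$-jet section which must be the derivative of $\sigma^\mathrm{cs}$, and then iterates this up through $r$-jets using the $r$ copies of the gap inequality; the same bootstrapping gives $C^r$ leaves $W^\mathrm{s}_\loc(x)$ depending continuously on $x$. I would also note that, without additional bunching, the foliation in (6) is in general only $C^0$ with $C^r$ leaves -- which is precisely what is asserted and what is used later in the paper. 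Since all of this is classical, in the paper itself it suffices to cite \cite{Hirsch1977} (see also \cite{Hirsch1968, Pesin2004}) and leave the details there.
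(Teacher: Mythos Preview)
Your proposal is a correct and well-organized sketch of the classical Hirsch--Pugh--Shub graph transform argument, and it matches the approach of the cited reference \cite{Hirsch1977}. The paper itself, however, does not supply a proof: the theorem is stated in the appendix purely as background material, attributed to Hirsch, Pugh and Shub, and invoked without argument. Your final remark already anticipates this --- in the paper it indeed suffices to cite \cite{Hirsch1977} (and \cite{Pesin2004}) --- so your sketch goes well beyond what the paper does, but is entirely consistent with the source it points to.
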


\section{Background results}\label{c}

We prove here some background results that follow from rather general principles in dynamical systems.

\subsection{Proof of Proposition \ref{lip_cont}: sketch of main ideas}
To prove Proposition \ref{lip_cont}, one proceeds by a (rather standard) technique that was first introduced in \cite{Anosov1967}. Let us sketch the proof below.

First suppose $\tau_\alpha = \tau_0 = \tau$ for all $\alpha$. Let $x_0\in \tau\cap W^s(\Lambda)$ and $\gamma$ an open arc along $\tau$ containing $x_0$ such that $h^\loc$ and its inverse are defined along $\gamma$. Let $U$ be an open neighborhood of $\Lambda$ such that $\Lambda$ is maximal in $U$, and $U$ can be foliated into stable and unstable foliations. There exists $k_0\in \N$ such that for all $\alpha$, $f_\alpha^{k_0}(x_0)\in U$. Assuming $\gamma$ is sufficiently short, we also have $f_\alpha^{k_0}(\gamma)\subset U$.

To simplify notation, let us write $f$ for $f_0$. Let $\gamma^s = f^{k_0}(\gamma)$ and let $\gamma^u$ be the unstable manifold such that $f^{k_0}\circ h(x_0)\in \gamma^u$. Let $\tilde{h}: \gamma^s\rightarrow\gamma^u$ be the induced holonomy map:
\begin{align*}
\tilde{h}(x) = f^{k_0}\circ h^\loc\circ f^{-k_0}.
\end{align*}
By the $C^1$ stable foliation, $\tilde{h}$ may be considered as the restriction of a $C^1$ map $F: \gamma^s\rightarrow\gamma^u$ to the set $f^{k_0}(\gamma\cap W^s(\Lambda))$. Then for all $x,y\in\gamma^s$ sufficiently close, there exists $k = k(x,y)\in\N$ such that the arc along $f^k(\gamma^{s})$ (respectively, $f^k(\gamma^u)$) connecting the points $f^k(x), f^k(y)$ (respectively, $f^k(F(x)), f^k(F(y))$) belongs to $U$, and 
\begin{align}\label{prelim_bound}
\left[\frac{\dist_{f^{k}(\gamma^s)}\left(f^k(x),f^k(y)\right)}{\dist_{f^{k}(\gamma^u)}\left(f^k(F(x)), f^k(F(y))\right)}\right]^{\pm 1}\leq 2,
\end{align}
where $\dist_{\beta}(a,b)$ denotes the distance between points $a$ and $b$ along the curve $\beta$ (the number $2$ is not significant; anything larger than $1$ will work). Hence it is enough to provide an estimate, independent of $k$, for
\begin{align}\label{prelim_bound2}
\left[\frac{\dist_{\gamma^s}(x,y)}{\dist_{f^{k}(\gamma^s)}\left(f^{k}(x), f^{k}(y)\right)}\right]^{\pm1}\left[\frac{\dist_{f^{k}(\gamma^u)}\left(f^k(F(x)), f^k(F(y))\right)}{\dist_{\gamma^u}\left(F(x), F(y)\right)}\right]^{\pm 1}.
\end{align}
In order to estimate \eqref{prelim_bound2}, it is enough to estimate
\begin{align*}
\frac{\norm{Df^k|_{\gamma^s(x,y)}}}{\norm{Df^k|_{\gamma^u(F(x),F(y))}}},
\end{align*}
where $\gamma^{s,u}(a,b)$ is the arc along $\gamma^{s,u}$ with endpoints $a, b$. After taking $\log$, one estimates the latter by estimating
\begin{align*}
\sum_{j=0}^k\abs{\norm{Df^j|_{\gamma^s(x,y)}} - \norm{Df^j|_{\gamma^u(F(x),F(y))}}}.
\end{align*}
The sum above is majorized by a geometric series, and hence admits an upper bound $L$ for all $k$. One shows that the bound in \eqref{prelim_bound} and the bound $L$, for $L$ sufficiently large, hold for all $f_\alpha$, $\alpha\in(0,\beta)$, with $\beta$ sufficiently small (this follows from continuous dependence of $f_\alpha$ and $Df_\alpha$ on $\alpha$).

Finally, small $C^1$ perturbations of $\tau$ do not destroy these bounds.


\bibliographystyle{plain}
\bibliography{bibliography}

\end{document}